\newtheorem{theorem}{Theorem}[section]
\newtheorem{lemma}[theorem]{Lemma}
\newtheorem{claim}[theorem]{Claim}
\newtheorem{definition}[theorem]{Definition}
\newtheorem{prop}[theorem]{Proposition}
\newtheorem{corol}[theorem]{Corollary}
\mathchardef\lee="0214
\newcommand{\wlambda}{\widehat{\lambda}}
\newcommand{\eps}{\varepsilon}
\newcommand{\etal}{et al.}
\newcommand{\eqdef}{:=}
\newcommand{\EX}{\hbox{\bf E}}
\newcommand{\R}{\mathbb{R}}
\newcommand{\OPTMAX}{\text{OPT-MAX}}
\newcommand{\OPTCH}{\text{OPT-CH}}
\DeclareMathOperator{\UH}{conv}
\DeclareMathOperator{\seg}{seg}
\DeclareMathOperator{\vis}{vis}
\DeclareMathOperator{\poly}{poly}
\DeclareMathOperator{\pen}{pen}
\DeclareMathOperator{\reg}{reg}
\DeclareMathOperator{\lev}{lev}
\DeclareMathOperator{\uss}{uss}
\DeclareMathOperator{\lss}{lss}
\newcommand{\findmax}{\texttt{find-max}}
\newcommand{\ins}{\texttt{insert}}
\newcommand{\delete}{\texttt{delete}}
\newcommand{\deckey}{\texttt{decrease-key}}
\newcommand{\bS}{\textbf{S}}
\newcommand{\bT}{\textbf{T}}
\newcommand{\bU}{\textbf{U}}
\newcommand{\cC}{\mathcal{C}}
\newcommand{\cD}{\mathcal{D}}
\newcommand{\cR}{\mathcal{R}}
\newcommand{\cE}{\mathcal{E}}
\newcommand{\cF}{\mathcal{F}}
\newcommand{\cT}{\mathcal{T}}
\title{
  Self-improving Algorithms for Coordinate-Wise Maxima and Convex 
  Hulls\footnote{Preliminary versions appeared as
  K. L. Clarkson, W. Mulzer, and C. Seshadhri,
  \emph{Self-improving Algorithms for Convex Hulls} in Proc.~21st SODA, 
  pp.~1546--1565, 2010;
  and K. L. Clarkson, W. Mulzer and C. Seshadhri, 
  \emph{Self-improving Algorithms for
  Coordinate-wise Maxima} in Proc.~28th SoCG, pp.~277--286, 2012.
  }
}
\author{
  Kenneth L. Clarkson\footnote{IBM Almaden Research Center, San Jose, USA.
    \texttt{klclarks@us.ibm.com}
  }
  \and
  Wolfgang Mulzer\footnote{Institut f\"ur Informatik, Freie Universit\"at 
    Berlin, Berlin, Germany.
    \texttt{mulzer@inf.fu-berlin.de} 
  }
  \and
  C. Seshadhri\footnote{Sandia National Laboratories, Livermore, USA.
    \texttt{scomand@sandia.gov}
}
}
\begin{document} 

\maketitle

\begin{abstract}
  Finding the coordinate-wise maxima and the convex 
  hull of a planar point set are probably the most 
  classic problems in computational geometry.  We 
  consider these problems in the \emph{self-improving setting}. 
  Here, we have $n$ distributions $\cD_1, \ldots, \cD_n$ 
  of planar points. An input point set $(p_1, \ldots, p_n)$ 
  is generated by taking an independent sample $p_i$ 
  from each $\cD_i$, so the input is distributed 
  according to the product $\cD = \prod_i \cD_i$. 
  A \emph{self-improving algorithm} repeatedly gets inputs 
  from the distribution $\cD$ (which is \emph{a priori} 
  unknown), and it tries to optimize its running time 
  for $\cD$. The algorithm uses the first few inputs to 
  learn salient features of the distribution $\cD$, before 
  it becomes fine-tuned to $\cD$. Let $\OPTMAX_\cD$ 
  (resp.~$\OPTCH_\cD$) be the expected depth 
  of an \emph{optimal} linear comparison tree computing 
  the maxima (resp.~convex hull) for $\cD$. 
  Our maxima algorithm eventually achieves expected 
  running time $O(\OPTMAX_\cD + n)$. 
  Furthermore, we give a self-improving algorithm for convex hulls
  with expected running time $O(\OPTCH_\cD + n\log\log n)$.

  Our results require new tools for understanding linear 
  comparison trees. In particular, we convert 
  a general linear comparison tree to a restricted 
  version that can then be related to the running time 
  of our algorithms. Another interesting feature 
  is an interleaved search procedure 
  to determine the likeliest point to be extremal 
  with minimal computation. This allows our algorithms
  to be competitive with the optimal algorithm for 
  $\cD$.
\end{abstract}

\section{Introduction}
\label{sec:intro}

The problems of planar maxima and planar convex hull 
computation are classic computational geometry questions 
that have been studied since at least 1975~\cite{KungLuPr75}. 
There are well-known $O(n \log n)$ time comparison-based 
algorithms ($n$ is the number of input points), with 
matching lower bounds. Since then,
many more advanced settings have been addressed: one can 
get expected running time $O(n)$ for points uniformly 
distributed in the unit square; 
output-sensitive algorithms need $O(n\log h)$ time
for output size $h$~\cite{KirkpatrickSe86}; 
and there are results for external-memory 
models~\cite{GoodrichTsVeVi93}. 

A major drawback of worst-case analysis
is that it does not always reflect the behavior
of real-world inputs. Worst-case algorithms 
must provide for extreme inputs that 
may not occur (reasonably often) in 
practice. Average-case analysis tries 
to address this problem by assuming some 
fixed input distribution. For example, in the 
case of maxima 
coordinate-wise independence covers a broad
range of inputs, and it leads to a clean 
analysis~\cite{Buchta89}. Nonetheless, it is still
unrealistic, and the right 
distribution to analyze remains a point of 
investigation. However, the assumption of 
randomly distributed inputs is very natural 
and one worthy of further study.

\paragraph{The self-improving model.} 
Ailon~\etal~introduced the self-improving 
model to address the drawbacks of average 
case analysis~\cite{ACCL}. 
In this model, there is a fixed but 
unknown distribution $\cD$ that 
generates independent inputs, i.e., whole 
input sets $P$. The algorithm initially
undergoes a \emph{learning phase} where 
it processes inputs with a worst-case
guarantee while acquiring information 
about $\cD$. 
After seeing a (hopefully small) 
number of inputs, the algorithm shifts
into the \emph{limiting phase}. Now, it 
is tuned for $\cD$, and the expected 
running time is (ideally) \emph{optimal 
for the distribution $\cD$}. 
A self-improving algorithm 
can be thought of as able to attain 
the optimal average-case
running time for all, or at least 
a large class of, distributions. 

As in earlier work, 
we assume that the input follows a product 
distribution. An input $P = (p_1, \ldots, p_n)$
is a set of $n$ points in the plane. Each 
$p_i$ is generated independently from a 
distribution $\cD_i$, so the probability 
distribution of $P$ is the product 
$\prod_i \cD_i$. The $\cD_i$s themselves
are arbitrary, we only assume that they 
are independent. There are lower 
bounds~\cite{AilonCCLMS11} showing that 
some restriction on $\cD$ is
necessary for a reasonable self-improving 
algorithm, as we shall explain below.

The first self-improving algorithm was 
for sorting, and it was later extended 
to Delaunay 
triangulations~\cite{CS_self_improve,AilonCCLMS11}.
In both cases, \emph{entropy-optimal} 
performance is achieved in the limiting phase.
Later, Bose~\etal~\cite{BoseDeDoDuKiMo10} 
described \emph{odds-on trees}, a general 
method for self-improving solutions to
certain query problems, e.g., 
point location, orthogonal range searching, 
or point-in-polytope queries.

\section{Results} 
\label{sec:results}

We give self-improving algorithms 
for planar coordinate-wise maxima
and convex hulls over product 
distributions. 
Let $P \subseteq \R^2$ be finite.
A point $p \in P$
\emph{dominates} $q \in P$,
if both the $x$- and $y$-coordinate
of $p$ are at least as large as
the $x$- and $y$-coordinate of $q$.
A point in $P$ is \emph{maximal}
if no other point in $P$ dominates
it, and \emph{non-maximal} otherwise.
The \emph{maxima problem} is
to find all maximal points in $P$.
The \emph{convex hull} of $P$ is the
smallest convex set that contains $P$.
It is a convex polygon whose vertices
are points from $P$. 
We will focus on the \emph{upper} hull of $P$,
denoted by $\UH(P)$.  A point in $P$
is \emph{extremal} if it appears on 
$\UH(P)$, otherwise it is 
\emph{non-extremal}. In the \emph{convex
hull problem}, we must find the extremal
points in $P$.

\subsection{Certificates}

We need to make precise the notion of an \emph{optimal algorithm}
for a distribution $\cD$. The 
issue with maxima and convex hulls
is their output sensitive nature.
Even though the actual output size
may be small, additional work 
is necessary to determine which
points appear in the output. We also want to consider algorithms
that give a correct output on \emph{all} instances, not just those
in the support of $\cD$. For example, suppose for all inputs in the support of $\cD$,
there was a set of (say) three points that always formed the maxima. The optimal
algorithm just for $\cD$ could always output these three points. But such an algorithm
is not a legitimate maxima algorithm, since it would be incorrect on other inputs.

To handle these issues, we demand that any algorithm must provide a simple proof
that the output is correct.
This is formalized through \emph{certificates} (see Fig.~\ref{fig:certs}).

\begin{figure}
  \centering
  \includegraphics{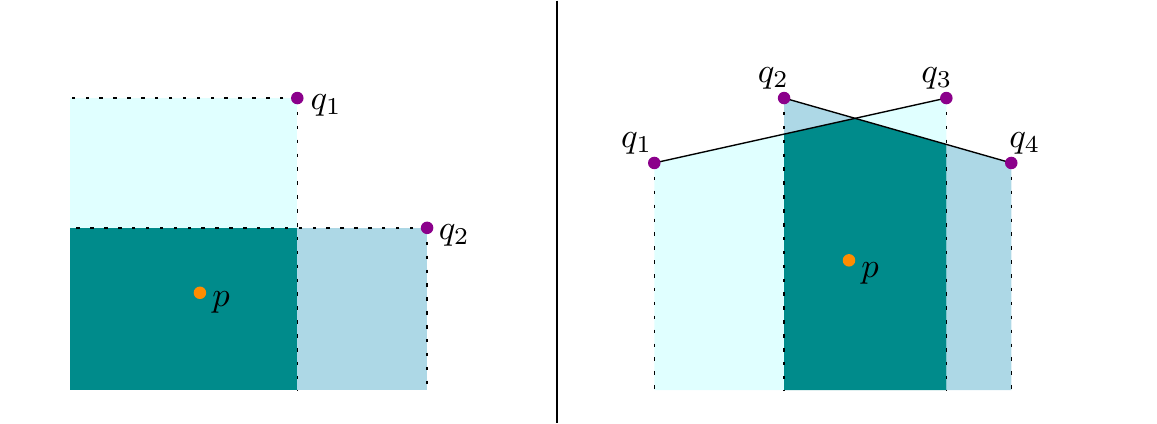}

   \caption{Certificates for maxima 
     and convex hulls: (left) both 
     $q_1$ and $q_2$ are certificates of 
     non-maximality for $p$; (right) 
     both $q_1q_3$ and $q_2q_4$ 
     are possible witness pairs 
     for non-extremality of $p$.
   }
  \label{fig:certs}
\end{figure}

\begin{definition}\label{def:cert-max}
  Let $P \subseteq \R^2$  be finite.
  A \emph{maxima certificate} $\gamma$ 
  for $P$ consists of \textup(i\textup) 
  the indices of the maximal points 
  in $P$, sorted from left to right; 
  and \textup(ii\textup) a \emph{per-point 
  certificate} for each non-maximal 
  point $p \in P$, i.e., the index 
  of an input point that dominates 
  $p$.  A certificate $\gamma$ is 
  \emph{valid} for $P$ if $\gamma$ 
  satisfies conditions (i) and (ii) 
  for $P$.
\end{definition}

Most known algorithms implicitly 
provide a certificate as in 
Definition~\ref{def:cert-max}~\cite{KungLuPr75,Golin94,KirkpatrickSe86}. 
For two points $p, q \in P$, we 
define the \emph{upper semislab} 
for $p$ and $q$, $\uss(p,q)$,
as the open planar region 
bounded by the upward vertical 
rays through $p$ and $q$ and 
the line segment $\overline{pq}$. 
The \emph{lower semislab} for 
$p$ and $q$, $\lss(p,q)$,
is defined analogously. Two 
points $q,r \in P$ are a 
\emph{witness pair} for a 
non-extremal $p \in P$ if 
$p \in \lss(q,r)$.

\begin{definition}\label{def:cert-ch}
  Let $P \subseteq \R^2$ be finite. A 
  \emph{convex hull certificate} $\gamma$ 
  for $P$ has \textup(i\textup) 
  the extremal points in $P$, sorted from 
  left to right; and \textup(ii\textup) a 
  witness pair for each non-extremal point 
  in $P$. The points in $\gamma$ are 
  represented by their indices in $P$.
\end{definition}

To our knowledge, most current maxima and convex hull algorithms implicitly output 
such certificates (for example, when they prune non-extremal points). This is by no means
the only possible set of certificates, and one could design different types of certificates. Our notion
of optimality crucially depends on the definition of certificates. It is not \emph{a priori} clear
how to define optimality with respect to other definitions, though we feel that our certificates are quite
natural.

\subsection{Linear comparison trees}

To define optimality, we need 
a lower bound model to which
our algorithms can be compared.
For this, we use linear algebraic 
computation trees that perform 
comparisons according to query lines 
defined by the input points. 
Let $\ell$ be a directed line. We write 
$\ell^+$ for the open halfplane
to the left of $\ell$, and $\ell^-$ for 
the open halfplane to the right of $\ell$. 

\begin{definition}\label{def:opt} 
  A \emph{linear comparison tree} $\cT$ 
  is a rooted binary tree. Each 
  node $v$ of $\cT$ is labeled with 
  a query of the form ``$p \in \ell_v^+?$''. 
  Here, $p$ is an input point and 
  $\ell_v$ a directed line. The 
  line $\ell_v$ can be obtained 
  in four ways, in increasing complexity:
  \begin{asparaenum}
    \item\label{type:I} 
      a fixed line independent of 
      the input \textup(but dependent
      on $v$\textup);
     \item\label{type:II} 
       a line with a fixed slope 
       \textup(dependent on $v$\textup) 
       passing through a given input point;
     \item\label{type:III} 
       a line through an input point and 
       a fixed point $q_v$, dependent on $v$; or
     \item\label{type:IV} 
       a line through two distinct input points. 
  \end{asparaenum}
\end{definition}

\begin{figure}
  \centering
  \includegraphics{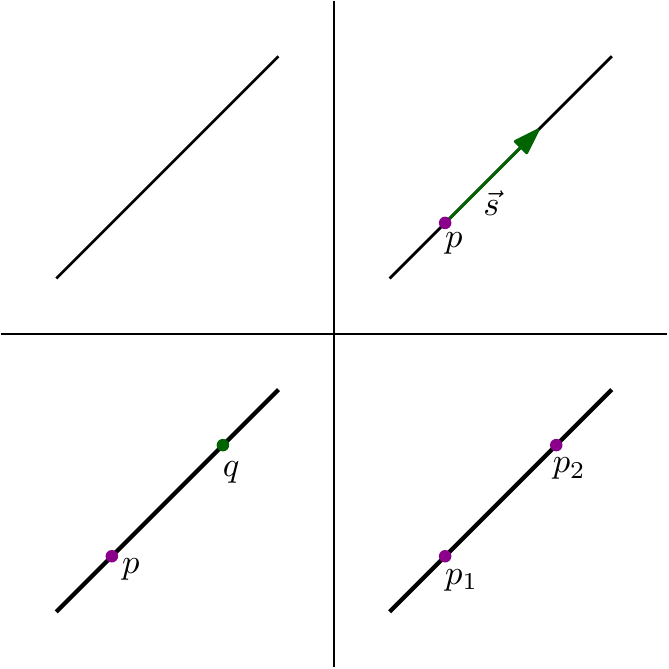}
  \caption{
    We can compare with (a) a fixed line;
    (b) a line through input point $p$, 
    with fixed slope $\protect\overrightarrow{s}$; 
    (c) a line through input $p$ and a 
    fixed point $q$; and (d) a line through 
    inputs $p_1$ and $p_2$.
  }
  \label{fig:lincomp}
\end{figure}

Definition~\ref{def:opt} is illustrated 
in Fig.~\ref{fig:lincomp}. 
Given an input $P$, an 
\emph{evaluation} of a linear 
comparison tree $\cT$ on $P$ 
is the node sequence 
that starts at the root 
and chooses in each step
the child according to the 
outcome of the current 
comparison on $P$. 
For a node $v$ of $\cT$
there is a region 
$\cR_v \subseteq \R^{2n}$ such 
that an evaluation of $\cT$ 
on input $P$ reaches $v$ if and only 
if $P \in \cR_v$. 

Why do we choose this model? 
For starters, it captures the standard ``counter-clockwise" (CCW) primitive. 
This is the is-left-of test that checks
whether a point $p$ lies to the left, 
on, or to the right of the directed line 
$qr$, where $p$, $q$, and $r$ are
input points~\cite{deBergChvKrOv08}.
The model also contains simple coordinate comparisons, the usual operation for maxima finding. Indeed, most planar maxima and convex hull algorithms only use these operations. Since we are talking about distributions of points, it also makes sense (in our opinion) to consider comparisons with fixed lines. All our definitions of optimality are dependent on this model,
so it would be interesting to extend our results to more general models. We may consider
comparisons with lines that have more complex dependences on the input points. Or, consider
relationships with more than 3 points. Nonetheless, this model is a reasonable starting point
for defining optimal maxima and convex hull algorithms.

We can now formalize linear
comparison trees for maxima and
convex hulls.

\begin{definition}
  A  linear comparison tree $\cT$ 
  \emph{computes the maxima
  of a planar point set} if
  every leaf $v$ of $\cT$ is labeled
  with a maxima certificate that is 
  valid for every input $P \in \cR_v$.
  A linear comparison tree for planar
  convex hulls is defined analogously.
\end{definition}

The \emph{depth} $d_v$ of node $v$ in $\cT$ 
is the length of the path from the root 
of $\cT$ to $v$.  Let $v(P)$  be the 
leaf reached by the evaluation of 
$\cT$ on input $P$.  The 
\emph{expected depth} of $\cT$ over 
$\cD$ is defined as 
\[
  d_\cD(\cT) = \EX_{P \sim \cD}[d_{v(P)}]. 
\]
For a comparison based algorithm whose
decision structure is modeled by $\cT$, 
the expected depth of $\cT$ gives a lower
bound on the expected running time.

\subsection{Main theorems}

Let $\bT$ be the set of linear comparison
trees that compute the maxima of $n$ points.
We define $\OPTMAX_\cD = \inf_{\cT \in \bT} d_\cD(\cT)$.
$\OPTMAX$ is a lower bound on the 
expected running time of \emph{any} linear
comparison tree to compute the maxima 
according to $\cD$. We prove the following
result:

\begin{theorem}\label{thm:main-max} 
  Let $\eps > 0$ be a fixed constant and 
  $\cD_1, \dots, \cD_n$ continuous planar point 
  distributions. Set $\cD = \prod_i \cD_i$.
  There is a self-improving algorithm for 
  coordinate-wise maxima according to $\cD$
  whose expected time in the limiting phase is 
  $O(\eps^{-1}(n + \OPTMAX_\cD))$. 
  The learning phase takes $O(n^\eps)$ inputs. 
  The space requirement is $O(n^{1+\eps})$.
\end{theorem}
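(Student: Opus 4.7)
The plan is to organize the algorithm around two structures built during the training phase from $\lambda = \Theta(n^\eps)$ sample inputs: (i) a \emph{reference staircase} $S^\star$, obtained as the maxima of one of the samples so that $S^\star$ partitions the plane into horizontal slabs, and (ii) for each coordinate $i$, a search tree on these slabs whose shape is biased so as to be nearly entropy-optimal for the empirical distribution of $p_i$ over the slabs. In the limiting phase, given a fresh input $P$, I would process each index $i$ by descending the $i$-th search tree to locate $p_i$ among the slabs of $S^\star$, then running a short \emph{interleaved search} that alternates between probing against $S^\star$ and against the candidate-maximal points already discovered in $P$, terminating as soon as $p_i$ is certified maximal or a specific dominator is found. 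Finally, a left-to-right sweep over the surviving candidates produces the output staircase and the remaining per-point certificates in $O(n)$ time.

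For the upper bound on expected running time, I would show that the expected work spent on $p_i$ is $O(\eps^{-1} + H(\sigma_i))$, where $\sigma_i$ is the random slab of $p_i$ under $\cD_i$. The $\eps^{-1}$ factor appears because the per-index search trees have arity $n^\eps$, so each root-to-leaf path has $O(1/\eps)$ levels; the entropy term is a standard consequence of the Kraft/Shannon inequality applied to a biased tree. Summing over $i$ and using the independence of the $\cD_i$, the total work becomes $O(\eps^{-1} n + \sum_i H(\sigma_i)) = O(\eps^{-1} n + H(\cC))$, where $\cC$ denotes the random certificate produced by the algorithm on $P \sim \cD$. A separate Chernoff argument ensures that $O(n^\eps)$ samples suffice to learn each per-index slab distribution accurately enough for these bounds to hold, and the per-index search trees jointly occupy $O(n \cdot n^\eps) = O(n^{1+\eps})$ space.

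The hard part is the matching lower bound $H(\cC) = O(\OPTMAX_\cD)$, needed so that the algorithm is competitive with the optimal linear comparison tree. The difficulty is that a linear comparison tree $\cT$ may use queries of types II--IV, whose lines depend on input points, so different inputs reaching the same leaf can produce different certificates and the naive entropy-over-leaves argument breaks. I would attack this via the conversion promised in the abstract: restrict attention to a high-probability sub-region of the input space, then replace each data-dependent query along a root-to-leaf path by a bounded number of equivalent fixed-line queries valid on that sub-region, yielding a \emph{restricted} tree $\cT'$ with expected depth $O(d_\cD(\cT))$ all of whose queries are fixed lines. On $\cT'$ the usual Shannon-type inequality applies and gives $H(\cC) \le O(d_\cD(\cT'))$, so chaining with the upper bound yields the claimed expected running time $O(\eps^{-1}(n + \OPTMAX_\cD))$ in the limiting phase.
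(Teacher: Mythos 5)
Your high-level plan correctly identifies the two ingredients (a learned slab structure with per-index biased search trees, and a conversion from general linear comparison trees to a restricted form), but the analysis has a structural gap that the paper's own lower-bound example exposes, and the proposed data structure is also not the one the paper uses.

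First, the data structure. You propose to take the \emph{staircase} of a single sample as a reference that ``partitions the plane into horizontal slabs.'' A staircase is a $2$-monotone curve and does not yield a one-dimensional slab decomposition on which a per-index search tree is well-defined. The paper instead builds a set of \emph{vertical} slab boundaries from $\Theta(\log n)$ sample inputs (Lemma~\ref{lem:slabstruct}), chosen as quantiles of the pooled $x$-coordinates so that every leaf slab $\lambda$ satisfies $\EX[X_\lambda^2]=O(1)$ for a fresh input; this second-moment bound is what makes the per-leaf-slab sorting in \textbf{Update} cost $O(n)$ overall (Claim~\ref{clm:update}). A single sample gives no such guarantee, and a staircase gives no slab decomposition at all.

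Second, and more seriously, the charging argument. You bound the expected work on $p_i$ by $O(\eps^{-1}+H(\sigma_i))$ where $\sigma_i$ is the slab actually containing $p_i$, i.e.\ the cost of a \emph{full} search, and then sum to $O(\eps^{-1}n+\sum_iH(\sigma_i))$. But $\sum_iH(\sigma_i)$ can be $\Omega(n\log n)$ while $\OPTMAX_\cD=O(n)$: this is exactly the construction in Fig.~\ref{fig:bad}(i), where the lower points are uniformly random over a large set but are all dominated by a single fixed maximal point, so the optimal tree certifies all of them with $O(n)$ comparisons even though their slab locations have high entropy. Your proposal does mention early termination, but the subsequent bound undoes it: if you first ``descend the search tree to locate $p_i$ among the slabs'' and only then run an interleaved certification step, you have already paid the full location cost. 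The paper's algorithm never locates any point further than needed: it advances searches one node at a time in a carefully chosen global order (largest right-slab-boundary first), which is what guarantees (Claim~\ref{clm:order}) that when $p_i$ is next scheduled all maxima to the right of its current slab are already known, so dominance can be tested immediately. The analysis then bounds the cost per point by the cost of an $S_i$-\emph{restricted} search (Definition~\ref{def:rest}, Lemma~\ref{lem:search-time}), where $S_i$ is the slab that an optimal entropy-sensitive tree itself needs to resolve, and this is tied to $\OPTMAX_\cD$ via the $\bS$-labeled-maxima reduction (Lemma~\ref{lem:comp}) and Proposition~\ref{prop:entropyDepth}. That machinery is precisely what replaces the incorrect $\sum_iH(\sigma_i)$ bound.

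Finally, the lower-bound step. You want $H(\cC)=O(\OPTMAX_\cD)$ where $\cC$ is the certificate your algorithm outputs, to be proved by converting $\cT$ to a restricted tree and applying a Shannon-type inequality. Two problems: (a) your $\cC$ is the output of an algorithm that fully locates each point, so its entropy is essentially $\sum_iH(\sigma_i)$, and as above that is not $O(\OPTMAX_\cD)$; (b) even for the right certificate notion, the paper does not route through an entropy-of-output inequality. It instead shows directly that for any leaf $v$ of an entropy-sensitive tree computing $\bS$-labeled maxima, the algorithm's expected work conditioned on $P\in\cR_v$ is $O(n+d_v)$ (Claim~\ref{clm:search}), then sums over leaves. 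The conversion from general linear comparison trees to restricted and then entropy-sensitive trees (Lemmas~\ref{lem:lin->restricted}, \ref{lem:restricted->entropy}) is indeed needed, but it is used to certify that such a tree of expected depth $O(n+\OPTMAX_\cD)$ exists, not to prove an entropy inequality about the output. Your proposal correctly senses that the type~II--IV queries are the obstruction and that a conversion is needed, but the ``high-probability sub-region then replace each query'' description is not how the paper handles it: the paper replaces each non-type-I comparison by a small restricted subtree with constant expected depth (Claim~\ref{clm:node}), cutting off at depth $n^2$, and charges via a per-node expectation bound (Claim~\ref{clm:YvBound}).
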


We also give a self-improving algorithm 
for convex hulls. Unfortunately, it is 
slightly suboptimal. Like before, we set 
$\OPTCH_\cD = \inf_{\cT \in \bT} d_\cD(\cT)$, 
where now $\bT$ is the set of linear comparison 
trees for the convex hull of $n$ points. 
The conference version~\cite{ClarksonMuSe10}
claimed an optimal result, but the analysis was incorrect. Our new analysis is simpler and 
closer in style to the maxima result.

\begin{theorem}\label{thm:main-ch} 
  Let $\eps > 0$ be a fixed constant 
  and $\cD_1, \ldots, \cD_n$ 
  continuous planar point distributions. 
  Set $\cD = \prod_i \cD_i$. There is a 
  self-improving algorithm for convex hulls
  according to $\cD$ whose expected time in 
  the limiting phase is 
  $O(n\log\log n + \eps^{-1}(n + \OPTCH_\cD))$. 
  The learning phase takes $O(n^\eps)$ inputs. 
  The space requirement is $O(n^{1+\eps})$.
\end{theorem}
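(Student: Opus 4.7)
The plan is to mirror the argument for Theorem~\ref{thm:main-max}, adapted to upper hulls. In the learning phase we take $O(n^\eps)$ sample inputs, compute their upper hulls, and for each slot $i$ distill from these samples a search structure $D_i$ tuned to $\cD_i$. Following the spirit of the maxima construction, $D_i$ encodes a canonical decomposition of the plane into cells delimited by $x$-coordinates and by secant lines through pairs of frequently extremal sample points, together with, for each cell, a small set of candidate witness pairs drawn from the learned hull. The $n^{1+\eps}$ space budget is chosen so that, for each slot $i$, one can store an entropy-sensitive search tree over this decomposition in which $p_i$ is localized in expected time matching the entropy of its cell under $\cD_i$, with an $O(\log\log n)$ worst-case fallback provided by a van Emde Boas layer on the $x$-intervals of cells.

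In the limiting phase, the algorithm processes $P = (p_1,\dots,p_n)$ in two stages. In the first stage, each $p_i$ is localized in its canonical cell by the interleaved search procedure advertised in the abstract: we simultaneously test $p_i$ against the likeliest locations in decreasing order of $\cD_i$-probability, so that the comparisons spent are, on average, competitive with what any linear comparison tree for $\cD$ must spend to separate $p_i$ from the cells. Whenever the canonical cell of $p_i$ stores a witness pair lying above $p_i$, that pair certifies $p_i$ as non-extremal; the remaining points form a candidate set $C$. In the second stage we compute $\UH(C)$ by a standard $O(|C|\log|C|)$ routine and use it to issue witness pairs for any $p_i \in C$ that did not survive.

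For the analysis, one first invokes the tree-conversion result promised in the abstract to replace an arbitrary optimal tree by a restricted one whose comparisons live on our canonical decomposition. Slot by slot this lets us charge the expected localization cost of $p_i$ to the expected number of comparisons the optimal tree spends on $p_i$, up to an $O(\log\log n)$ additive term; summing gives stage one a cost of $O(n\log\log n + \OPTCH_\cD)$. The second stage is controlled by a convexity argument on the extremality probabilities, bounding $\EX[|C|\log|C|]$ by $O(n + \OPTCH_\cD)$.

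The main obstacle is making the charging work for witness pairs. Unlike maxima, where a single dominating point certifies non-maximality, a convex-hull witness involves two other input points whose identities vary with $P$, so the cell of $p_i$ in our decomposition does not, by itself, single out a fixed witness. The workaround is to arrange that the canonical cell for $p_i$ almost always contains a valid pair among the constant number of candidates pre-stored in $D_i$; when it does not, one falls back to a stage-two hull query, and one must show that such fallbacks are rare enough on average. It is precisely this worst-case $O(\log\log n)$ localization cost per slot, rather than an entropy-matching $O(1)$, that prevents the bound from reaching the ideal $O(n + \OPTCH_\cD)$ attained for maxima.
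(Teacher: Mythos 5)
Your proposal shares the correct high-level skeleton with the paper (a learned slab structure, interleaved entropy-restricted searches, and a conversion of an optimal linear comparison tree into an entropy-sensitive one over a canonical decomposition), but it departs from the paper in two places where the departure leaves a genuine gap, and it mis-attributes the source of the $\log\log n$ term.

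First, the $\log\log n$ factor in the paper does \emph{not} come from a worst-case per-point localization cost (there is no van Emde Boas layer, and the search trees $T_i$ of Lemma~\ref{lem:tree} have worst-case depth $O(\log n)$, not $O(\log\log n)$). It comes from the construction phase: the canonical hull $\cC$ is built so that the $\textbf{V}$-extremal points chop $P$ into blocks, and Lemma~\ref{lem:canonicalDir} guarantees $\EX\bigl[\sum_i X_i\log(Y_i+1)\bigr]=O(n\log\log n)$, where $X_i$ is the number of points and $Y_i$ the number of \emph{extremal} points in the $i$-th block. Running an output-sensitive hull (Kirkpatrick--Seidel) per block then costs exactly this much. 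Your proposal instead collects a candidate set $C$ and computes $\UH(C)$ in $O(|C|\log|C|)$ time, and asserts $\EX[|C|\log|C|]=O(n+\OPTCH_\cD)$. That bound is false: if every $\cD_i$ is a point mass and all $n$ points lie on the hull, then $\OPTCH_\cD=O(n)$ (constantly many fixed-line tests per point suffice to validate the hard-coded sorted certificate), yet $|C|=n$ deterministically and your stage-two cost is $\Theta(n\log n)$. The paper avoids this precisely by replacing a global sort-like step with per-block output-sensitive computation whose exponent is controlled by $Y_i$, not $X_i$, and by choosing $\textbf{V}$ so that each block has few extremal points.

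Second, your plan for witness pairs --- pre-storing a constant number of candidate pairs per cell in $D_i$ and falling back to stage two otherwise --- does not engage with the dependency issue the paper flags (Fig.~\ref{fig:bad}(ii)): witnesses are other input points whose identities vary with $P$, so no fixed list attached to $p_i$'s cell can be valid, and you give no argument that ``fallbacks are rare enough on average.'' The paper instead routes all witness reasoning through the canonical hull $\cC$ and \emph{pencils} of the $\textbf{V}$-extremal points discovered at runtime: a point inactive because it is below $\seg(\cC,C_i)$, or inside the pencil of some point found outside $\cC$, has its non-extremality certified by structure of $\cC$ plus a small number of dynamically maintained pencil boundaries, and Lemma~\ref{lem:entropy-sensitive-CH} (via the chain Lemmas~\ref{lem:regular->extremal}--\ref{lem:V-extremal->canonical}) shows that such $\cC$-certificates cost only $O(n)$ extra comparisons over the optimal tree. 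Without an analogue of this certificate refinement, your charging of localization cost against $\OPTCH_\cD$ in stage one does not go through, because an optimal tree's leaf regions need not align with your cells in a way that bounds a per-point restricted search.
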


Any optimal (up to multiplicative factor $1/\eps$ in running time)
self-improving sorter requires $n^{1+\Omega(\eps)}$ storage (Theorem 2 of~\cite{AilonCCLMS11}).
By the standard reduction of sorting to maxima and convex hulls, this shows that 
the $O(n^{1+\eps})$ space is necessary.
Furthermore, self-improving sorters for arbitrary distributions requires exponential storage (Theorem 2 of~\cite{AilonCCLMS11}). So some 
restriction on the input distribution is necessary for a non-trivial result.

\begin{figure}
  \centering
  \includegraphics{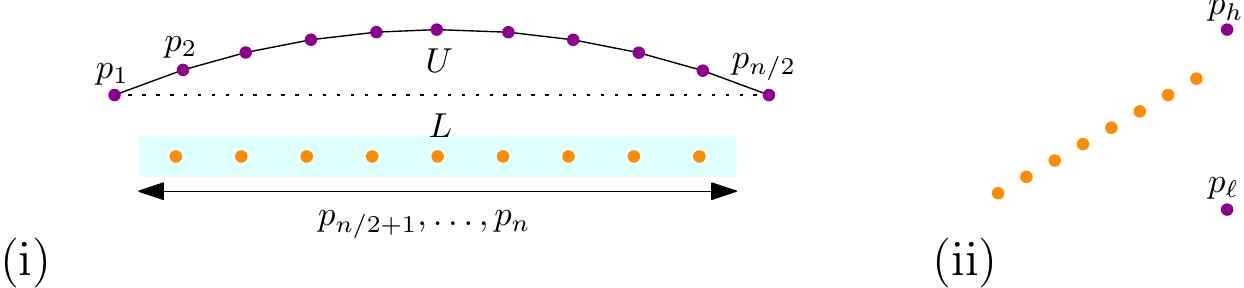}

  \caption{Bad inputs: (i) the 
    upper hull $U$ is fixed, while 
    $p_{n/2 +1}, \ldots, p_n$ roughly 
    constitute a random permutation of $L$; 
    (ii) point $p_1$ is either at $p_h$ 
    or $p_\ell$, so it affects the 
    extremality of the other inputs.}
  \label{fig:bad}
\end{figure}

\paragraph{Prior Algorithms.}
Before we go into the details of
our algorithms, let us explain why
several previous approaches fail.
We focus 
on convex hulls, but the arguments
are equally valid for maxima.
The main problem seems to be
that the previous approaches rely
on the sorting lower bound for 
optimality. However, this 
lower bound does not apply 
in our model.  Refer to 
Fig.~\ref{fig:bad}(i). 
The input comes in two groups: 
the lower group $L$ is not on 
the upper hull, while all points
in the upper group $U$
are vertices of the upper hull. 
Both $L$ and $U$ have
$n/2$ points. The input distribution 
$\cD$ fixes the points 
$p_1, \dots, p_{n/2}$ to form $U$, 
and for each $p_i$ with $i = n/2+1, \dots, n$, 
it picks a random point from $L$ 
(some points of $L$ may be repeated). 
The ``lower'' points form a random 
permutation of $\Omega(n)$ points from $L$. 
The upper hull is always given by $U$, 
while all lower points have the same
witness pair $p_1, p_{n/2}$. Thus 
an optimal algorithm requires $O(n)$ time.

In several other models, the 
example needs $\Omega(n\log n)$ 
time. The output size is $n/2$, so 
output-sensitive algorithms require 
$\Omega(n\log n)$ steps. Also, 
the \emph{structural entropy}
is $\Omega(n\log n)$~\cite{Barbay}. 
Since the expected size of the
upper hull of a random $r$-subset 
of $U\cup L$ is $r/2$, 
randomized incremental construction 
takes $\Theta(n\log n)$ time~\cite{CS89}. 
As the entropy of the 
$x$-ordering is $\Omega(n\log n)$, 
self-improving algorithms
for sorting or Delaunay
triangulations are not
helpful~\cite{AilonCCLMS11}.
\emph{Instance optimal algorithms}
also require $\Omega(n \log n)$
steps for each input from our 
example~\cite{AfshaniBC09}: 
this setting 
considers the input as a \emph{set}, 
whereas for us
it is essential to know the distribution of
each individual input point.

Finally, we mention the paradigm of 
preprocessing imprecise
points~\cite{BuchinLoMoMu11,EzraMu13,HeldMi08,LoefflerSn10,vKreveldLoMi10}.
Given a set $\cR$ of 
planar regions, we must 
preprocess $\cR$ to 
quickly find the (Delaunay) 
triangulation or convex hull 
for inputs with exactly 
one point from each region 
in $\cR$.  If we consider inputs 
with a random point from each 
region, the self-improving setting applies,
and the previous results 
bound the expected
running time in the limiting phase.
As a noteworthy side effect, we 
improve a result
by Ezra and Mulzer~\cite{EzraMu13}:
they preprocess a set of
planar lines so that
the convex hull for 
inputs with one point from each
line can be found in near-linear
time. Unfortunately, the data structure
needs quadratic space.
Using self-improvement,
this can now be reduced
to $O(n^{1+\eps})$.

\paragraph{Output sensitivity and dependencies.}
We introduced certificates in 
order to deal with output 
sensitivity. These certificates
may or may not be easy 
to find. In Fig.~\ref{fig:bad}(i), 
the witness pairs are all 
``easy''. However, if 
the points of $L$ are placed
\emph{just below} the edges 
of the upper hull, we need 
to search for the witness 
pair of each point $p_i$, 
for $i=n/2+1, \dots, n$;
the certificates are
``hard''. Furthermore, 
even though the individual 
points are independent, the 
upper hull can exhibit 
very dependent behavior. 
In Fig.~\ref{fig:bad}(ii),
point $p_1$ can be either 
$p_h$ or $p_\ell$, while 
the other points are fixed. 
The points $p_2, \dots, p_n$ 
become extremal \emph{depending} 
on the position of $p_1$. This 
makes life rather hard for 
entropy-optimality, since only 
if $p_1=p_\ell$ the ordering of 
$p_2,\dots, p_n$ must be determined.

Our algorithm, and plausibly 
any algorithm, performs a 
point location for each input $p_i$.
If $p_i$ is ``easily'' shown to be 
non-extremal, the search 
should stop early. 
However, it seems
impossible to know \emph{a priori}
how far to proceed: imagine 
the points $L$ of Fig.~\ref{fig:bad}(i) 
doubled up and placed at \emph{both} 
the ``hard'' and the ``easy'' positions, 
and $p_i$ for $i=n/2+1, \dots, n$ chosen 
randomly among them. The 
search depth can only be 
determined from the actual 
position. Moreover, the certificates may 
be easy once the extremal points are known,
but finding them is what we wanted in the
first place.

\section{Preliminaries}
\label{sec:prelims}

 Our input point set 
is called 
$P = \langle p_1, \dots, p_n\rangle$, 
and it comes from a product
distribution 
$\cD = \prod_{i=1}^n \cD_i$. 
All distributions $\cD_i$
are assumed to be continuous.
For $p \in \R^2$, we 
write $x(p)$ and $y(p)$ for the 
$x$- and the $y$-coordinate 
of $p$.  Recall that 
$\ell^+$ and $\ell^-$ denote
the open halfplanes to
the left and to the right of
a directed line $\ell$.
If $R \subseteq \R^2$ is measurable, 
a \emph{halving line} $\ell$
for $R$ with respect to distribution
$\cD_i$ has the property 
\[
  \Pr_{p \sim \cD_i}[p \in \ell^+ \cap R] = 
  \Pr_{p \sim \cD_i}[p \in \ell^- \cap R].
\]
If $\Pr_{p \sim \cD_i}[p \in R] = 0$, every 
line is a halving line for $R$. 

We write $c$ for a sufficiently 
large constant.
We say ``with high probability" for 
any probability larger than 
$1 - n^{-\Omega(1)}$.
The constant in the exponent 
can be increased by increasing
the constant $c$. We will take union 
bounds over polynomially many 
(usually at most $n^2$) 
low probability events and still get a 
low probability bound.

The main self-improving algorithms require a significant amount of preperation. This is detailed in Sections \ref{sec:comp}, \ref{sec:rest}, and \ref{sec:data-str}. These sections give some lemmas
on the linear comparison trees, search trees, and useful data structures for the learning phase.
We would recommend the reader to first
skip all the proofs in these sections, as they are somewhat unrelated to the actual self-improving algorithms.

\section{Linear comparison trees} \label{sec:comp}

We discuss  basic properties 
of linear comparison trees.
Crucially,  
any such tree can be
simplified without significant loss in
efficiency (Lemma~\ref{lem:lin->entropy}).
Let $\cT$ be a linear comparison tree.
Recall that for each node $v$ of $\cT$, 
there is an \emph{open} region 
$\cR_v \subseteq \R^{2n}$ such that 
an evaluation of $\cT$ on $P$ reaches $v$ 
if and only if $P \in \cR_v$
(We define the regions as open,
because the continuous nature of
the input distribution lets us 
ignore the case that a point
lies on a query line.)
We call $\cT$
\emph{restricted}, if all nodes 
of depth at most $n^2$ are of
of lowest complexity, 
i.e., type (\ref{type:I})
in Definition~\ref{def:opt}.
We show 
that in a \emph{restricted} linear 
comparison tree, each $\cR_v$ for
a node of depth at most $n^2$ is 
the Cartesian product of planar polygons. 
This will enable us to analyze each input point
independently.

\begin{prop}\label{prop:Cartesian}
  Let $\cT$ be a restricted linear 
  comparison tree, and $v$ a node of
  $\cT$ with $d_v \leq n^2$. There exists a sequence 
  $R_1, \dots, R_n$
  of (possibly unbounded) convex planar 
  polygons such that 
  $\cR_v = \prod_{i=1}^n R_i$. 
  That is, the evaluation of $\cT$ on 
  $P = \langle p_1, \dots, p_n \rangle$ 
  reaches $v$ if and only 
  if $p_i \in R_i$ for all $i$.
\end{prop}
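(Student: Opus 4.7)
The plan is a straightforward induction on the depth $d_v$, exploiting the fact that in a restricted tree every node of depth at most $n^2$ carries a type (\ref{type:I}) query, i.e., a comparison of a single input point $p_i$ against a fixed line $\ell_v$ that does not depend on the input. Such a query partitions the ambient space $\R^{2n}$ only along the two coordinates of $p_i$, leaving the other $2(n-1)$ coordinates untouched, which is precisely what is needed to preserve a Cartesian-product structure.

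For the base case, take $v$ to be the root. Its region is $\cR_v = \R^{2n}$, which equals $\prod_{i=1}^n R_i$ with $R_i = \R^2$ for all $i$; each $R_i$ is a (degenerate) convex planar polygon. For the inductive step, suppose $v$ has $1 \le d_v \le n^2$ and let $u$ be its parent, so $d_u = d_v - 1 < n^2$. By the induction hypothesis, $\cR_u = \prod_{i=1}^n R_i^{(u)}$ for convex polygons $R_i^{(u)}$. Because $\cT$ is restricted and $d_u < n^2$, the query at $u$ has type (\ref{type:I}): it asks ``$p_i \in \ell_u^+?$'' for some fixed index $i$ and fixed directed line $\ell_u$. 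The two children of $u$ therefore correspond to the regions
\[
  \cR_u \cap \{p_i \in \ell_u^+\} \quad\text{and}\quad \cR_u \cap \{p_i \in \ell_u^-\}.
\]
Either intersection leaves $R_j^{(u)}$ unchanged for $j \neq i$ and replaces $R_i^{(u)}$ by $R_i^{(u)} \cap \ell_u^{\pm}$. Since the intersection of a convex polygon with an open halfplane is again a (possibly unbounded) convex polygon, setting $R_j^{(v)} := R_j^{(u)}$ for $j\neq i$ and $R_i^{(v)} := R_i^{(u)} \cap \ell_u^{\pm}$ gives the desired product decomposition $\cR_v = \prod_{j=1}^n R_j^{(v)}$.

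Finally, the claimed characterization ``$P$ reaches $v$ iff $p_i \in R_i$ for all $i$'' is immediate from the product form: $P \in \cR_v$ is equivalent to $p_i \in R_i$ for every $i$, and by the definition of $\cR_v$ the former condition is exactly the condition that the evaluation of $\cT$ on $P$ reaches $v$. There is no real obstacle here; the only subtlety worth flagging is that the depth bound $d_v \le n^2$ is used to guarantee that every ancestor of $v$ also has depth strictly less than $n^2$, so that the restriction hypothesis applies at each step of the induction and no type (\ref{type:II})--(\ref{type:IV}) query—which could couple two different point coordinates and thereby destroy the product structure—is ever encountered along the root-to-$v$ path.
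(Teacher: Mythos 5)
Your proof is correct and follows essentially the same inductive argument as the paper: induct on $d_v$, use the restricted hypothesis to guarantee the parent's query is a type~(\ref{type:I}) comparison against a fixed line, and observe that intersecting one factor $R_j$ with the relevant open halfplane preserves the Cartesian-product structure. The extra remarks you add (the explicit equivalence at the end, and the observation that $d_v\le n^2$ ensures every ancestor also has a type~(\ref{type:I}) query) are harmless elaborations of the same idea.
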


\begin{proof}
  We do induction on $d_v$. For the root, 
  set $R_1 = \dots = R_n = \R^2$. If 
  $d_v \geq 1$, let $v'$ be the
  parent of $v$. By induction, there 
  are planar convex polygons $R_i'$ with 
  $\cR_{v'} = \prod_{i=1}^n R_i'$. 
  As $\cT$ is restricted, $v'$ is 
  labeled with a test 
  ``$p_j \in \ell_{v'}^+$?'', 
  the line $\ell_{v'}$ being independent of 
  $P$. We take $R_i = R_i'$ for $i \neq j$, 
  and $R_j = R_j' \cap \ell_{v'}^+$, if $v$ is 
  the left child of $v'$, and 
  $R_j = R_j' \cap \ell_{v'}^-$, otherwise.
\end{proof}

Next, we restrict linear comparison 
trees even further, so that the depth 
of a node $v$ relates to the probability 
that $v$ is reached by a random input 
$P \sim \cD$. This allows us to
compare the expected running time of our algorithms 
with the depth of a near optimal tree.

\begin{definition}
  A restricted 
  comparison tree is 
  \emph{entropy-sensitive} if the 
  following holds for any node $v$
  with $d_v \leq n^2$:
  let $\cR_v = \prod_{i=1}^{n} R_i$
  and $v$ labeled 
  ``$p_j \in \ell_{v}^+$\textup{?}''.
  Then $\ell_v$ is a halving line for $R_j$.
\end{definition}

The depth of a node in an entropy-sensitive
linear comparison tree is related to the 
probability that it is being visited:

\begin{prop}\label{prop:entropyDepth}
  Let $v$ be a node in an entropy-sensitive 
  tree with $d_v \leq n^2$, and 
  $\cR_v = \prod_{i=1}^n R_i$.  Then, 
  \[
    d_v =  -\sum_{i=1}^n \log \Pr_{p_i \sim \cD_i}[p_i \in R_i].
  \]
\end{prop}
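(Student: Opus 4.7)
The plan is to prove the identity by induction on $d_v$, mirroring the inductive setup already used in Proposition~\ref{prop:Cartesian}. The base case is the root, where $d_v = 0$ and each $R_i = \R^2$, so $\Pr_{p_i \sim \cD_i}[p_i \in R_i] = 1$ and the right-hand side is $-\sum_i \log 1 = 0$, matching.

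For the inductive step, I would take a non-root node $v$ with $d_v \le n^2$ and let $v'$ be its parent, so that $d_{v'} = d_v - 1 \le n^2 - 1$. By Proposition~\ref{prop:Cartesian}, both $\cR_{v'} = \prod_i R_i'$ and $\cR_v = \prod_i R_i$ are Cartesian products of convex planar regions, and the two differ only in one coordinate: since $\cT$ is restricted, $v'$ carries a test ``$p_j \in \ell_{v'}^+?$'' with $\ell_{v'}$ independent of the input, and $R_j$ equals either $R_j' \cap \ell_{v'}^+$ or $R_j' \cap \ell_{v'}^-$, depending on which child $v$ is, with $R_i = R_i'$ for $i \ne j$. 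The entropy-sensitive property tells us that $\ell_{v'}$ is a halving line for $R_j'$ with respect to $\cD_j$, so
\[
  \Pr_{p_j \sim \cD_j}[p_j \in R_j]
  = \Pr_{p_j \sim \cD_j}[p_j \in R_j' \cap \ell_{v'}^{\pm}]
  = \tfrac{1}{2}\Pr_{p_j \sim \cD_j}[p_j \in R_j'].
\]
Taking logarithms, $-\log \Pr[p_j \in R_j] = -\log \Pr[p_j \in R_j'] + 1$, while the other $n-1$ coordinates contribute identical terms on both sides. Summing, the right-hand side for $v$ equals the right-hand side for $v'$ plus one, which by the inductive hypothesis equals $d_{v'} + 1 = d_v$.

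The only point requiring a brief sanity check is the degenerate case $\Pr_{p_j \sim \cD_j}[p_j \in R_j'] = 0$: then by the convention preceding the definition of halving line any line qualifies, both child probabilities are zero, and the corresponding $\log$ is $-\infty$, so the identity degenerates consistently on both sides (and indeed a random input $P \sim \cD$ almost surely never reaches such a node, so this case is not relevant for the intended application). No obstacle is expected beyond bookkeeping; the whole argument rests on the halving-line property multiplying exactly one coordinate's probability by $1/2$ per level, which is exactly what ``entropy-sensitive'' was defined to guarantee.
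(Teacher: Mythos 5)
Your proof is correct and follows essentially the same inductive argument as the paper: root as base case, then using the halving-line property to show that passing to a child multiplies exactly one coordinate's probability by $1/2$, adding $1$ to the right-hand side. The brief remark on the degenerate zero-probability case is a reasonable addition the paper leaves implicit, but does not change the substance of the argument.
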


\begin{proof}
  We do induction on $d_v$. 
  The root has depth $0$ and all 
  probabilities are $1$. The claim 
  holds.  Now let $d_v \geq 1$ and 
  $v'$ be the parent of $v$. 
  Write $\cR_{v'} = \prod_{i=1}^n R_i'$ 
  and $\cR_{v} = \prod_{i=1}^n R_i$. 
  By induction, 
  $d_{v'} =  -\sum_{i=1}^n \log \Pr[p_i \in R'_i]$. 
  Since $\cT$ is entropy-sensitive, $v'$ is 
  labeled  ``$p_j \in \ell_{v'}^+$?'', 
  where  $\ell_{v'}$ is a halving line
  for $R_j'$, i.e.,
  \[
    \Pr[p_j \in R_j' \cap \ell_{v'}^+] = 
    \Pr[p_j \in R_j' \cap \ell_{v'}^-] = 
    \Pr[p_j \in R_j']/2.
  \]
  Since $R_i = R_i'$, for $i \neq j$, and 
  $R_j = R_j' \cap \ell_{v'}^+$ or 
  $R_j = R_j' \cap \ell_{v'}^-$,
  it follows that
  \[
    -\sum_{i=1}^n \log \Pr_{p_i \sim \cD_i}[p_i \in R_i]
    = 1-\sum_{i=1}^n \log \Pr_{p_i \sim \cD_i}[p_i \in R'_i]
    = 1+d_{v'} = d_v.
  \]
\end{proof}

We prove that it suffices to restrict our attention to entropy-sensitive comparison trees. 
The following lemma is crucial to the proof, as it gives handles on $\OPTMAX$ and $\OPTCH$.

\begin{lemma}\label{lem:lin->entropy}
  Let $\cT$ be a finite linear comparison 
  tree of worst-case depth $n^2$, and $\cD$ a product distribution over 
  points. There is an entropy-sensitive 
  tree $\cT'$ with expected depth 
  $d_{\cD}(\cT') = O(d_\cD(\cT))$, as 
  $d_\cD(\cT) \rightarrow\infty$.
\end{lemma}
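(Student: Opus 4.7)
The plan is to transform $\cT$ in two stages: first into a restricted tree $\cT_1$, then into an entropy-sensitive tree $\cT'$, with each stage inflating the expected depth by only a constant factor. A convenient starting point is Kraft's inequality applied to the leaves of $\cT$: since the regions $\{\cR_L\}$ form a prefix partition of $\R^{2n}$, we have $d_\cD(\cT) \ge H := \sum_L q_L \log(1/q_L)$, where $q_L := \Pr_{P \sim \cD}[P \in \cR_L]$. So it suffices to exhibit an entropy-sensitive $\cT'$ of expected depth $O(H)$.

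For the restriction step, the idea is to replace each non-type-(I) query by a subtree of fixed-line halving queries. The difficulty is that type-(II)--(IV) queries carve $\R^{2n}$ along polynomial surfaces (lines determined by input points), and these cannot be reproduced exactly by finitely many fixed lines. I would bypass this issue with product overapproximations: for every leaf $L$ of $\cT$, the product $B_L := \prod_i \pi_i(\cR_L)$ satisfies $\cR_L \subseteq B_L$, so any input reaching $L$ in $\cT$ also lies in $B_L$; and by independence, $\Pr[P\in B_L] = \prod_i \Pr[p_i \in \pi_i(\cR_L)] \ge q_L$, so $-\log \Pr[P \in B_L] \le \log(1/q_L)$. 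I would then build $\cT_1$ by routing each input through fixed-line halving queries into a cell of the arrangement induced by the projections $\pi_i(\cR_L)$; each such cell is a product region (Proposition~\ref{prop:Cartesian}) contained in a unique $\cR_L$, so we can label it with $L$'s certificate. By Proposition~\ref{prop:entropyDepth}, each cell $\prod_i R_i$ is reached at depth $-\sum_i \log \Pr[p_i \in R_i]$.

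For the entropy-sensitive step, at every node of $\cT_1$ whose query line $\ell$ is not already a halving line for the relevant factor $R_j$, I would insert one preliminary halving query on $p_j$ before performing $\ell$. Since $\Pr[p_j \in R_j \cap \ell^{\pm}]$ is already at most $\tfrac{1}{2}\Pr[p_j \in R_j]$ on the ``light'' side, a geometric-series / doubling analysis shows that the inserted halvings roughly double the length of each root-to-leaf path, so $d_\cD(\cT') = O(d_\cD(\cT_1))$.

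The main obstacle is the restriction step: because $\cR_L$ is not a product in general, covering it with product subcells may require many cells, each reached with probability smaller than $q_L$ and hence at greater depth than $\log(1/q_L)$. I would control this blowup using concavity of $x \mapsto -x\log x$: grouped per $\cT$-leaf $L$, the contribution $\sum_{c \subseteq \cR_L} \Pr[c] \log(1/\Pr[c])$ of product subcells $c$ should be shown to be $O(q_L \log(1/q_L))$, perhaps by bounding the number and ``aspect'' of such subcells in terms of the tests executed along the path to $L$. Summing over $L$ then yields $d_\cD(\cT_1) = O(H) = O(d_\cD(\cT))$, and composition with the second stage gives $d_\cD(\cT') = O(d_\cD(\cT))$ as required.
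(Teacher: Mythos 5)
Your overall two-stage plan (restrict, then make entropy-sensitive) and your Kraft-inequality framing are in the right spirit, and your second stage (inserting halving queries, geometric decay) is essentially the paper's Lemma~\ref{lem:restricted->entropy}. But the restriction step, which you yourself flag as the main obstacle, has a genuine gap that your ``concavity'' patch does not address.

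The problem is your assertion that ``each such cell is a product region \ldots contained in a unique $\cR_L$.'' This is false for type (II)--(IV) queries. Take a single type (IV) node that asks whether $p_1$ lies left of the line through $p_2$ and $p_3$. The two resulting leaf regions $\cR_{L_1}, \cR_{L_2}$ are not products, and moreover \emph{no finite} collection of fixed lines in the plane can partition $(\R^2)^3$ into product cells each lying entirely inside $\cR_{L_1}$ or $\cR_{L_2}$: any full-dimensional product cell $R_1 \times R_2 \times R_3$ contains inputs with $p_1$ on both sides of $\overline{p_2 p_3}$ (unless some $R_i$ is degenerate). So the ``arrangement induced by the projections'' simply does not refine the leaf partition, and the labeling you propose is not well defined. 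Your overapproximations $B_L = \prod_i \pi_i(\cR_L)$ are also not disjoint, and a single product cell can lie inside many $B_L$ while belonging to none of the corresponding $\cR_L$; no concavity bound on $\sum_c \Pr[c]\log(1/\Pr[c])$ can fix an assignment that doesn't exist.

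The paper sidesteps this by never trying to reproduce leaf regions with a finite arrangement. Instead it simulates each \emph{comparison node} individually (Claim~\ref{clm:node}) with a gadget subtree of type-(I) queries that resolves that one comparison with constant success probability per step, giving expected depth $O(1)$, and truncates at depth $n^2$ by falling back to the original query. The crucial ingredient you're missing is that at any node of the restricted tree the conditional distribution is still a product over polygonal regions, which is exactly what makes the ``zoom in until the comparison resolves'' gadget work (e.g.\ halving lines for type (II), recursive cones for type (III), recursive random-sample cuttings for type (IV)). This per-comparison simulation, combined with the accounting argument $\EX[Y_v] = O(\EX[X_v])$ of Claim~\ref{clm:YvBound}, is what actually yields $d_\cD(\cT') = O(d_\cD(\cT))$. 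You would need to replace your arrangement-based step with something of that flavor; as written, the construction does not produce a correct restricted tree.
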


This is proven by converting $\cT$ 
to an entropy-sensitive comparison 
tree whose expected depth is only 
a constant factor worse.  This is done in two steps. The first, more technical
step (Lemma~\ref{lem:lin->restricted}), goes from linear comparison trees to restricted comparison trees.
The second step goes from restricted comparison trees to entropy-sensitive trees (Lemma~\ref{lem:restricted->entropy}).
Lemma~\ref{lem:lin->entropy} follows immediately from Lemmas~\ref{lem:lin->restricted} and~\ref{lem:restricted->entropy}.

\begin{lemma}\label{lem:lin->restricted}
  Let $\cT$ be a linear 
  comparison tree of worst-case
  depth $n^2$ and $\cD$ a product
  distribution. There is a restricted 
  comparison tree $\cT'$ with expected depth 
  $d_{\cD}(\cT') = O(d_\cD(\cT))$, as 
  $d_\cD(\cT) \rightarrow\infty$.
\end{lemma}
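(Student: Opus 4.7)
My strategy is to transform $\cT$ locally: walk through $\cT$ and, for each node with a query of type~(II), (III), or~(IV), splice in a small \emph{gadget} — a subtree of type-(I) queries — that implements the same two-way branching. If every gadget has $O(1)$ expected depth under the input distribution conditioned on reaching it, then by linearity of expectation along root-to-leaf paths, $d_\cD(\cT')$ is only a constant factor above $d_\cD(\cT)$. The ``restricted'' condition only mandates type-(I) queries down to depth $n^2$; deeper nodes can remain of any type, so I do not need to worry about gadgets producing very long branches in the worst case, nor about the behavior of $\cT$ at depths exceeding $n^2$ (which do not exist, since $\cT$ has worst-case depth $n^2$).

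\textbf{Gadgets by query type.} By Proposition~\ref{prop:Cartesian} applied to the partially-restricted tree, at the moment a gadget is inserted the current region has product form $\prod_i R_i$, so $\cD$ conditioned on $\cR_v$ remains a product of the individual $\cD_i \mid R_i$, and in particular halving lines for each $R_j$ with respect to $\cD_j$ exist. For a type~(II) query (fixed-slope line through input $p_b$, evaluated at $p_a$), the predicate reduces to comparing two independent scalars $f(p_a)$ and $f(p_b)$ for a fixed linear functional $f$. I would resolve this with the interleaved halving search familiar from sorting two independent continuous random variables: alternately halve the conditional interval of $f(p_a)$ and $f(p_b)$ using fixed-line queries, and stop as soon as the two intervals become disjoint. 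Each round either halves a probability mass or resolves the comparison, so $O(1)$ queries suffice in expectation. Type~(III) queries follow the same template after changing coordinates to the angular coordinate around the fixed pivot $q_v$. For type~(IV), which depends bilinearly on two inputs $p_b, p_c$, I would first halve $R_b$ until $p_b$ lies in a small cell, after which the line $\overrightarrow{p_bp_c}$ is pinned down up to an angular error small compared with the signed distance from $p_a$; the residual question is then a type-(III)-style angular comparison between $p_a$ and $p_c$ about a fixed point inside the cell of $p_b$, which is handled as above.

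\textbf{Main obstacle.} The delicate point is showing that the type-(IV) gadget terminates in $O(1)$ expected rounds. The minimum cell size of $p_b$ needed to decide the query depends on the perpendicular distance from $p_a$ to the line through $p_b$ and $p_c$, which can be arbitrarily small with positive probability under $\cD$. To control this I would argue that, because the queries used inside the gadget are always halving with respect to the relevant conditional distributions, the probability that the gadget has not yet resolved after $k$ rounds decays geometrically in $k$: each round either localizes one of the three involved points to a region of half its conditional mass or flips the sign of the relevant $2\times 2$ determinant. Summing the resulting geometric series bounds the expected gadget length by a constant, and combining over all gadgets along a root-to-leaf path gives $d_\cD(\cT') = O(d_\cD(\cT))$, completing the reduction to a restricted comparison tree.
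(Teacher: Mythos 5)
Your overall plan --- replace each query of type~(\ref{type:II}), (\ref{type:III}), or~(\ref{type:IV}) by a gadget of type-(\ref{type:I}) queries with $O(1)$ expected depth, then combine by summing over the nodes of $\cT$ --- is the same as the paper's, and your observation that the restricted condition only binds down to depth~$n^2$ (so gadgets may safely revert to the original query when they run long) is also used there. Your type-(\ref{type:II}) gadget (alternating halving of the projections $f(p_a)$ and $f(p_b)$) and your type-(\ref{type:III}) gadget (angular halving around the fixed pivot) are essentially the paper's constructions.

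The type-(\ref{type:IV}) gadget, however, has a genuine gap, and you have found the soft spot yourself without repairing it. Your plan is to halve $R_b$ until the direction of $\overline{p_bp_c}$ is pinned down, then finish with a type-(\ref{type:III})-style test; but, as you note, the cell size of $p_b$ needed to decide the query depends on the signed distance from $p_a$ to the random line, which has no positive lower bound. Your recovery --- that ``each round either localizes one of the three involved points to a region of half its conditional mass or flips the sign of the relevant $2\times 2$ determinant,'' hence geometric decay --- is asserted, not proved, and the asserted dichotomy is not true. Halving the $\cD_i$-conditional mass of one point by a fixed line is, by itself, not progress toward resolving the sign of the determinant: the halved region can still be crossed by the random line $\overline{p_bp_c}$ with probability arbitrarily close to $1$, because a halving line equalizes probability mass, not geometric extent or exposure to the random line. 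In contrast to type~(\ref{type:II}), there is no fixed slope along which a single halving is guaranteed to make progress, since the slope of $\overline{p_bp_c}$ is itself random; and unlike type~(\ref{type:III}), there is no fixed apex about which to set up an angular comparison. So the geometric series you need does not materialize from mass halvings alone.

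The missing ingredient is the paper's cutting-style construction for type~(\ref{type:IV}): build a \emph{constant-size} family of triangles and cones partitioning the plane so that each region $V$ is crossed by the random line $\overline{p_jp_k}$ with probability at most $1/2$ (such a family exists by a positive-probability argument over a constant-size random sample of pairs, triangulating the arrangement of the resulting lines). Then locate $p_i$ in its region $V$ via $O(1)$ type-(\ref{type:I}) queries, and test whether $\overline{p_jp_k}$ meets $V$ via $O(1)$ type-(\ref{type:III}) queries (each expanded by the type-(\ref{type:III}) gadget). If the line misses $V$ the comparison is resolved; otherwise recurse inside $V$. Because $p_i$ is independent of $(p_j,p_k)$, the resolving probability is $\geq 1/2$ per round regardless of which $V$ contains $p_i$, which is exactly the geometric decay your write-up needs. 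This is a joint geometric statement about the distribution of the random line through two inputs, and is precisely what ``halve the conditional mass of one point'' does not provide.
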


\begin{lemma}\label{lem:restricted->entropy}
  Let $\cT$ a restricted linear comparison 
  tree. 
  There exists an entropy-sensitive comparison
  tree $\cT'$ with expected depth $d_{\cT'} = O(d_\cT)$.
\end{lemma}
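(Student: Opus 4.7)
The plan is to build $\cT'$ by simulating $\cT$ step by step, replacing each comparison of $\cT$ with a short block of halving-line queries. I first prune $\cT$ of any redundant node, i.e., a node whose current region is already contained in $\ell_v^+$ or in $\ell_v^-$; this can only decrease the expected depth, so I assume henceforth that at every $\cT$-node $v$ the current region $R_{j(v)}$ straddles $\ell_v$. Then, when the simulation is at a node $v$ of $\cT$ with query ``$p_{j}\in\ell_v^+$?'' and corresponding node $v'$ of $\cT'$ with $\cR_{v'}=\prod_i R_i$ (by Proposition~\ref{prop:Cartesian}), I perform halving-line queries on $p_{j}$ using lines \emph{parallel to} $\ell_v$. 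Such a halving line exists for every orientation because $\cD_{j}$ is continuous, by an intermediate-value argument. The key geometric observation is: a halving line $\ell$ parallel to $\ell_v$ must satisfy exactly one of $\ell^+\subseteq\ell_v^+$ or $\ell^-\subseteq\ell_v^-$ (whichever side of $\ell_v$ contains less than half of the $\cD_{j}$-mass of $R_{j}$), so one of the two halves of $R_{j}$ fully resolves ``$p_{j}\in\ell_v^+$?''. By the halving property, this resolving half carries conditional $\cD_{j}$-mass exactly $1/2$. Hence each halving query succeeds with probability $1/2$; otherwise I recurse on the other, strictly smaller half, which still straddles $\ell_v$. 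The number $X_v$ of halving queries charged to $v$ given we reach $v$ is therefore stochastically dominated by a $\mathrm{Geom}(1/2)$ variable, with $\EX[X_v\mid\text{reach }v]\le 2$, and $X_v\ge 1$ since $R_{j}$ straddles.

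I assemble $\cT'$ by invoking this subroutine repeatedly as the simulation walks down $\cT$: once the subroutine finishes at $v$, I move to the correct child of $v$ in $\cT$ and continue; at a $\cT$-leaf I copy the label. Correctness is immediate because the product region at any $\cT'$-leaf refines that of its target $\cT$-leaf, so the copied label remains valid. Every $\cT'$-query is a halving line (type~(\ref{type:I})) for the current region of some coordinate, so $\cT'$ is entropy-sensitive. By linearity of expectation,
\[
  \EX[d_{\cT'}] \;=\; \sum_{v\text{ internal in }\cT}\Pr[\text{reach }v]\cdot\EX[X_v\mid\text{reach }v] \;\le\; 2\sum_{v}\Pr[\text{reach }v] \;=\; 2\,\EX[d_\cT] \;=\; O(d_\cT).
\]
For the edge case in which $\cT$ has nodes below depth $n^2$ whose comparisons are of type~(\ref{type:II})--(\ref{type:IV}), I copy them verbatim into $\cT'$; because $X_v\ge 1$ along any path, $\cT'$-depth is at least $\cT$-depth at the corresponding position, so by this point $\cT'$-depth exceeds $n^2$ and entropy-sensitivity is vacuous.

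The main obstacle is the geometric ``parallel halving'' claim in the first paragraph: that a halving line parallel to $\ell_v$ must have one of its half-planes entirely on one side of $\ell_v$, with this resolving side carrying $\cD_{j}$-mass exactly $1/2$. Once this is granted, the $\EX[X_v]\le 2$ bound is a one-line geometric-series calculation, the overall depth bound follows from linearity of expectation, and verifying the formal definition of an entropy-sensitive tree is routine bookkeeping.
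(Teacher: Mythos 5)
Your proposal matches the paper's proof essentially step for step: both replace each type-(\ref{type:I}) comparison by a binary search with halving lines parallel to the original query line $\ell_v$, observe that each halving query resolves the comparison with probability $1/2$, and conclude $d_{\cT'}=O(d_\cT)$ by linearity of expectation (your explicit charging argument is the same accounting the paper borrows from the proof of Lemma~\ref{lem:lin->restricted}). The one detail the paper includes that you omit is truncating each halving block after $n^2$ steps (reverting to the original comparison), which keeps $\cT'$ well-defined on every input; this does not change the expected-depth bound, so the omission is a minor technical point rather than a gap.
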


For convenience, we move the proofs to a separate subsection.

\subsection{Proof of Lemmas~\ref{lem:lin->restricted} and~\ref{lem:restricted->entropy}}
\label{sec:lin->restricted}

The heavy lifting is done by representing a single 
comparison by a restricted linear comparison 
tree, provided that $P$ is drawn from a 
product distribution. The final transformation simply
replaces each node of $\cT$ by the subtree given 
by the next claim. For brevity, we omit the 
subscript $\cD$ from $d_\cD$. 

\begin{claim} \label{clm:node} 
  Consider a comparison $C$ as 
  in Definition~\ref{def:opt}. 
  Let $\cD'$ be a product distribution 
  for $P$ with each $p_i$ drawn 
  from a polygonal region $R_i$. If $C$ is not of
  type \textup(\ref{type:I}\textup), there is
  a restricted linear comparison tree $\cT'_C$ 
  that resolves $C$ with expected depth 
  $O(1)$ (over $\cD'$) and worst-case
  depth $O(n^2)$.
\end{claim}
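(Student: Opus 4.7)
The plan is to build $\cT'_C$ as a binary-search-like decision tree that, at each internal node, queries a single fixed line (type (I)) against one of the input points involved in $C$, and that terminates as soon as the accumulated information forces a unique answer to $C$. At a general node $v$ I maintain a sub-polygon $R_\alpha^{(v)} \subseteq R_\alpha$ for each input $\alpha$ relevant to $C$; if the product of these sub-polygons lies entirely on one side of $C$, we answer, otherwise we branch on another fixed line. Because type (I) queries act on disjoint input coordinates, the product structure of the surviving region is preserved (cf.\ Proposition~\ref{prop:Cartesian}), so the resulting tree is restricted.

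For types (II) and (III), the test reduces to a one-dimensional comparison of two independent scalar random variables. In type (II), ``$p_i \in \ell_v^+$?'' with $\ell_v$ of slope $s$ through $p_j$ is the comparison of the projections of $p_i$ and $p_j$ onto the direction orthogonal to $s$; in type (III), with pivot $q_v$, it is the comparison of the angular coordinates of $p_i$ and $p_j$ as seen from $q_v$, and threshold queries on these angles correspond to fixed lines through $q_v$, which are type (I). Both then reduce to the following lemma: for independent continuous random values $X, Y$ with known distributions on known intervals, one can decide ``$X<Y$?'' in expected $O(1)$ threshold queries of the form ``$X<m$?'' or ``$Y<m$?''. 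The algorithm keeps sub-intervals $I_X, I_Y$ for $X, Y$; if they are disjoint the answer is known, otherwise I pick a threshold $m$ between the current conditional medians of $X$ and $Y$ (or at their common value if they coincide), which satisfies $F_X(m)(1-F_Y(m))+(1-F_X(m))F_Y(m) \geq \tfrac12$, so querying both ``$X<m$?'' and ``$Y<m$?'' decides the comparison with probability at least $\tfrac12$ and otherwise shrinks the problem to strictly smaller intervals. A geometric-series analysis gives expected $O(1)$ queries.

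For type (IV), the CCW sign of $(p_i,p_j,p_k)$ is bilinear in pairs of points, so no direct 1D reduction is available; instead, the plan is an interleaved sequence of type (I) queries on the three current sub-polygons $R_i^{(v)}, R_j^{(v)}, R_k^{(v)}$. The central geometric statement required is: whenever these three sub-polygons admit both CCW orientations (equivalently, a common transversal line), one can choose an index $\alpha\in\{i,j,k\}$ and a fixed line $\ell$ such that a single type (I) query on $p_\alpha$ against $\ell$ eliminates one sign with probability bounded below by a constant. Iterating yields expected $O(1)$ by the same geometric-series argument. To enforce worst-case depth $O(n^2)$, I would truncate $\cT'_C$ at depth $n^2$ and append a single direct type (II)/(III)/(IV) query at each truncated leaf; this remains admissible because the appended node lies below the depth-$n^2$ cutoff where arbitrary complexities are permitted. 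The main obstacle is the type (IV) geometric lemma: unlike the 1D cases where overlapping intervals always admit an easy separator, one must analyze the convex set of transversal lines to three convex polygons in line-space and argue that halving the region that contributes most to this set reduces it by a constant factor, which is the delicate technical core of the proof.
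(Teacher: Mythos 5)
Your reduction of type~(\ref{type:II}) to a one-dimensional comparison of projections, and of type~(\ref{type:III}) to a comparison of angular coordinates about $q_v$, together with the median-threshold lemma and the geometric-series analysis, is correct and conceptually equivalent to the paper's binary-search argument for those cases. The truncation-at-depth-$n^2$ device is also the same as the paper's.

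However, the type~(\ref{type:IV}) case is a genuine gap, and it is the only nontrivial part of the claim. The geometric lemma you invoke --- that whenever $R_i^{(v)}\times R_j^{(v)}\times R_k^{(v)}$ admits both CCW orientations, some single type~(\ref{type:I}) query on one coordinate eliminates one sign with constant probability --- is not obviously true and is likely false as stated. Take $R_i = R_j = R_k$ all equal to the unit square with $\cD_i=\cD_j=\cD_k$ uniform: the orientation of a random triple is an unbiased coin flip, and after any single halfplane cut on any one coordinate the three surviving regions still admit both orientations with probability essentially $1$. So ``eliminating a sign'' in one step is not what happens; you would need a different potential function to drive the geometric series, and your proposal does not supply one --- you explicitly defer it as ``the delicate technical core.'' The paper avoids this entirely by not trying to make pointwise progress on the sign. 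Instead it builds a constant-size family of triangles/cones by sampling $O(1)$ pairs $(p_j,p_k)$, triangulating the arrangement of the lines through them, and arguing (by a positive-probability existence argument) that each cell $V$ of the family is hit by the random line $\overline{p_jp_k}$ with probability at most $1/2$. It then locates $p_i$ in a cell $V$ using $O(1)$ type~(\ref{type:I}) queries, and tests whether $\overline{p_jp_k}$ meets $V$ using $O(1)$ type~(\ref{type:III}) queries (each of which is in turn expanded into a type~(\ref{type:I}) subtree by the previous case). If $\overline{p_jp_k}$ misses $V$, the comparison is resolved; otherwise it recurses inside $V$. The probability of not resolving is at most $1/2$ per level, which gives the geometric series with no need for the transversal-line analysis you were hoping for. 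You should replace your type~(\ref{type:IV}) sketch with a construction of this flavor, or else prove a correct progress lemma; as written the proposal does not establish the claim.
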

\begin{figure}
  \centering
  \includegraphics{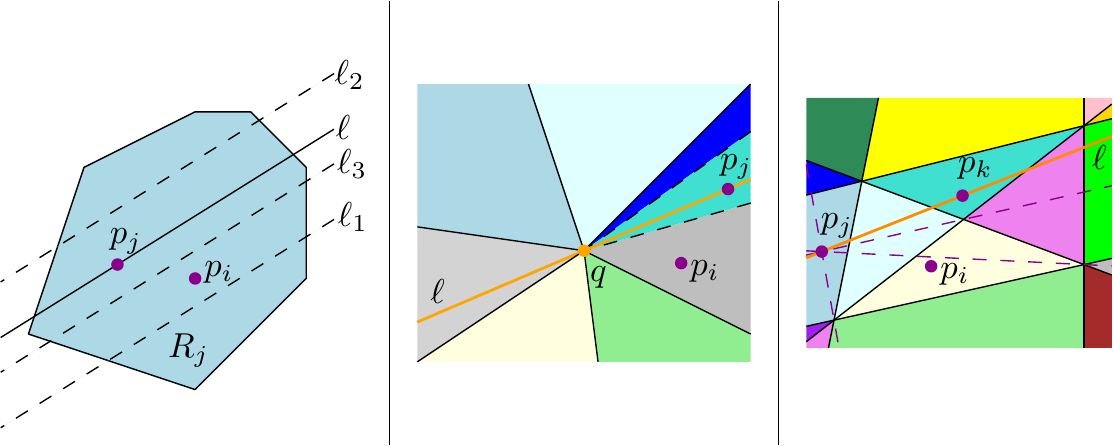}
  \caption{The different cases in the proof of Claim~\ref{clm:node}.}
  \label{fig:compreplace}
\end{figure}
\begin{proof}
We distinguish several cases 
according to Definition~\ref{def:opt}; 
see Fig.~\ref{fig:compreplace}.

\noindent\textbf{$v$ is of type (\ref{type:II}).} 
We must determine 
whether the input point $p_i$ lies 
to the left of the directed line 
with slope $a$ 
through the input $p_j$.
This is done through binary search. 
Let $R_j$ be the region in 
$\cD'$ corresponding to $p_j$,  
and $\ell_1$ a halving line for 
$R_j$ with slope~$a$. We do 
two comparisons to determine 
on which side of $\ell_1$ the inputs 
$p_i$ and $p_j$ lie. If they
lie on different sides, 
we can resolve the original comparison.
If not, we replace 
$R_j$ with the new 
region and repeat. 
Every time, the success 
probability is at least $1/4$.
As soon as the depth exceeds
$n^2$, we use the original
type (\ref{type:II}) comparison.
The probability of reaching
a node of depth $k$ is 
$2^{-\Omega(k)}$, so the expected 
depth is $O(1)$. 
   
\noindent\textbf{$v$ is of type (\ref{type:III}).} 
We must determine whether 
the input point $p_i$ lies 
to the left of the directed line 
through the input $p_j$ 
and the fixed point $q$.
We partition the plane by a 
constant-sized family of cones
with apex $q$, such that 
for each cone $V$ in the family,
the probability that line 
$\overline{q p_j}$ meets $V$ 
(other than at $q$)
is at most $1/2$. Such 
a family can be 
constructed by a sweeping a 
line around $q$, or by
taking a sufficiently large, but 
constant-sized, sample from the 
distribution of $p_j$, and 
bounding the cones by all lines 
through $q$ and each point of 
the sample. As such a construction 
has a positive success probability,
the described family
of cones exists.
  
We build a restricted tree 
that locates a point in the 
corresponding cone, and
for each cone $V$, we recursively 
build such a family of cones 
inside $V$, together with a restricted
tree. Repeating for 
each cone, this gives an infinite 
restricted tree $\cT'_C$. We search 
for both $p_i$ and $p_j$ in $\cT'_C$. 
Once we locate them in 
two different cones of the same 
family, the comparison is resolved.
This happens with probability at least
$1/2$, so 
the probability that the evaluation 
needs $k$ steps is 
$2^{-\Omega(k)}$. Again, we revert
to the original comparison once the
depth exceeds $n^2$.

\noindent\textbf{$v$ is of type (\ref{type:IV}).} 
We must determine whether the 
input point $p_i$ lies to the 
left of the directed line 
through inputs $p_j$ and $p_k$.
We partition the plane by a 
constant-sized family of triangles 
and cones, such that for each
region $V$ in the family, the 
probability that the line 
$\overline{p_jp_k}$ meets $V$ is at 
most $1/2$. Such a family can 
be constructed by taking a 
sufficiently large random sample 
of pairs $p_j$, $p_k$
and by triangulating the arrangement 
of the lines through each pair. The
construction has positive success 
probability, so such a family exists. 
(Other than the source of the random 
lines, this 
scheme goes back at least to \cite{Clarkson87}; 
a tighter version, called \emph{cutting}, 
could also be used \cite{Chazelle93}.) 
 
Now suppose $p_i$ is in region $V$ of 
the family. If the line $\overline{p_jp_k}$ 
does not meet $V$, the comparison
is resolved. This occurs with probability 
at least $1/2$. Moreover, finding the 
region containing $p_i$ takes a 
constant number of type (\ref{type:I})
comparisons.  Determining
if $\overline{p_jp_k}$ meets $V$ can be done 
with a constant number of
type (\ref{type:III}) comparisons: 
suppose $V$ is a triangle. 
If $p_j\in V$, then $\overline{p_jp_k}$ 
meets $V$. Otherwise, suppose $p_k$ 
is above all lines through $p_j$ and each
vertex of $V$; then $\overline{p_jp_k}$ does not 
meet $V$. Also, if $p_k$
is below all lines through $p_j$ and 
each vertex, then $\overline{p_j p_k}$
does not meet $V$. Otherwise, $\overline{p_j p_k}$ 
meets $V$. We replace each type (\ref{type:III})
query by a type (\ref{type:I})
tree, cutting off after $n^2$ levels.
 
By recursively building a tree for each
region $V$ of the family, comparisons of 
type (\ref{type:IV}) can be reduced to
a tree of depth $n^2 + 1$
whose nodes of depth at most $n^2$ 
use comparisons of 
type (\ref{type:I}) only. Since 
the probability of 
resolving the comparison $\Omega(1)$
with each family of regions that 
is visited, the expected
number of nodes visited is constant.
\end{proof}

Given Claim~\ref{clm:node}, we can 
now prove Lemma~\ref{lem:lin->restricted}.

\begin{proof}[Proof of Lemma~\ref{lem:lin->restricted}]
We incrementally transform 
$\cT$ into $\cT'$.
In each step, we have 
a partial restricted 
comparison tree $\cT''$ 
that eventually becomes $\cT'$.
Furthermore, during the 
process each node of 
$\cT$ is in one of 
three different states:
\emph{finished}, \emph{fringe}, 
or \emph{untouched}. 
We also have a function $S$
that assigns to each finished and 
fringe node of $\cT$ a subset 
$S(v)$ of nodes in $\cT''$.
The initial situation is 
as follows: all nodes of $\cT$ 
are untouched except for the 
root, which is fringe. 
The partial tree 
$\cT''$ has a single 
root node $r$, and the function
$S$ assigns the root of $\cT$ 
to the set  $\{r\}$.

The transformation proceeds as 
follows: we pick a fringe node 
$v$ in $\cT$, and mark it as finished.
For each child $v'$ of $v$, if $v'$ 
is an internal node of $\cT$, we mark 
it as fringe. Otherwise, we mark 
$v'$ as finished.
For each node $w \in S(v)$,
if $w$ has depth more than $n^2$,
we copy the subtree of $v$ in
$\cT$ to a subtree of $w$ in
$\cT''$.
Otherwise, 
we replace $w$  
by the subtree given 
by Claim~\ref{clm:node}.  This
is a valid application of the claim, 
since $w$ is a node of $\cT''$, a 
restricted tree. 
Hence $\cR_w$ is a 
product set, and the distribution
$\cD$ restricted to $\cR_w$ is 
a product distribution.
Now $S(v)$ contains the roots of 
these subtrees. Each leaf of each such
subtree corresponds to an outcome of 
the comparison in $v$.  
For each child $v'$ of $v$, we define 
$S(v')$ as the set of all such leaves 
that correspond to the same outcome 
of the comparison as $v'$. 
We continue this process until 
there are no fringe nodes left. 
By construction, the resulting 
tree $\cT'$ is restricted. 

It remains to argue that 
$d_{\cT'} = O(d_{\cT})$. 
Let $v$ be a node of $\cT$. 
We define two random variables 
$X_v$ and $Y_v$: $X_v$ is 
the indicator random variable 
for the event that the node
$v$ is traversed for a random 
input $P \sim \cD$.  
The variable $Y_v$ denotes 
the number of nodes traversed 
in $\cT'$ that correspond to $v$ 
(i.e., the number of nodes
needed to simulate the comparison 
at $v$, if it occurs).
We have 
$d_{\cT} = \sum_{v \in \cT} \EX[X_v]$, 
because if the leaf corresponding
input $P \sim \cD$ has depth $d$, exactly 
$d$ nodes are traversed to reach it.
We also have 
$d_{\cT'} = \sum_{v \in \cT} \EX[Y_v]$, 
since each node in $\cT'$ corresponds 
to exactly one node $v$ in $\cT$. 
Claim~\ref{clm:YvBound} below 
shows that $\EX[Y_v] = O(\EX[X_v])$, 
completing the proof.
\end{proof}

\begin{claim}\label{clm:YvBound} 
  $\EX[Y_v] \leq c\EX[X_v]$
\end{claim}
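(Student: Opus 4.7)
The plan is to exploit the fact that the work in $\cT'$ corresponding to a single node $v$ of $\cT$ is done \emph{locally}, inside one of the simulation subtrees whose roots make up $S(v)$; each such subtree has constant expected size by Claim~\ref{clm:node}. So I would first unpack $Y_v$ as a sum of contributions from the nodes $w \in S(v)$ and then bound each contribution by conditioning on reaching $w$.

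More concretely, I would proceed as follows. By the construction, whenever the evaluation of $\cT'$ on a random input $P \sim \cD$ traverses a node that ``corresponds to $v$'', it must first enter exactly one $w \in S(v)$; moreover the events $E_w = \{\text{the evaluation of }\cT'\text{ reaches }w\}$, for $w \in S(v)$, are pairwise disjoint and their union is exactly $\{X_v = 1\}$. Thus
\[
  \EX[Y_v] \;=\; \sum_{w \in S(v)} \Pr[E_w]\cdot \EX\!\bigl[\,Y_v \,\big|\, E_w\bigr].
\]
It therefore suffices to show $\EX[Y_v \mid E_w] \le c$ for every $w \in S(v)$, because then $\EX[Y_v] \le c\sum_w \Pr[E_w] = c\Pr[X_v=1] = c\EX[X_v]$.

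To bound $\EX[Y_v \mid E_w]$, I would split on the depth of $w$ in $\cT'$. If $w$ has depth greater than $n^2$, the construction copied the original comparison of $\cT$ at $v$ directly into $w$, so exactly one node corresponds to $v$ along this branch and $Y_v = 1$ deterministically once $E_w$ holds. If $w$ has depth at most $n^2$, then $w$ is a node of the restricted tree built so far, so by Proposition~\ref{prop:Cartesian} the region $\cR_w$ is a Cartesian product $\prod_i R_i$ of polygons. Hence the conditional distribution of $P$ given $E_w$ is the product distribution $\cD$ restricted to $\cR_w$, which is precisely the setting of Claim~\ref{clm:node}. If the comparison at $v$ has type (\ref{type:I}) the simulation is a single node and $Y_v=1$; otherwise Claim~\ref{clm:node} produces a restricted subtree whose expected depth under this conditional distribution is $O(1)$, so $\EX[Y_v \mid E_w] = O(1)$.

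The one thing to be a little careful about is that Claim~\ref{clm:node} bounds the \emph{expected depth} of the replacement subtree, and I am using it to bound the expected number of traversed nodes corresponding to $v$, conditional on entering the subtree; these coincide because each such traversal walks a root-to-leaf path in the replacement subtree under the product distribution conditioned on $\cR_w$. Aside from this verification, the argument is a straightforward application of linearity of expectation and tower law, so I do not expect a real obstacle.
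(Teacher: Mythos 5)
Your proposal is correct and follows essentially the same argument as the paper: decompose $\EX[Y_v]$ by conditioning on which $w \in S(v)$ is reached (these events partition $\{X_v = 1\}$), then bound each conditional expectation by $O(1)$ via Claim~\ref{clm:node}, and sum $\Pr[P \in \cR_w]$ back to $\Pr[P \in \cR_v] = \EX[X_v]$. You add a bit more care about the depth-$>n^2$ case and about why "expected depth of the replacement subtree" equals "expected number of nodes corresponding to $v$," but these are clarifications of the same proof rather than a different route.
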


\begin{proof}
Note that 
$\EX[X_v] = \Pr[X_v = 1] = \Pr[P \in \cR_v]$.
Since the sets $\cR_w$, $w \in S(v)$, 
partition $\cR_v$, 
we can write $\EX[Y_v]$ as
\[
  \EX[Y_v \mid X_v = 0]\Pr[X_v = 0] +
  \sum_{w \in S(v)} \EX[Y_v \mid P \in \cR_w]\Pr[P \in \cR_w].
\]
Since $Y_v = 0$ if $P \notin \cR_v$, 
we have $\EX[Y_v \mid X_v = 0] = 0$. 
Also, $\Pr[P \in \cR_v] = 
\sum_{w \in S(v)} \Pr[P \in \cR_w]$. 
Furthermore, by Claim~\ref{clm:node}, 
we have $\EX[Y_v \mid P \in \cR_w] \leq c$.
The claim follows.
\end{proof}

Lemma~\ref{lem:restricted->entropy} is proven using a similar construction.

\begin{proof} (of Lemma~\ref{lem:restricted->entropy}) The original tree is
restricted, so all queries are of the form $p_i \in \ell^+?$, where $\ell^+$
only depends on the current node. Our aim is to only have queries with halving lines.
Similar to the 
reduction for type (\ref{type:II}) 
comparisons in Claim~\ref{clm:node}, 
we use binary search: 
let $\ell_1$ be a halving line
for $R_i$ parallel to $\ell$. We 
compare $p_i$ with $\ell_1$. If 
this resolves the original comparison, 	
we are done. If not, we 
repeat with the halving 
line for the new region $R_i'$
stopping after $n^2$ steps. 
In each step, the success probability
is at least $1/2$, so the resulting 
comparison tree has constant expected 
depth. We apply the construction
of Lemma~\ref{lem:lin->restricted} 
to argue that for a restricted 
tree $\cT$ there is an entropy-sensitive 
version $\cT'$ whose
expected depth is higher by at 
most a constant factor.
\end{proof}

\section{Search trees and restricted searches} \label{sec:rest}

We introduce the central notion of 
\emph{restricted searches}. For this
we use the following more abstract setting:
let $\bU$ be an ordered finite set 
and $\cF$ be a distribution over $\bU$ 
that assigns each element $j \in \bU$, 
a probability $q(j)$.
Given a sequence $\{a(j) | {j \in \bU}\}$ of numbers 
and an interval $S \subseteq \bU$, we write $a(S)$ 
for $\sum_{j \in S} a(j)$.
Thus, if $S$ is an interval of $\bU$, 
then $q(S)$ is the total probability of $S$.

\begin{figure}
  \centering
  \includegraphics{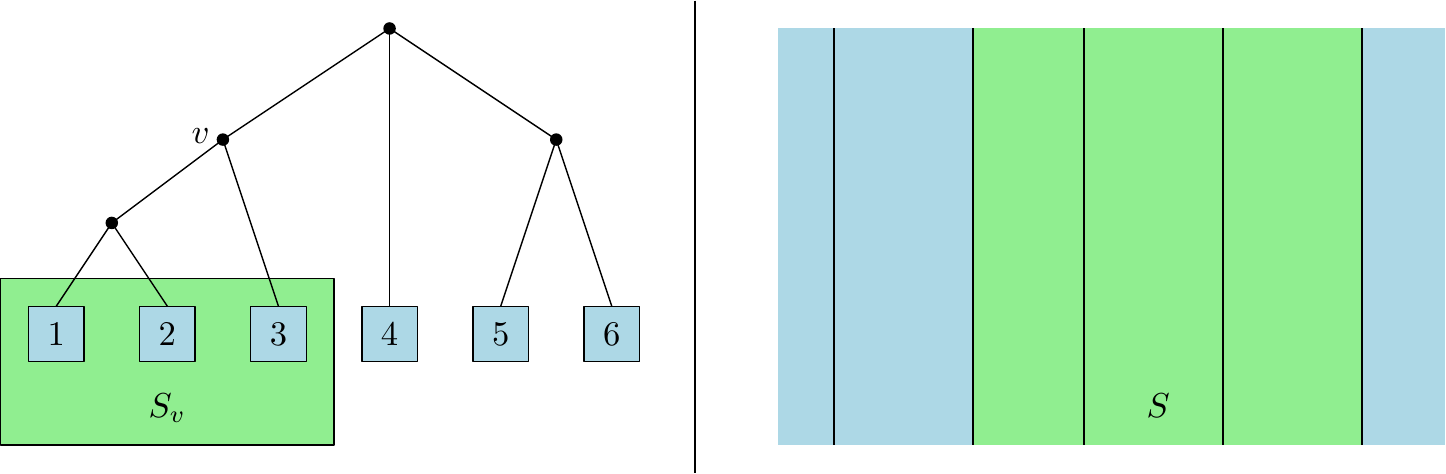}
  \caption{(left) A universe of size $6$ and a 
  search tree. The nodes are ternary, 
  with at most two internal children.
  Node $v$ represents the interval 
  $S_v = \{1,2,3\}$.
  (right) A vertical slab structure with 
  $6$ leaf slabs (including the left and
  right unbounded slab). 
  $S$ is a  slab with
  $3$ leaf slabs, $|S| = 3$.  }
 \label{fig:structures}
\end{figure}

Let $T$ be a search tree over $\bU$. We think 
of $T$ as (at most) ternary, each node having at 
most two internal nodes as children.
Each internal node $v$ of $T$ is associated
with an interval $S_v \subseteq \bU$ so that
every element in $S_v$ has $v$ on its 
search path; see Fig.~\ref{fig:structures}. 
In our setting, $\bU$ is 
the set of leaf slabs of a slab structure 
$\bS$; see Section~\ref{sec:data-str}.
We now define restricted searches.

\begin{definition}\label{def:rest}
  Let $S \subseteq \bU$  be an interval.
  An \emph{$S$-restricted distribution} 
  $\cF_S$ assigns to each $j \in \bU$
  the probability $\xi(j)/\sum_{r \in \bU} \xi(r)$, 
  where $\xi(j)$  fulfills $0 \leq \xi(j) \leq q(j)$, 
  if $j \in S$; and $\xi(j) = 0$, otherwise.  

  An \emph{$S$-restricted search} for $j \in S$ 
  is a search for $j$ in $T$ that terminates 
  as soon as it reaches the first node $v$ with 
  $S_v \subseteq S$.
\end{definition}

\begin{definition}\label{def:tree}
  Let $\mu \in (0,1)$.  A search tree 
  $T$ over $\bU$ is \emph{$\mu$-reducing} 
  if for any internal node $v$ and 
  for any non-leaf child $w$ of $v$, we 
  have $q(S_w) \leq \mu \cdot q(S_v)$. 

  The tree $T$ is \emph{$\alpha$-optimal 
  for restricted searches over $\cF$} if 
  for every interval $S \subseteq \bU$ 
  and every $S$-restricted distribution 
  $\cF_S$, the expected time of an $S$-restricted 
  search over $\cF_S$ is at most $\alpha(1-\log \xi(S))$. 
  (The values $\xi(j)$ are as in Definition~\ref{def:rest}.)
\end{definition}

Our main lemma states that a
search tree that is near-optimal for 
$\cF$ also works for
restricted distributions. 

\begin{lemma}\label{lem:search-time} 
  Let $T$ be a $\mu$-reducing search tree 
  for $\cF$. Then $T$ is $O(1/\log(1/\mu))$-optimal 
  for restricted searches over $\cF$.
\end{lemma}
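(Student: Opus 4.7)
The plan is to analyze the restricted search path for $j \in S$ by tracking how quickly it diverges from the two (unrestricted) search paths to the boundaries of $S$, and to bound those divergence depths via the $\mu$-reducing property together with the trivial bound $\Pr[\cdot] \le 1$.

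Write $S = [a, b]$ as an interval in $\bU$, and let $L = (u_0, u_1, u_2, \ldots)$ denote the search path in $T$ for an element just to the left of $a$ (with $u_0$ the root), and let $R$ denote the analogous path for an element just to the right of $b$. For fixed $j \in S$, define $k_L(j)$ (resp.\ $k_R(j)$) to be the largest index $t$ such that $u_t$ lies on the root-to-$j$ search path in $T$. The first node $v$ on the search path for $j$ satisfying $S_v \subseteq S$ is precisely the one at depth $\max(k_L(j), k_R(j)) + 1$: since every $S_v$ is an interval containing $j$, the containment $S_v \subseteq S$ fails iff $S_v$ contains an element just outside $S$, which is equivalent to the search path for $j$ still agreeing with $L$ or $R$ at that depth. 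Consequently, the restricted search for $j$ visits at most $k_L(j) + k_R(j) + 2$ nodes.

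By symmetry, it suffices to bound $\EX_{j \sim \cF_S}[k_L(j)]$. Since $u_t$ lies on the root-to-$j$ path iff $j \in S_{u_t}$, we have $k_L(j) \geq t$ iff $j \in S_{u_t}$. The $\mu$-reducing property gives $q(S_{u_t}) \le \mu^t q(S_{u_0}) = \mu^t$, and because $\xi(j) \le q(j)$ on $S$,
\[
  \Pr_{j \sim \cF_S}[j \in S_{u_t}] \;=\; \frac{\xi(S \cap S_{u_t})}{\xi(S)} \;\leq\; \frac{q(S_{u_t})}{\xi(S)} \;\leq\; \frac{\mu^t}{\xi(S)}.
\]
Combining with $\Pr[\cdot] \le 1$,
\[
  \EX[k_L(j)] \;=\; \sum_{t \geq 1} \Pr[k_L(j) \geq t] \;\leq\; \sum_{t \geq 1} \min\!\left(1,\;\mu^t/\xi(S)\right).
\]

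To finish, I split the sum at the threshold $t^\star = \lceil \log(1/\xi(S))/\log(1/\mu)\rceil$: the terms with $t \leq t^\star$ contribute at most $t^\star = O((1 - \log \xi(S))/\log(1/\mu))$, while the geometric tail from $t > t^\star$ sums to at most $\mu^{t^\star+1}/((1-\mu)\xi(S)) \le \mu/(1-\mu)$, which is $O(1/\log(1/\mu))$ uniformly on $(0,1)$ (the ratio $\mu \log(1/\mu)/(1-\mu)$ tends to $1$ at $\mu \to 1^-$ and to $0$ at $\mu \to 0^+$, hence is bounded). Adding the symmetric bound for $\EX[k_R(j)]$ and the additive $+2$ yields total expected cost $O((1 - \log \xi(S))/\log(1/\mu))$, giving $\alpha$-optimality with $\alpha = O(1/\log(1/\mu))$. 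The main obstacle is the first step: correctly identifying that the termination condition $S_v \subseteq S$ is exactly the condition that the search for $j$ has diverged from both $L$ and $R$; once that structural fact is in hand, the remainder is a standard truncated geometric-sum estimate.
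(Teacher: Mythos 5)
Your proof is correct, and it takes a genuinely different route from the paper's. The paper proceeds by induction on the height of $T$, establishing (via Claims~\ref{clm:WA} and~\ref{clm:WR}) the pointwise bound $\vis(v) \leq c_1 + c\log(q_v/\xi_v)$ for every node $v$ reachable by the restricted search with $q_v \le 1/2$, carefully separating the unique ``branching'' node $r$ from the single-child path nodes and tracking how the constants $c$, $c_1$ grow across $r$; the final bound comes from summing over the set $W$ of top-level nodes and an entropy inequality. Your argument bypasses the induction entirely: you observe that for a search tree over an ordered universe, the $S$-restricted search for $j \in S=[a,b]$ terminates exactly when $j$'s search path diverges from \emph{both} boundary paths $L$ (for the element just left of $a$) and $R$ (for the element just right of $b$), so the number of visited nodes is $\max(k_L(j),k_R(j))+2$. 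This converts the problem into bounding the two divergence depths, which the $\mu$-reducing property and the crude inequality $\xi(S_{u_t}) \le q(S_{u_t}) \le \mu^t$ reduce to a truncated geometric sum. This is shorter, avoids the two-case claims, and makes the role of $\mu$ transparent; it arguably gives a slightly cleaner dependence on $\mu$ than the chain of constant updates in the paper's induction (the paper's final $c$ is roughly $1/\log^2(1/\mu)$ near $\mu\to 1$, while your coefficient of $-\log\xi(S)$ stays $O(1/\log(1/\mu))$). Two small points you implicitly rely on: (i) you use the \emph{equivalence} ``$v$ lies on $j$'s search path iff $j\in S_v$'', whereas Definition~\ref{def:rest} only states one direction; this equivalence does hold for the trees constructed in the paper (the children's intervals partition $S_v$), and is the natural reading of a search tree, but it is worth stating; (ii) when $S$ touches an endpoint of $\bU$, one of $L,R$ is vacuous and the corresponding $k$ should be set to $0$. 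Finally, both your bound and the paper's inevitably carry an additive $O(1)$ term that cannot be absorbed into $O(1/\log(1/\mu))$ as $\mu\to 0$; this is a shared, harmless imprecision in the lemma statement (since $\mu$ is a fixed constant in all applications), not a defect of your argument.
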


\subsection{Proof of Lemma~\ref{lem:search-time}}\label{sec:restricted}

We bound the expected number 
of visited nodes in an 
$S$-restricted search. 
Let $v$ be a node of $T$.  
In the following, we use $q_v$ 
and $\xi_v$ as 
a shorthand for the values 
$q(S_v)$ and $\xi(S_v)$. 
Let $\vis(v)$ be the expected number 
of nodes visited below $v$, conditioned
on $v$ being visited.
We prove below, by induction 
on the height of $v$, that for all
visited nodes $v$ with $q_v \leq 1/2$, 
\begin{equation}\label{equ:st_induction}
  \vis(v) \leq c_1 + c\log(q_{v}/\xi_v),
\end{equation}
for some constants $c,c_1 > 0$.  

Given (\ref{equ:st_induction}), 
the lemma follows easily:
since $T$ is $\mu$-reducing, 
for $v$ at depth $k$, we 
have $q_{v} \leq \mu^k$. Hence, 
we have $q_{v} \le 1/2$ for 
all but the root and at most 
$1/\log(1/\mu)$ nodes below 
it (at each
level of $T$ there can be at most 
one node with $q_v > 1/2$).
Let $W$ be the set of nodes $w$ of $T$ 
such that $q_w \leq 1/2$, but
$q_{w'} > 1/2$, for the parent $w'$ 
of $w$. Since $T$ has bounded degree,
$|W| = O(1/\log(1/\mu))$. The expected 
number $\vis(T)$ of nodes visited 
in an $S$-restricted search is at most
\begin{align*}
  \vis(T) &\leq 
    1/\log(1/\mu) + \sum_{w \in W} \Pr_{\cF_S}[j \in S_w]\vis(w)\\ 
  &\leq 
    1/\log(1/\mu) + c_1 + 
      c\sum_{w \in W} \Pr_{\cF_S}[j \in S_w]\log(q_w/\xi_w)\\
  &\leq 
    1/\log(1/\mu) + c_1 + c\sum_{w \in W} \Pr_{\cF_S}[j \in S_w]\log(1/\xi_w),
  \intertext{using (\ref{equ:st_induction}) 
    and $q_w \leq 1$. By definition of
    $\cF_S$, we have 
    $\Pr_{\cF_S}[j \in S_w] = \xi(S_w) / \xi(S)$ $(= \xi_w/\xi(S))$, so}
  \vis(T) &\leq
      1/\log(1/\mu) + c_1 + 
      c\sum_{w \in W} \frac{\xi_w}{\xi(S)}\log(1/\xi_w)\\
    &=
      1/\log(1/\mu) + c_1 + 
      c\sum_{w \in W} \frac{\xi_w}{\xi(S)}(\log(\xi(S)/\xi_w) - \log \xi(S)).\\
   \intertext{The sum 
     $\sum_{w \in W} (\xi_w/\xi(S))\log(\xi(S)/\xi_w)$ represents
     the entropy of a distribution over $W$. 
     Hence, it is bounded by $\log |W|$. 
     Furthermore, $\sum_{w \in W} \xi_w \leq \xi(S)$, so }
    \vis(T) &\leq 
      1/\log(1/\mu) + c_1 + \log|W| - c \log \xi(S) = O(1-\log \xi(S)).
\end{align*}

It remains to prove 
(\ref{equ:st_induction}).
For this, we examine the 
paths in $T$ that an 
$S$-restricted search can lead to.
It will be helpful to consider 
the possible ways how $S$  
intersects the intervals 
corresponding to the nodes 
visited in a search. The 
intersection $S \cap S_v$ of 
$S$ with interval $S_v$ is 
\emph{trivial} if it is 
either empty, $S$, or $S_v$. 
It is \emph{anchored} if it 
shares at least one boundary 
line with $S$.  If 
$S \cap S_v = S_v$, the search 
terminates at $v$, since 
we have certified that $j \in S$. 
If $S \cap S_v = S$, 
then $S$ is contained in $S_v$.
There can be at most one child of 
$v$ that contains $S$. If such a 
child exists, the search continues 
to this child. If not, all 
possible children (to which the search can proceed to) 
are anchored. The search
can continue to any child, at most two 
of which are internal nodes.
If $S_v$ is anchored,  at most one child 
of $v$ can be anchored with $S$.
Any other child that intersects $S$ must 
be contained in it; see
Fig.~\ref{fig:anchored}.

\begin{figure}
\centering
\includegraphics{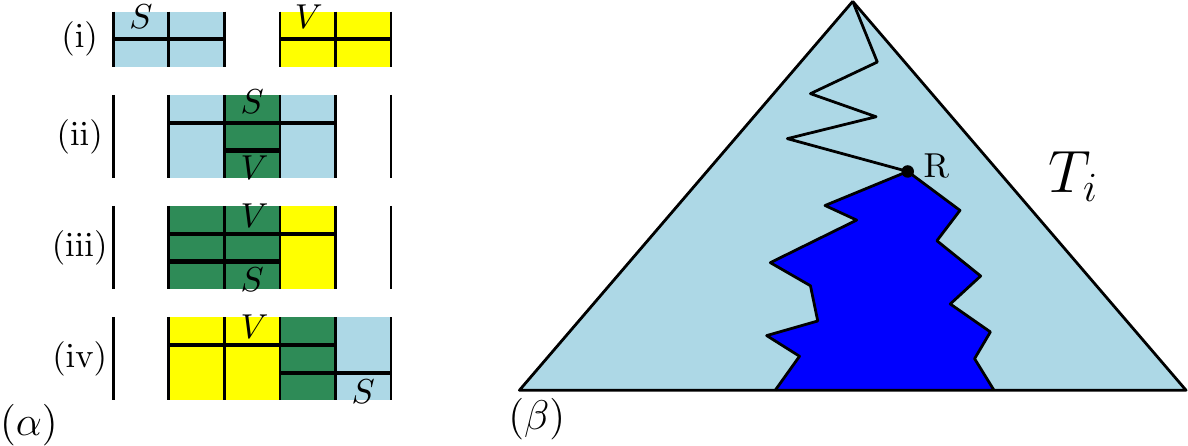}
\caption{($\alpha$) The 
intersections $S \cap S_v$ in (i)-(iii) 
are trivial, the
intersections in (iii) and (iv) 
are anchored; ($\beta$) every 
node of $T_i$
has at most one non-trivial child, 
except for $r$.}
\label{fig:anchored}
\end{figure}

Consider all nodes that 
can be visited by an $S$-restricted 
search (remove all nodes that are 
terminal, i.e., completely 
contained in $S$).  They form a 
set of paths, inducing a subtree 
of $S$. In this subtree, there 
is at most one node with two children. 
This comes from some node $r$ 
that contains $S$ and has two 
anchored (non-leaf) children. Every 
other node of the subtree has a 
single child; see Fig.~\ref{fig:anchored}.
We now prove two lemmas.

\begin{claim}\label{clm:WA}
  Let $v \neq r$ be a 
  non-terminal node that 
  can be visited by 
  an $S$-restricted
  search, and let $w$ 
  be the unique non-terminal 
  child of $v$. Suppose $q_v \le 1/2$
  and $\vis(w) \le c_1 + c\log(q_w/\xi_w)$.
  Then, for $c\ge c_1/\log(1/\mu)$, we have
  \begin{align} \label{T_i recur}
   \vis(v) \le 1 + c\log(q_v/\xi_v).
  \end{align}
\end{claim}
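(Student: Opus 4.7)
The plan is to unfold the recursion at $v$ one step and to feed in the hypothesis. Since $v \neq r$ is non-terminal with unique non-terminal child $w$, every other visited child of $v$ is terminal and contributes exactly one visited node. Writing $\beta = \xi_w/\xi_v$ for the conditional probability under $\cF_S$ of proceeding from $v$ to $w$ rather than to a terminal sibling, I would obtain
\[
  \vis(v) \;=\; \beta\bigl(1 + \vis(w)\bigr) + (1-\beta)\cdot 1 \;=\; 1 + \beta\vis(w).
\]

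Plugging in the hypothesis $\vis(w) \le c_1 + c\log(q_w/\xi_w)$ and unpacking the logarithm via the identity $\xi_w = \beta\xi_v$ together with the $\mu$-reducing bound $q_w \le \mu q_v$, one gets $\log(q_w/\xi_w) \le \log(q_v/\xi_v) + \log(1/\beta) - \log(1/\mu)$. Substituting yields
\[
  \vis(v) \;\le\; 1 + \beta c_1 - \beta c\log(1/\mu) + \beta c\log(q_v/\xi_v) + \beta c\log(1/\beta).
\]
The assumption $c \ge c_1/\log(1/\mu)$ is then used precisely to cancel $\beta c_1$ against $\beta c\log(1/\mu)$; this can be viewed as the ``rent'' earned per chain level from the $\mu$-reducing property of $T$. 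What remains after cancellation is
\[
  \vis(v) \;\le\; 1 + \beta c\log(q_v/\xi_v) + \beta c\log(1/\beta).
\]

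To obtain the desired $\vis(v) \le 1 + c\log(q_v/\xi_v)$, I need the residual term $\beta c\log(1/\beta)$ to fit inside the slack $(1-\beta)c\log(q_v/\xi_v)$. The crucial auxiliary fact here is $\beta\xi_v = \xi_w \le q_w \le \mu q_v$, which rearranges to $\beta \le \mu\cdot q_v/\xi_v$ and thereby ties $\log(1/\beta)$ to $\log(q_v/\xi_v)$. The main obstacle will be this final numerical step: I expect a short case split to close the argument, separating the regime where $\beta$ is small (so that $\beta\log(1/\beta)$ is itself tiny and the ``$1+$'' part of the bound suffices) from the chain-like regime where $\beta$ is close to $1$ (in which case the auxiliary inequality forces $\log(q_v/\xi_v) \ge \log(1/\mu) + \log\beta$ to be substantial, providing exactly the slack needed).
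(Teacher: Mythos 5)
Your decomposition is the same as the paper's at every step: both expand $\vis(v) \leq 1 + (\xi_w/\xi_v)\vis(w)$, substitute the inductive hypothesis, replace $q_w$ by $\mu q_v$, and use $c \geq c_1/\log(1/\mu)$ to kill the $\beta(c_1 + c\log\mu)$ term. Your rewriting via $\xi_w = \beta\xi_v$ is algebraically equivalent to the paper's bookkeeping (the paper keeps $\log(1/\xi_w)$ and cites monotonicity of $x\mapsto x\log(1/x)$ together with $\beta\leq 1$). Both accounts therefore reduce to exactly the residual you isolate: $\beta\log(1/\beta) \leq (1-\beta)\log(q_v/\xi_v)$.

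The gap is real, and the case split you sketch cannot close it, because that residual inequality is false under the stated hypotheses. The definitions permit $\xi_v = q_v$ even when $v$ is non-terminal: take $\xi(j) = q(j)$ for $j\in S_v\cap S$ and $q(j)=0$ for $j\in S_v\setminus S$. Then $\log(q_v/\xi_v)=0$, so the right-hand side vanishes while $\beta\log(1/\beta)>0$ for every $\beta\in(0,1)$. Your auxiliary constraint $\beta\leq \mu q_v/\xi_v=\mu$ rules out $\beta$ near $1$ but leaves $\beta$ far from $0$ (e.g.\ $\beta=\mu=1/2$), so neither branch of your split applies. Concretely, with $q_v=\xi_v=1/4$, $q_w=\xi_w=1/8$, $\mu=1/2$: the hypothesis forces $\vis(w)=1$ (a visited non-terminal child contributes at least one visited node), and then $\vis(v)=1+\beta\vis(w)=3/2$, exceeding the claimed bound of $1$. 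Note this is not specific to your write-up: the paper's own justification has the identical hole, since $\xi_w\log(1/\xi_w)\leq\xi_v\log(1/\xi_v)$ only gives $\beta\log(1/\xi_w)\leq\log(1/\xi_v)$, and combining it with ``$\beta\leq 1$'' does not control the $\beta\log q_v$ term (as $\log q_v<0$ we have $\beta\log q_v>\log q_v$, the wrong direction). A working repair is to weaken the conclusion to $\vis(v)\leq c_1+c\log(q_v/\xi_v)$ with a fixed $c_1=O(1)$, absorbing the universally bounded surplus $\beta\log(1/\beta)\leq 1/(e\ln 2)$ into $c_1$; the downstream $O(1/\log(1/\mu))$-optimality bound survives this change.
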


\begin{proof}
From the fact that when a 
search for $j$ shows that 
it is contained in a node 
contained in $S$, the 
$S$-restricted search 
is complete, it follows that
\begin{equation}\label{equ:vis-recursion}
  \vis(v) \leq 
    1 + \frac{\Pr_{\cF_S}[j \in S_w]}{\Pr_{\cF_S}[j \in S_v]}\vis(w) 
  = 1 + \frac{\xi_w}{\xi_v}\vis(w).
\end{equation}
Using the hypothesis, if follows that 
\begin{align*}
  \vis(v) &\leq 1 + \frac{\xi_w}{\xi_v}(c_1 + c\log (q_w/\xi_w)).\\
  \intertext{Since $q_w \le \mu q_v$, 
  and letting $\beta := \xi_w / \xi_v \leq 1$,
  this is}
  &\leq  1 + \beta(c_1 + c\log (q_v/\xi_w) + c\log \mu)\\ 
  &=  1 + \beta c_1 + \beta c \log q_v
   + \beta c \log(1/\xi_w) + \beta c \log \mu.
\end{align*}
The function $x \mapsto x\log(1/x)$ 
is increasing for $x \in (0,1/2)$, 
so $\xi_w\log(1/\xi_w)\leq \xi_v\log(1/\xi_v)$ 
for  $\xi_v\leq q_v\leq 1/2$.
Together with $\beta = \xi_w / \xi_v \le 1$, 
this implies
\begin{align*}
\vis(v) &\le  1 + \beta c_1 + c\log q_v 
  + c\log(1/\xi_v) + \beta c \log \mu \\
   &=  1 + c\log (q_v/\xi_v) + \beta(c_1 + c\log \mu ) 
   \leq  1 + c\log (q_v/\xi_v),
\end{align*}
for $c \ge c_1/\log(1/\mu)$.
\end{proof}

Only a slightly weaker statement 
can be made for the node $r$ having two
nontrivial intersections at child 
nodes $r_1$ and $r_2$.

\begin{claim}\label{clm:WR}
  Let $r$ be as above, and 
  let $r_1,r_2$ be the 
  two non-terminal children of
  $r$. Suppose that 
  $\vis(r_i)\le c_1 + 
  c\log (q_{r_i}/\xi_{r_i})$, 
  for $i=1,2$. Then, 
  for $c \ge c_1/\log(1/\mu)$, 
  we have
  \[
    \vis(r) \le 1 + c\log(q_r/\xi_r) + c.
  \]
\end{claim}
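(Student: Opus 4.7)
\textbf{Proof proposal for Claim~\ref{clm:WR}.} The plan is to mimic the derivation of Claim~\ref{clm:WA}, but this time the recursion at $r$ branches into two non-terminal children $r_1, r_2$, and the extra additive $c$ in the conclusion will come from a binary-entropy correction.

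First I would write down the one-step recursion. Since the $S$-restricted search proceeds from $r$ to either $r_1$ or $r_2$ (terminal children contribute at most the $1$ step charged to $r$ itself), setting $\beta_i := \xi_{r_i}/\xi_r$ gives
\[
  \vis(r) \le 1 + \beta_1\vis(r_1) + \beta_2\vis(r_2).
\]
Because $S_{r_1}$ and $S_{r_2}$ are disjoint intervals inside $S_r$ and $\xi$ is additive, we have $\beta_1+\beta_2 \le 1$. Plugging in the hypothesis $\vis(r_i)\le c_1 + c\log(q_{r_i}/\xi_{r_i})$ yields
\[
  \vis(r) \le 1 + (\beta_1+\beta_2)c_1 + c\sum_{i=1,2}\beta_i\log(q_{r_i}/\xi_{r_i}).
\]

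Next I would separate the logarithm to expose a $\log(q_r/\xi_r)$ factor on one hand and an entropy-like remainder on the other. Using $\xi_{r_i} = \beta_i\xi_r$ (by the definition of $\beta_i$) and the $\mu$-reducing property $q_{r_i}\le \mu q_r$,
\[
  \log(q_{r_i}/\xi_{r_i})
  = \log(q_{r_i}/q_r) + \log(q_r/\xi_r) + \log(1/\beta_i)
  \le \log \mu + \log(q_r/\xi_r) + \log(1/\beta_i).
\]
Summing weighted by $\beta_i$ gives
\[
  \sum_{i}\beta_i\log(q_{r_i}/\xi_{r_i})
  \le (\beta_1+\beta_2)\bigl(\log\mu + \log(q_r/\xi_r)\bigr)
  + \sum_i\beta_i\log(1/\beta_i).
\]

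Then I would handle the three resulting terms. The term $\sum_i\beta_i\log(1/\beta_i)$ is a binary entropy on probabilities summing to at most $1$, hence bounded by an absolute constant (at most $\log 2$). The combination $(\beta_1+\beta_2)(c_1 + c\log\mu)$ is $\le 0$ whenever $c\ge c_1/\log(1/\mu)$, exactly as in Claim~\ref{clm:WA}, so it can be dropped. Finally, $(\beta_1+\beta_2)\log(q_r/\xi_r)\le \log(q_r/\xi_r)$ since $\log(q_r/\xi_r)\ge 0$ and $\beta_1+\beta_2\le 1$. Assembling,
\[
  \vis(r) \le 1 + c\log(q_r/\xi_r) + c\cdot\log 2
  \le 1 + c\log(q_r/\xi_r) + c,
\]
which is the stated bound.

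The only nontrivial issue is the entropy contribution from the two-way split at $r$: unlike in Claim~\ref{clm:WA}, we cannot absorb the residual term into the negative $c\log\mu$, because the splitting introduces an irreducible $\sum_i\beta_i\log(1/\beta_i)$ of order one. This is exactly why the conclusion of Claim~\ref{clm:WR} is weaker than that of Claim~\ref{clm:WA} by the additive~$c$, and it is the reason that the overall bound in Lemma~\ref{lem:search-time} acquires the harmless additive constant when summing contributions along $W$.
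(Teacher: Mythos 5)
Your overall structure mirrors the paper's: the same one-step recursion $\vis(r)\le 1+\beta_1\vis(r_1)+\beta_2\vis(r_2)$, the same use of $q_{r_i}\le\mu q_r$, and the same disposal of $(\beta_1+\beta_2)(c_1+c\log\mu)\le 0$ via $c\ge c_1/\log(1/\mu)$. The difference is how the remaining logarithmic term is handled, and that is where a real gap appears.

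You split $\log(1/\xi_{r_i})$ as $\log(1/\beta_i)+\log(1/\xi_r)$ and then assert that $\sum_{i=1,2}\beta_i\log(1/\beta_i)\le\log 2$ because it is ``a binary entropy on probabilities summing to at most $1$.'' That inequality is \emph{false} when $\beta_1+\beta_2<1$. The function $\beta\mapsto\beta\log(1/\beta)$ peaks at $\beta=1/e$, so taking $\beta_1=\beta_2=1/e$ (which has $\beta_1+\beta_2=2/e<1$, and is permitted since the terminal child of $r$ can absorb the remaining $\xi$-mass) gives $\sum_i\beta_i\log(1/\beta_i)=2\log e/e\approx 1.06>\log 2$ in base~$2$. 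The bound $H(\beta_1,\beta_2)\le\log 2$ only applies when the $\beta_i$ sum to exactly $1$; here they sum to at most $1$. Consequently, your assembled bound would be $\vis(r)\le 1+c\log(q_r/\xi_r)+c\cdot(2\log e/e)$, which is not the stated $1+c\log(q_r/\xi_r)+c$.

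The reason the paper's version avoids this is that it does \emph{not} split the sum into an entropy piece and a $\log(1/\xi_r)$ piece; it bounds $\sum_i(\xi_{r_i}/\xi_r)\log(1/\xi_{r_i})$ in one shot, observing that this combined quantity is maximized (under $\xi_{r_1}+\xi_{r_2}\le\xi_r\le 1/2$) at $\xi_{r_1}=\xi_{r_2}=\xi_r/2$, giving $\log(2/\xi_r)$. Your two separate bounds are achieved at different choices of $(\beta_1,\beta_2)$ (one at $1/e,1/e$, the other at $1/2,1/2$), so summing them term-by-term overshoots. To repair your argument, either replace the entropy inequality by $\sum_i\beta_i\log(1/\beta_i)\le s\log(2/s)$ with $s=\beta_1+\beta_2$, and then verify $s\log(2/(s\xi_r))\le\log(2/\xi_r)$; or simply bound $\sum_i\beta_i\log(1/\xi_{r_i})\le\log(2/\xi_r)$ directly as the paper does. (As a side remark, your treatment of the $(\beta_1+\beta_2)\log(q_r/\xi_r)\le\log(q_r/\xi_r)$ step, which uses $\log(q_r/\xi_r)\ge 0$, is actually more careful than the paper's, which replaces $\beta c\log q_r$ with $c\log q_r$ without this justification — so your instinct on that part was sound; the flaw is only in the entropy step.)
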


\begin{proof}
Similar to (\ref{equ:vis-recursion}), 
we get
\[
  \vis(r) \leq 
  1 + \frac{\xi_{r_1}}{\xi_r}\vis(r_1) + \frac{\xi_{r_2}}{\xi_r}\vis(r_2).
\]
Applying the hypothesis, 
we conclude
\[
\vis(r) \leq 
  1 + \sum_{i=1}^2 \frac{\xi_{r_i}}{\xi_r}[c_1 + c\log (q_{r_i}/\xi_{r_i})].
\]
Setting $\beta := (\xi_{r_1} + \xi_{r_2})/\xi_r$ 
and using $q_{r_i}\le \mu q_r$, we get 
\begin{align*}
  \vis(r) &\leq  
    1 + \beta c_1 + \beta c \log \mu  + \beta c \log q_r  
    + c \sum_{i=1}^2 (\xi_{r_i}/\xi_r)\log(1/\xi_{r_i}).\\
\intertext{The sum is 
  maximized for $\xi_{r_1} = \xi_{r_2} = \xi_r/2$, 
  so using once again that $\beta \leq 1$, it follows that}
  \vis(r) &\leq 
  1 + \beta c_1 + \beta c \log \mu  + \beta c \log q_r  + c \log(2/\xi_r)\\
  &\leq 
    1 + \beta(c_1 + c\log \mu ) + c\log(q_r/\xi_r) + c\log 2 \\
  &\leq 
    1 + c\log(q_r/\xi_r) + c,
\end{align*}
for $c \geq c_1/\log(1/\mu)$, 
as in (\ref{T_i recur}), except 
for the addition of $c$.
\end{proof}

Now we use Claims~\ref{clm:WA} 
and~\ref{clm:WR} to prove 
(\ref{equ:st_induction}) by induction. 
The bound clearly holds for leaves.
For the visited nodes
below $r$, we may inductively 
take $c_1 = 1$ and $c=1/\log(1/\mu)$, by 
Claim~\ref{clm:WA}.
We then apply Claim~\ref{clm:WR} for $r$.
For the parent $v$ of $r$, we 
use Claim~\ref{clm:WA} with
$c_1 = 1 + 1/\log(1/\mu)$ 
and $c\ge c_1/\log(1/\mu)$, getting
$\vis(v)\le 1 + c\log(q_v/\xi_v)$. 
Repeated application of Claim~\ref{clm:WA}  
(with the given value of $c$) gives 
that this bound also holds for the
ancestors of $v$, at least up until 
the $1+1/\log(1/\mu)$ top nodes. 
This finishes the proof of 
(\ref{equ:st_induction}), and 
hence of Lemma~\ref{lem:search-time}.

\section{Auxiliary data structures} \label{sec:data-str}

We start with a simple heap-structure 
that maintains (key, index) pairs.
The indices are distinct elements of 
$[n]$, and the keys come
from the ordered universe $\{1, \dots, U\}$ ($U \leq n$). 
We store the pairs in a data structure
with operations \texttt{insert}, 
\texttt{delete} (deleting a pair),
\texttt{find-max} (finding the maximum 
key among the stored pairs), and 
\texttt{decrease-key}
(decreasing the key of a pair). 
For \texttt{delete} and
\texttt{decrease-key}, we assume the input 
is a pointer into the
data structure to the appropriate pair.

\begin{claim}\label{clm:ds} 
  Suppose there are $x$ \textup{\findmax{}} 
  operations and $y$ \textup{\deckey{}} operations,
  and that all insertions are performed at the beginning.
  We can implement the heap structure such that the total time 
  for all operations is $O(n + x + y)$. The storage requirement is
  $O(n)$.
\end{claim}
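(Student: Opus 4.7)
The plan is to implement the heap as a bucket array $B[1..U]$, where $B[k]$ is a doubly-linked list of all (key, index) pairs currently stored whose key equals $k$, together with an auxiliary array $\texttt{loc}[1..n]$ that gives, for each index $i \in [n]$ currently stored, a pointer to its node in the appropriate list. We also maintain a single variable $m \in \{0,1,\dots,U\}$ that is always an upper bound on the maximum key stored, and is updated only lazily. The total storage is $O(U + n) = O(n)$ since $U \leq n$ and each stored pair occupies constant space in exactly one list.

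The initial batch of insertions is processed in one sweep: for each input pair $(k,i)$, push it onto $B[k]$ and record $\texttt{loc}[i]$, taking $O(n)$ time overall (a standard bucket sort), and set $m$ to the largest key that appears. The operation \delete{} uses $\texttt{loc}[i]$ to unlink the node from its list in $O(1)$ time; \deckey{} from $k$ to $k' < k$ similarly unlinks the node from $B[k]$ and pushes it onto $B[k']$ in $O(1)$ time, updating $\texttt{loc}[i]$. For \findmax{}, we check whether $B[m]$ is non-empty; if so we return any element of it, otherwise we decrement $m$ repeatedly until $B[m]$ is non-empty (or $m = 0$, in which case the structure is empty) and return.

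The crucial observation, which justifies the time bound, is that $m$ never increases: because all insertions are performed at the beginning, every subsequent operation (\delete{}, \deckey{}, \findmax{}) can only leave the true maximum key the same or make it smaller. Hence across the entire sequence of operations the variable $m$ decreases monotonically from its initial value $\leq U$ down to $0$, and the total work spent in all the downward scans inside \findmax{} calls is bounded by $O(U) = O(n)$. Apart from this shared scanning cost, each \findmax{} contributes $O(1)$, each \deckey{} contributes $O(1)$, and each \delete{} contributes $O(1)$; since the number of \delete{} operations is trivially at most $n$ (one per item ever inserted), the total running time is $O(n) + O(n) + O(x) + O(y) + O(n) = O(n + x + y)$, as claimed.

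There is no real obstacle here; the only subtlety is recognizing that the ``all insertions happen at the beginning'' hypothesis is exactly what makes the maximum-key pointer monotone, so that the cost of maintaining it can be amortized against the initial $O(n)$ setup rather than charged to each \findmax{}. Everything else is standard bucket-heap bookkeeping.
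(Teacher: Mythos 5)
Your proof is correct and uses essentially the same bucket-array-with-monotone-max-pointer construction as the paper; the only cosmetic difference is that you perform the downward scan lazily inside \findmax{} whereas the paper eagerly restores the max pointer inside \delete{} and \deckey{}, but the amortization argument (the pointer only ever decreases, giving $O(n)$ total scanning cost) is identical.
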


\begin{proof} 
  We represent the heap as an array 
  of lists. For every $k \in [U]$,
  we store the list of indices with key 
  $k$. We also maintain $m$, the current maximum 
  key. The total storage is $O(n)$. A \findmax{}  
  takes $O(1)$ time, and \ins{} is 
  done by adding the element to the appropriate 
  list.  To \delete{}, we remove the element from 
  the list (assuming appropriate pointers are available), 
  and we update the maximum. If the list 
  at $m$ is non-empty, no action is required. If it is empty,
  we check sequentially if the list at $m-1, m-2, \ldots$ 
  is empty.  This eventually leads to the maximum. 
  For \deckey{}, we \delete{}, \ins{}, and then 
  update the maximum.
  Since all insertions happen at the start, the maximum 
  can only decrease, and the total overhead for finding new 
  maxima is $O(n)$.
\end{proof}

Our algorithms use several data structures
to guide the searches. A \emph{vertical slab structure} $\bS$ is a 
sequence of vertical lines that
partition the plane into open \emph{leaf slabs}. 
(Since we assume continuous distributions, we 
may ignore the case that an input point lies on
a vertical line and consider the leaf slabs to 
partition the plane.)
More generally, a \emph{slab} is the 
region between any two vertical lines of $\bS$.  
The \emph{size} of a slab $S$, $|S|$, 
is the number of leaf slabs in it. 
The size of $\bS$, $|\bS|$, is the total number 
of leaf slabs.  For any slab $S$, 
the probability that $p_i \sim \cD_i$ 
is in $S$ is denoted by $q(i,S)$.
Our algorithms construct slab 
structures in the learning phase, 
similar to the algorithm 
in~\cite{AilonCCLMS11}.

\begin{lemma}\label{lem:slabstruct}
  We can build a slab structure \textup{$\bS$} 
  with $O(n)$ leaf slabs so that the following holds
  with probability $1-n^{-3}$ over the construction:
  for a leaf slab $\lambda$ of \textup{$\bS$}, 
  let $X_\lambda$  be the number of points in a 
  random input $P$ that lie in $\lambda$.  Then 
  $\EX[X^2_\lambda] = O(1)$, for every leaf slab
  $\lambda$.  The construction 
  takes $O(\log n)$ rounds and $O(n\log^2 n)$ time.
\end{lemma}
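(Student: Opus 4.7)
}
The plan is to build $\bS$ by a standard quantile-sampling argument. In the learning phase I will take $r = c\log n$ independent inputs $P^{(1)}, \dots, P^{(r)} \sim \cD$ (these are the $O(\log n)$ rounds). Treating the $rn$ sample points as a single multiset, I sort them by $x$-coordinate. Each individual input can be sorted in $O(n\log n)$ time, and merging the $r$ sorted lists pairwise gives a total of $O(n\log^2 n)$ time, as required. The vertical lines of $\bS$ are then placed after every $k$-th sample point in the sorted order, for a suitable $k = \Theta(\log n)$. This produces $|\bS| = O(n)$ leaf slabs, each of which contains $\Theta(\log n)$ sample points by construction.

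The crux of the analysis is to argue that this choice of cuts guarantees $\mu_\lambda \eqdef \EX_{P\sim\cD}[X_\lambda] = \sum_i q(i,\lambda) = O(1)$ for every leaf slab $\lambda$. Fix any slab $S$ (not necessarily a leaf slab of $\bS$) determined by two sample points. The number $Z_S$ of sample points in $S$ satisfies $\EX[Z_S] = r\,\mu_S$, and the $Z_S$ is a sum of independent Bernoulli random variables (indicators over each $p_i^{(j)}$). A Chernoff bound gives that if $\mu_S \geq C$ for a large enough constant $C$, then $\Pr[Z_S \leq k] \leq n^{-\Omega(1)}$, provided $c$ is chosen sufficiently large relative to $k/C$ and the desired error exponent. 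Since there are at most $(rn)^2 = \poly(n)$ possible slabs whose boundaries are sample points, I can take a union bound and conclude that with probability $1 - n^{-3}$, every such slab with $Z_S \leq k$ has $\mu_S = O(1)$. In particular this holds for all $O(n)$ leaf slabs of $\bS$.

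Once $\mu_\lambda = O(1)$ for every leaf slab $\lambda$, the second-moment bound follows from the independence of the $p_i$. Writing $X_\lambda = \sum_{i=1}^n \mathbf{1}[p_i \in \lambda]$ and expanding,
\[
  \EX[X_\lambda^2] \;=\; \sum_{i} q(i,\lambda)\bigl(1-q(i,\lambda)\bigr) + \Bigl(\sum_i q(i,\lambda)\Bigr)^{\!2} \;\leq\; \mu_\lambda + \mu_\lambda^2 \;=\; O(1),
\]
which is the desired conclusion.

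The main obstacle I anticipate is the union-bound step, since the leaf slabs are selected adaptively based on the samples and one cannot directly union-bound over ``the leaf slabs that ended up being chosen''. The standard fix, which I intend to use, is to union-bound instead over the polynomially many slabs whose endpoints come from the finite pool of $rn$ sample points; the $k$-th-quantile construction guarantees that every leaf slab falls in this static family, and the Chernoff parameters can be tuned by taking $c$ (hence $r$) large enough to absorb the union-bound loss while keeping $k = \Theta(\log n)$.
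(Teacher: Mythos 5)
Your construction and analysis coincide with the paper's, which builds the same quantile-sampled slab structure (take $\Theta(\log n)$ random inputs, cut every $\Theta(\log n)$-th sample point) and cites Ailon et al.'s $V$-list lemma for the Chernoff-plus-union-bound argument that $\EX[X_\lambda]$ and $\EX[X_\lambda^2]$ are $O(1)$ with high probability. You have simply unfolded the proof the paper delegates to that citation, and your handling of the adaptivity issue (union over the polynomially many candidate slabs with sample-point endpoints, conditioning on the two boundary samples) is the standard correct fix.
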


\begin{proof} The construction is identical to
the $V$-list in Ailon \etal~\cite[Lemma~3.2]{AilonCCLMS11}: 
take $t = \log n$ random inputs
$P_1, \dots, P_t$, 
and let
$-\infty =: x_0,x_1,\ldots,x_{nt},x_{nt+1} := +\infty$
be the sorted list of the $x$-coordinates
of the points 
(extended by $-\infty$ and $\infty$).
The $n$ values 
$x_0,x_t,x_{2t},\ldots,x_{(n-1)t}$ 
define the boundaries for the slabs in $\bS$. 
Lemma~3.2 in Ailon \etal~\cite{AilonCCLMS11}
shows that  for each leaf slab $\lambda$
of $\bS$, the number $X_\lambda$ of points
in a random input $P$ that lie in $\lambda$
has $\EX_\cD[X_\lambda] = O(1)$ and 
$\EX_\cD[X_\lambda^2] = O(1)$, with 
probability at least $1-n^{-3}$ over
the construction of $\bS$.
The proof is completed by noting that sorting the
$t$ inputs $P_1, \dots, P_t$ takes 
$O(n \log^2 n)$ time.
\end{proof}

The algorithms construct a  
specialized search tree on $\bS$ for 
each distribution $\cD_i$.
It is important to store these
trees with little space.  
The following lemma gives the details
the construction.

\begin{lemma}\label{lem:tree} 
  Let $\eps > 0$ be fixed and
  $\bS$ a slab structure with $O(n)$ 
  leaf slabs. In $O(n^{\eps})$ rounds 
  and $O(n^{1+\eps})$ time, we can 
  construct search trees $T_1, \dots, T_n$ 
  over $\bS$ such that the following holds:
  \textup(i\textup) the trees can be 
  need $O(n^{1+\eps})$ total space;
  \textup(ii\textup) with probability $1-n^{-3}$ 
  over the construction, 
  each $T_i$ is $O(1/\eps)$-optimal for 
  restricted searches over $\cD_i$.
\end{lemma}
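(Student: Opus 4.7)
My plan is to build each $T_i$ as a weight-balanced binary search tree on the leaf slabs of $\bS$, with pivots chosen from samples of $\cD_i$ collected during the learning phase, and then to invoke Lemma~\ref{lem:search-time}. In the first $t = \Theta(n^\eps)$ rounds, I take one input from $\cD$ per round; for each coordinate $i$, this yields $t$ independent samples from $\cD_i$, which I locate in $\bS$ by binary search on the boundary lines ($O(\log n)$ per sample) and sort by leaf slab. The tree $T_i$ is then built recursively: the root splits at the median sample, its children split at the sub-medians, and so on, truncating a branch as soon as fewer than $c\log n$ samples remain in its interval, at which point we create a single leaf of $T_i$ representing the corresponding subinterval of leaf slabs. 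This yields $O(t/\log n)$ internal nodes per tree; the total storage is $O(n^{1+\eps})$, and preprocessing takes $O(n^{1+\eps})$ time after absorbing logarithmic factors into $\eps$.

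Next I would verify that, with probability at least $1 - n^{-3}$, every $T_i$ is $\mu$-reducing for $\mu = 2/3$. At any non-truncated internal node $v$ whose interval $S_v$ contains $k_v \geq c \log n$ samples, the pivot splits those samples into halves of sizes $\lfloor k_v/2 \rfloor$ and $\lceil k_v/2 \rceil$. A Chernoff bound on the event that a fresh draw from $\cD_i$ lands in $S_w$ then gives $q(i,S_w) \le \tfrac{2}{3}\,q(i,S_v)$ for each internal child $w$, with failure probability $n^{-\Omega(c)}$. A union bound over the $O(n^{1+\eps})$ (node, tree) pairs produces the global $n^{-3}$ failure probability once $c$ is large enough. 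Applying Lemma~\ref{lem:search-time} with $\mu = 2/3$ then gives that each $T_i$ is $O(1/\log(3/2)) = O(1) \subseteq O(1/\eps)$-optimal for restricted searches over $\cD_i$, which is property (ii).

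The main obstacle is handling the truncated parts of $T_i$, where the sample count is too small for Chernoff to guarantee the $\mu$-reducing property. I plan to exploit that a truncated leaf's interval carries probability mass at most $O(\log n / n^\eps) = n^{-\Omega(\eps)}$ under $\cD_i$, so any restricted distribution $\cF_S$ that places nontrivial weight inside such a leaf must have $-\log \xi(S) = \Omega(\eps \log n)$. This absorbs the $O(\log n)$ extra depth of a generic fall-back search structure shared across all $T_i$ (which can be stored alongside the slab structure itself, at no extra per-tree cost). Consequently, even when a restricted search descends into a truncated region, the work is charged against the $(1-\log\xi(S))$ term in the bound of Lemma~\ref{lem:search-time}, preserving $O(1/\eps)$-optimality.
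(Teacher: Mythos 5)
Your construction closely mirrors the paper's: sample $\Theta(n^\eps)\mathrm{polylog}\,n$ inputs per coordinate, build a sample-median tree, truncate once a node has too few samples, and attach a balanced fall-back. The $\mu$-reducing verification for the non-truncated part via Chernoff plus a union bound over nodes is also the same. Two issues, one minor and one substantive.

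\emph{Minor.} A pure binary median split does not in general yield a $\mu$-reducing tree: if the median leaf slab itself carries more than a $\mu$-fraction of $q(i,S_v)$, whichever child receives it inherits too much mass. The paper avoids this by making the pivot leaf slab a separate \emph{leaf} child (the tree is ternary with at most two internal children, as in Definition~\ref{def:tree}), so the $\mu$-reducing condition is imposed only on the two flanking subintervals. You should do the same.

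\emph{Substantive.} Your handling of the truncated leaves contains a gap. You claim ``any restricted distribution $\cF_S$ that places nontrivial weight inside such a leaf must have $-\log\xi(S) = \Omega(\eps\log n)$,'' but this is false as a categorical statement: one can have $\xi(S)$ close to $1$ while $\cF_S$ still assigns a small positive mass $\xi(S'\cap S)\le q(i,S') = O(\log n/n^\eps)$ to a truncated leaf $S'$ straddling a boundary of $S$; then the search descends into the fall-back with probability $\xi(S'\cap S)/\xi(S)$, yet $1-\log\xi(S)$ is $O(1)$. What saves the day is a case split you never articulate: if $\xi(S)\le n^{-\eps/2}$ then $1-\log\xi(S)=\Omega(\eps\log n)$ already absorbs the worst-case $(1+O(\eps))\log n$ depth; if $\xi(S)>n^{-\eps/2}$, then the conditional probability of descending into any of the (at most two) truncated leaves that straddle the boundary of $S$ is $O(\log n/n^{\eps/2})$, so the expected extra cost is $o(1)$. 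Without making the threshold explicit and observing that only boundary-straddling truncated leaves cost extra, ``the work is charged against $(1-\log\xi(S))$'' does not follow. The paper's Claim~\ref{clm:1/eps} sidesteps this bookkeeping more cleanly by introducing a hypothetical fully entropy-sensitive tree $T_i'$ extending the partial tree, applying Lemma~\ref{lem:search-time} to $T_i'$, and then comparing depths of $T_i$ and $T_i'$ search by search, classifying into Types 1/2/3 according to the depth of the truncated leaf that is reached and whether $\xi(S)$ is above or below $n^{-\eps/2}$.

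So your plan is fundamentally viable and close to the paper's, but the truncated-leaf argument needs the explicit case split (plus the observation about straddling leaves) or a coupling argument à la the paper's $T_i'$ to be correct.
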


Once $\bS$ is constructed,
the search trees $T_i$ can 
be found using essentially 
the same techniques in 
Ailon \etal~\cite[Section~3.2]{AilonCCLMS11}:
we use $n^\eps\log n$ rounds to 
build the first
$\eps \log n$ levels of each $T_i$, 
and we use a balanced search tree 
for searches that proceed to a deeper 
level. This only costs a factor of $1/\eps$.
The proof of Lemma~\ref{lem:tree} is 
almost the same as that in 
Ailon \etal~\cite[Section~3.2]{AilonCCLMS11},
but since we require the additional
property of restricted search optimality,
we redo it for our setting. 

\subsection{Proof of Lemma~\ref{lem:tree}}
\label{sec:learning}

Let $\delta > 0$ be a sufficiently 
small constant and $c > 0$
be sufficiently large. 
We take $t = c\delta^{-2}n^{\eps}\log n$ 
random inputs, and  for each $p_i$, we 
record the leaf slab of $\bS$ that 
contains it. We break the
proof into smaller claims.

\begin{claim} \label{clm:prob} 
  Using $t$ inputs, we can
  obtain estimates 
  $\hat{q}(i,S)$ for each input point $p_i$ 
  and each slab $S$ such that following 
  holds (for all $i$ and $S$)
  with probability at least $1 - n^{-3}$ over 
  the construction: 
  if at least $(c/10e\delta^{2})\log n$ 
  instances of $p_i$ fell in $S$,
  then $\hat{q}(i,S) \in  
  [(1-\delta)q(i,S),(1+\delta)q(i,S)]$.\footnote{We 
  remind the reader that this the
  probability that $p_i \in S$.}
\end{claim}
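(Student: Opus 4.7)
The plan is to use the empirical frequency estimator and a Chernoff bound + union bound. Define $\hat{q}(i,S) := X_{i,S}/t$, where $X_{i,S}$ is the number of the $t$ sample instances of $p_i$ that fell in the slab $S$, so $X_{i,S}$ is $\text{Binomial}(t, q(i,S))$ with mean $\mu_{i,S} := t\, q(i,S)$. The goal is to show that, with probability $\geq 1-n^{-3}$, for every pair $(i,S)$ the bad event
\[
  \cE_{i,S} \;=\; \bigl\{\,X_{i,S} \geq k\,\bigr\} \;\wedge\; \bigl\{\,\hat{q}(i,S) \notin [(1-\delta)q(i,S),(1+\delta)q(i,S)]\,\bigr\}
\]
does not occur, where $k := (c/10e\delta^{2})\log n$.

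First I would bound $\Pr[\cE_{i,S}]$ by splitting on the size of $\mu_{i,S}$. Case A is when $\mu_{i,S} \geq k/(2e)$ (the slab is ``heavy''). Here the two-sided multiplicative Chernoff bound gives $\Pr[|X_{i,S}-\mu_{i,S}|>\delta\mu_{i,S}] \leq 2\exp(-\delta^{2}\mu_{i,S}/3) \leq 2\exp(-c\log n/(60\,e^{2}))$, which is $\leq n^{-10}$ for $c$ sufficiently large relative to $\delta$; thus $\hat q(i,S)$ is within the required multiplicative window. Case B is when $\mu_{i,S} < k/(2e)$ (the slab is ``light''). Here I use the upper-tail Chernoff bound in the form $\Pr[X \geq a] \leq (e\mu/a)^{a}$ for $a \geq e\mu$, applied with $a=k \geq 2e\mu_{i,S}$, to obtain $\Pr[X_{i,S}\geq k] \leq (1/2)^{k} = n^{-\Omega(c/\delta^{2})}$, again $\leq n^{-10}$ for $c$ large; so the triggering condition itself fails w.h.p.

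Finally I would union-bound over all $(i,S)$. There are $n$ choices of $i$ and at most $O(n^{2})$ slabs in $\bS$ (since $\bS$ has $O(n)$ leaf slabs, by Lemma~\ref{lem:slabstruct}, and every slab is determined by two leaf-slab boundaries), giving $O(n^{3})$ pairs. Choosing $c$ so that the per-pair failure probability is $\leq n^{-6}$ yields a total failure probability $\leq n^{-3}$, as required.

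The only mildly delicate point is the case split: it is tempting to try a single two-sided Chernoff bound in terms of $\mu_{i,S}$, but since the hypothesis of the claim is stated in terms of the \emph{empirical} count $X_{i,S}$ rather than $\mu_{i,S}$, one must separately dispose of the ``very light'' slabs where $\mu_{i,S}$ is too small for a relative-error Chernoff to be meaningful, by showing that the empirical count cannot exceed $k$ for these slabs. Once this two-case structure is in place, the rest is a routine Chernoff calculation.
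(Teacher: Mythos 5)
Your proof is correct and takes essentially the same route as the paper: both argue by a case split on the true mean $\mu_{i,S}$, using an upper-tail Chernoff bound in the ``light'' case to show the triggering condition $X_{i,S}\geq k$ fails w.h.p., a two-sided multiplicative Chernoff bound in the ``heavy'' case for concentration, and a union bound over the $O(n^{3})$ pairs $(i,S)$. If anything, your version is slightly cleaner: by splitting at $\mu_{i,S}\lessgtr k/(2e)$, the light-case tail bound directly rules out $X_{i,S}\geq k$ at exactly the threshold $k=(c/10e\delta^{2})\log n$ stated in the claim, whereas the paper's light-case argument bounds $\Pr[N(i,S)>(c/5\delta^{2})\log n]=\Pr[N(i,S)>2ek]$ and thus only covers the hypothesis of the claim if the threshold there is read as $2ek$ rather than $k$.
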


\begin{proof}
Fix $p_i$ and $S$, and
let $N(i,S)$ be the 
number of times $p_i$
was in $S$. 
Let $\hat{q}(i,S) = N(i,S)/t$ 
be the empirical 
probability for this event.
$N(i,S)$ is a sum of independent
random variables, and
$\EX[N(i,S)] = tq(i,S)$.
If $\EX[N(i,S)] < (c/10e\delta^{2})\log n$, 
then 
$2e\EX[N(i,S)] < (c/5\delta^2)\log n $, so
by a Chernoff 
bound~\cite[Theorem~1.1, Eq. (1.8)]{DubhashiPa09}, 
\[
  \Pr[N(i,S) > (c/5\delta^2)\log n] \leq 2^{-(c/5\delta^2)\log n} \leq n^{-6}.
\]
Hence, with probability at 
least $1-n^{-6}$, if 
$N(i,S) > (c/5\delta^2)\log n$, 
then $\EX[N(i,S)] \geq (c/10e\delta^{2})\log n$.
If $\EX[N(i,S)] \geq (c/10e\delta^{2})\log n$,  
multiplicative Chernoff 
bounds~\cite[Theorem 1.1, Eq. (1.7)]{DubhashiPa09}
give
\[
  \Pr[N(i,S) \notin [(1-\delta)\EX[N(i,S)], (1+\delta)\EX[N(i,S)]]] < 
    2\exp(-\delta^2\EX[N(i,S)]/3) < n^{-6}.
\]
The proof is completed by taking a union bound over all $i$ and $S$.
\end{proof}

Assume that the event
of Claim~\ref{clm:prob} holds.
If at least $(c/10e\delta^{2})\log n$ 
inputs fell in $S$, then 
$\hat{q}(i,S) = \Omega(n^{-\eps})$
and $q(i,S) = \Omega(n^{-\eps})$.
The tree $T_i$ is constructed recursively.
We first create a partial search tree, where
some leaves may not correspond to leaf slabs.
The root of $T_i$ corresponds to 
$\R^2$. Given a slab
$S$, we proceed as follows:
if $N(S) < (c/10e\delta^{2})\log n$, we make
$S$ a leaf. If not, we pick a leaf slab 
$\lambda$ such that the subslab 
$S_l \subseteq S$ with all leaf slabs 
strictly to the left of $\lambda$ and
the subslab $S_r \subseteq S$ with all 
leaf slabs strictly 
to the right of $\lambda$ have 
$\hat{q}(i,S_l) \leq (2/3)\hat{q}(i,S)$ and
$\hat{q}(i,S_r) \leq (2/3)\hat{q}(i,S)$. 
We make $\lambda$ a leaf child of $S$, and 
we recursively create trees for 
$S_l$ and $S_r$ and attach them to $S$. 
For any internal node $S$, we have 
$q(i,S) = \Omega(n^\eps)$,
so the depth is $O(\eps \log n)$. 
Furthermore,
the partial tree $T_i$ is $\beta$-reducing 
(for some constant $\beta$).
We get a complete tree
by constructing a balanced tree
for each $T_i$-leaf that is not a leaf slab. 
This yields a tree  of depth 
at most $(1+O(\eps))\log n$.
We only need to store the 
partial tree, 
so the total space 
is $O(n^{1+\eps})$.

\begin{claim}\label{clm:1/eps} 
  The tree $T_i$ is $O(1/\eps)$-optimal
  for restricted searches. 
\end{claim}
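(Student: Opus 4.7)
The plan is to decompose $T_i$ into the upper partial tree of depth $O(\eps \log n)$ built adaptively from samples, and the balanced subtrees attached at each non-leaf-slab leaf (\emph{interface node}) of the partial tree. I would then bound the expected cost of an $S$-restricted search separately over the two portions and combine.

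First, I would verify that the upper partial tree is $\mu$-reducing in the sense of Definition~\ref{def:tree} for a constant $\mu = 3/4$: by construction each internal child $w$ of an internal node $v$ satisfies $\hat{q}(i, S_w) \leq (2/3) \hat{q}(i, S_v)$, and combining this with Claim~\ref{clm:prob} (choosing $\delta$ small enough) yields $q(i, S_w) \leq (3/4) q(i, S_v)$. Applying Lemma~\ref{lem:search-time} to the upper tree viewed as a stand-alone search structure on $\bU$, with interface nodes acting as leaves for the base case of the induction, bounds the expected number of upper-tree nodes visited during an $S$-restricted search by $O(1 - \log \xi(S))$.

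Next, I would handle the balanced subtrees. Because $S$ is an interval of $\bU$ and the interface nodes partition $\bU$ into contiguous intervals, at most two interface nodes $u$ can have $S_u \cap S \neq \emptyset$ yet $S_u \not\subseteq S$: namely, those containing the left and right boundaries of $S$. For every other interface node intersecting $S$ we have $S_u \subseteq S$, so the $S$-restricted search terminates at $u$ without entering the balanced subtree. Each balanced subtree has depth $O(\log n)$, so $\vis_{\text{balanced}} \leq O(\log n)$ deterministically. Moreover, the stopping criterion of the construction combined with Claim~\ref{clm:prob} forces $q(i, S_u) = O(n^{-\eps})$ at every interface node $u$, so the probability under $\cF_S$ of descending into a given boundary-straddling balanced subtree is $\xi(S_u \cap S)/\xi(S) \leq q(i, S_u)/\xi(S) = O(n^{-\eps}/\xi(S))$. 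Combining, $\EX_{\cF_S}[\vis_{\text{balanced}}] \leq O(\min(\log n,\, n^{-\eps} \log n / \xi(S)))$.

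Finally, I would combine the two contributions via a case split on $\xi(S)$. When $\xi(S) \leq n^{-\eps}$ the deterministic bound $O(\log n)$ suffices, since $(1/\eps)(1 - \log \xi(S)) \geq \log n$. When $\xi(S) > n^{-\eps}$, writing $\xi(S) = 2^{-t}$ with $t \in [0, \eps \log n)$, a short calculation shows $2^{t - \eps \log n} \log n = O((1 + t)/\eps)$, so the probability bound $O(n^{-\eps} \log n / \xi(S))$ is again $O((1/\eps)(1 - \log \xi(S)))$. Adding the upper-tree bound gives $\EX_{\cF_S}[\vis] = O((1/\eps)(1 - \log \xi(S)))$, establishing $O(1/\eps)$-optimality. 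The main obstacle is this combination step: carefully tuning the case split so that the balanced-subtree cost is absorbed into $O((1/\eps)(1 - \log \xi(S)))$ in both regimes, together with the bookkeeping needed to apply Lemma~\ref{lem:search-time} to a tree whose leaves are not singletons of $\bU$.
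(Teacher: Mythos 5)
Your proposal is correct and reaches the same bound, but it follows a genuinely different route from the paper. The paper's proof introduces a \emph{thought-experiment tree} $T_i'$: it extends the partial tree using the \emph{true} probabilities $q(i,\cdot)$ so that $T_i'$ is $\beta$-reducing all the way down, invokes Lemma~\ref{lem:search-time} to conclude $T_i'$ is $O(1)$-optimal, and then relates the search cost in $T_i$ to that in $T_i'$ by classifying searches into three types according to the depth of the partial-tree leaf they reach (same in both trees / deep enough that the worst-case $O(\log n)$ depth of $T_i$ is within $O(1/\eps)$ of the $T_i'$ depth / shallow but with probability $O(n^{-\eps/6})$), together with a preliminary case split on $\xi(S)$ versus $n^{-\eps/2}$. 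You instead avoid $T_i'$ entirely: you apply Lemma~\ref{lem:search-time} directly to the partial tree viewed as a stand-alone $\mu$-reducing tree (leaves being interface nodes and the split-off leaf slabs), and you charge the balanced-subtree cost separately by a probability-times-depth argument, using the facts that at most two interface nodes can straddle a boundary of $S$, that $q(i,S_u) = O(n^{-\eps})$ at any non-leaf-slab interface node $u$, and that the full tree has depth $O(\log n)$. Your case split on $\xi(S)$ versus $n^{-\eps}$ then absorbs the balanced-subtree term into $O(\eps^{-1}(1-\log\xi(S)))$. Both proofs use the same underlying ingredients ($\beta$-reducibility of the partial tree, the $O(n^{-\eps})$ bound at interface nodes, the $(1+O(\eps))\log n$ worst-case depth, and a case split); what yours buys is modularity and the avoidance of the auxiliary tree $T_i'$, at the mild cost of the bookkeeping you flag yourself, namely checking that the proof of Lemma~\ref{lem:search-time} goes through verbatim for a search tree whose leaves are intervals rather than singletons (it does: the base case of the induction only needs $\vis(\text{leaf})=0$, and the $\mu$-reducing condition is only imposed on non-leaf children, so interface-node leaves cause no trouble).
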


\begin{proof}
Fix an $S$-restricted
distribution $\cF_S$. For each leaf
slab $\lambda$, let $q'(i,\lambda)$ 
be the probability according to 
$\cF_S$. Note that 
$q'(i,S) \leq q(i,S)$.
If $q'(i,S) \leq n^{-\eps/2}$, 
then $-\log q'(i,S) \geq \eps (\log n)/2$.
Any search in $T_i$ takes at 
most $(1+O(\eps))\log n$ steps,
so the search time is
$O(\eps^{-1}(-\log q'(i,S) + 1))$.

Now suppose $q'(i,S) > n^{-\eps/2}$. 
Consider a search for $p_i$.
We classify the search according to 
the leaf that it reaches in
the partial tree.
By construction, any leaf $S'$
of $T_i$ is either a leaf slab or
has $q(i,S') = O(n^{-\eps})$.
The search is of \emph{Type 1} 
if the leaf of the partial tree 
represents a leaf slab (and 
hence the search terminates). 
The search is of \emph{Type 2} 
(resp. \emph{Type 3}) if the 
leaf of the partial tree 
is an internal node of $T_i$
and the depth is at least 
(resp.~less than) $\eps(\log n)/3$. 

As a thought experiment,
we construct a related tree 
$T'_i$: start with the partial 
$T_i$, and for every leaf that 
is not a leaf slab, extend it 
using the true 
probabilities $q(i,S)$. That  
is, construct the 
subtree rooted at a new node $S$
in the following manner: pick a leaf 
slab $\lambda$ with
$q(i,S_l) \leq (2/3)q(i,S)$ and
$q(i,S_r) \leq (2/3)q(i,S)$ 
(with $S_l$ and $S_r$ as above). 
This ensures that $T'_i$ is 
$\beta$-reducing.
By Lemma~\ref{lem:search-time}, $T'_i$ 
is $O(1)$-optimal for restricted 
searches over $\cF_i$ (we absorb 
$\beta$ into the $O(1)$).

If the search is of Type 1, 
it is identical in both $T_i$ 
and $T'_i$. If
it is of Type 2, it 
takes at least $\eps(\log n)/3$ 
steps in $T'_i$ and at most
$(1+O(\eps))(\log n)$ steps 
in $T_i$. 
Consider Type 3 searches.
The total number of leaves 
(that are not leaf slabs)
of the partial tree at depth 
less than $\eps(\log n)/3$ is 
at most $n^{\eps/3}$.
The total probability mass 
of $\cF_i$ on such leaves is 
$O(n^{\eps/3}\cdot n^{-\eps}) < O(n^{-2\eps/3})$.
Since $q'(i,S) > n^{-\eps/2}$,
the probability of a Type 3 
search is at most $O(n^{-\eps/6})$.

Choose a random $p_i \sim \cF_S$.
Let $\cE$ be the event that 
a Type 3 search occurs. Furthermore, 
let $Z$ be the depth of the search 
in $T_i$ and $Z'$ be
the depth in $T'_i$. 
If $\cE$ does not occur, 
we have argued that $Z = O(Z'/\eps)$.
Also, $\Pr(\cE) = O(n^{-\eps/6})$. 
The expected search time is $\EX[Z]$.
Hence,
\begin{align*}
  \EX[Z] =  
    \Pr[\overline{\cE}] \EX_{\overline{\cE}}[Z] + 
    \Pr[\cE]\EX_{\cE}[Z] 
  &\leq  
    O(\eps^{-1}\EX_{\overline{\cE}}[Z']) +  n^{-\eps/6}(1+O(\eps))\log n \\
  &= O(\eps^{-1}\EX_{\overline{\cE}}[Z'] + 1).
\end{align*}	
Since $\Pr[\overline{\cE}] > 1/2$,
$\EX_{\overline{\cE}}[Z'] \leq 
2\Pr[\overline{\cE}] \EX_{\overline{\cE}}[Z'] 
\leq 2\EX[Z']$.
Combining everything, the 
expected search time is 
$O(\eps^{-1}\EX[Z'] + 1)$. 
Since $T'_i$ is $O(1)$-optimal 
for restricted searches, 
$T_i$ is $O(\eps^{-1})$-optimal.
\end{proof}

\section{A self-improving algorithm for coordinate-wise maxima}

We begin with an informal overview.
If $P$ is sorted by $x$-coordinate, we
can do a
right-to-left sweep:
we maintain the maximum $y$-coordinate $Y$ 
seen so far. When a point $p$ is visited,
if $y(p) < Y$, then $p$ is non-maximal, and the point $q$ with
$Y=y(q)$ gives a per-point certificate for $p$.
If $y(p)\ge Y$, then $p$ is maximal. We update $Y$ and put
$p$ at the beginning of the maxima list of $P$.
This suggests the following approach to a 
self-improving algorithm: sort $P$ with a 
self-improving sorter and then do the 
sweep. The sorting algorithm of \cite{AilonCCLMS11} 
works by locating each point of $P$ within 
the slab structure $\bS$ of 
Lemma~\ref{lem:slabstruct} using the trees $T_i$ of
Lemma~\ref{lem:tree}.

As discussed in Section~\ref{sec:results}, 
this does not work. We need another approach:
as a thought experiment, suppose that the maximal
points of $P$ are available, though not in sorted order. 
We locate the maxima in $\bS$ and 
determine their sorted order. 
We can argue that the optimal algorithm must also
(in essence) perform such a search. To find 
per-point certificates for the non-maximal points,
we use the slab structure $\bS$ and the search trees,
proceeding very conservatively.
Consider the search for a point $p$. At any intermediate
stage, $p$ is placed in a slab $S$.
This rough knowledge of $p$'s location 
may be enough to certify 
its non-maximality: 
let $m$ denote the leftmost maximal
point to the right of $S$ (since the 
sorted list of maxima is known, this information
can be easily deduced). We check if $m$ dominates 
$p$. If so, we have a per-point 
certificate, and we terminate the search. 
Otherwise, we continue the search by a single 
step in the search tree for $p$, and we repeat. 

Non-maximal points that are dominated 
by many maximal points should 
have a short search, while points
that are ``nearly'' maximal should 
need more time. 
Thus, this approach
should derive just the ``right" amount 
of information to determine the maxima.
Unfortunately, our thought experiment requires 
that the maxima be known.
This, of course, is too much to ask, and 
due to the strong dependencies, it is not clear
how to determine the maxima before performing 
the searches.

The final algorithm overcomes this 
difficulty by interleaving the searches 
for sorting the points with confirmation
of the maximality of some points, in 
a rough right-to-left order
that is a more elaborate version of the traversal scheme given
above. 
The searches for all points $p_i$ (in their 
respective trees $T_i$) are performed 
``together'', and their order is carefully chosen. 
At any intermediate
stage, each point $p_i$ is located in 
some slab $S_i$, represented by
a node of its search tree. We 
choose a specific point and advance
its search by one step. This choice is 
very important, and is the basis
of optimality. The algorithm is 
described in detail and analyzed in 
Section~\ref{sec:algorithm}.

\subsection{Restricted Maxima Certificates}

We modify the maxima certificate 
from Definition~\ref{def:cert-max} in order
to get easier proofs of optimality.
For this, we need the following observation,
see Fig.~\ref{fig:domregion}.

\begin{prop}\label{prop:dom} 
  Let $\cT$ be a linear comparison
  tree for computing the maxima.
  Let $v$ be a leaf of $\cT$ 
  and $R_i$ be the region 
  associated with non-maximal point 
  $p_i \in P$ in $\cR_v$. 
  There is a region $R_j$
  associated with a maximal point $p_j$ 
  such that every point in $R_j$
  dominates every point in $R_i$.
\end{prop}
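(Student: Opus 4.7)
The plan is to exploit the Cartesian product structure of $\cR_v$. By Lemma~\ref{lem:lin->restricted}, we may assume $\cT$ is restricted at only a constant-factor cost and, after truncating, has worst-case depth $O(n^2)$; Proposition~\ref{prop:Cartesian} then decomposes $\cR_v$ as a product $\prod_k R_k$ of convex planar polygons, one per input coordinate.

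The leaf $v$ carries a maxima certificate that is valid for every $P \in \cR_v$. Combinatorially, this certificate fixes the set $M_v$ of maximal indices and, for each non-maximal index $k$, a per-point pointer $c(k)$ such that the corresponding input always dominates $p_k$. Starting from the given $i$, I would follow the pointer chain $i, c(i), c(c(i)), \dots$ until the first index $j$ that lies in $M_v$. The chain is purely syntactic data at $v$, so $j$ depends only on $v$ and $i$, not on $P$.

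To see that the chain terminates, fix any $P \in \cR_v$ (one exists because $v$ is reachable) and observe that along the chain $p_{c(k)}$ dominates $p_k$. Continuity of the $\cD_k$ implies that with probability one all input coordinates are distinct, so the dominance relations along the chain are strict in at least one coordinate and no cycle can occur; hence the chain must enter $M_v$ within at most $n$ steps. Because $j$ is intrinsic to $v$, the same index $j$ works for every $P \in \cR_v$, so $p_j$ dominates $p_i$ pointwise on $\cR_v$. Finally, by the product structure, for any $p^* \in R_j$ and any $q^* \in R_i$ I can assemble a tuple $P \in \cR_v$ with $p_j = p^*$ and $p_i = q^*$ by picking the remaining coordinates arbitrarily from their $R_k$'s; applying the previous sentence to this $P$ yields that $p^*$ dominates $q^*$.

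The main subtlety, and essentially the only one, is the uniformity of $j$ over $\cR_v$: the pointer $c(\cdot)$ is combinatorial information stored at a single leaf, and must collectively certify a maximal dominator for every realization simultaneously. Continuity is what makes this work, both to rule out degenerate cycles in the pointer chain and to freely splice coordinates from different realizations through the Cartesian product of the $R_k$'s.
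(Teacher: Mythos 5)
Your proof is correct and follows essentially the same path as the paper's: both first "walk" the per-point certificate pointers to a maximal index $j$ that is determined by the leaf $v$ alone, and both then use the Cartesian-product structure of $\cR_v$ to splice arbitrary $p^*\in R_j$, $q^*\in R_i$ into a single realization $P\in\cR_v$ on which the fixed certificate must still be valid, forcing $p^*$ to dominate $q^*$. The paper packages the pointer-walk as a directed graph on $[n]$ with out-degree $\le 1$ and no cycles (dominance is an order), so the components are in-trees rooted at maximal indices; your pointer-chain argument is the same thing unrolled. One small point of calibration: the proposition, by speaking of "the region $R_i$ associated with $p_i$ in $\cR_v$," already presupposes that $\cT$ is restricted and $\cR_v$ is a product of planar regions; the paper simply takes this for granted, so your invocation of Lemma~\ref{lem:lin->restricted} to convert $\cT$ is unnecessary (and would actually change the tree and hence the $R_i$'s referenced in the statement — better to treat the restricted hypothesis as given). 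Your appeal to continuity to rule out cycles along the pointer chain is a slightly more explicit version of the paper's remark that dominance has no cycles, and is the right thing to say; alternatively one can observe directly that a cycle would force two of the open polygons $R_u, R_w$ to coincide with a single point, which is impossible for nonempty open sets.
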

\begin{figure}
\centering
\includegraphics{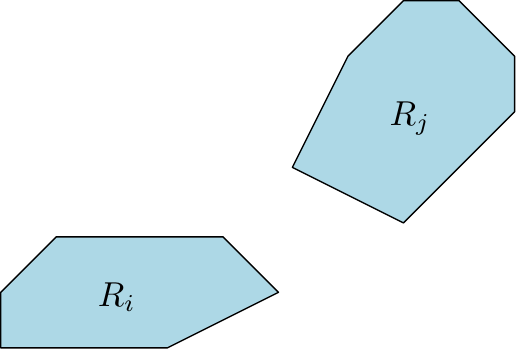}
\caption{Every point in $R_j$ dominates every point in $R_i$.}
\label{fig:domregion}
\end{figure}

\begin{proof} 
The leaf $v$ is associated with a 
certificate $\gamma$ that is valid 
for every input that reaches $v$. 
The certificate $\gamma$ associates 
the non-maximal point $p_i$ with 
$p_j$ such that $p_j$ dominates $p_i$. 
For any input $P$ in $\cR_v$, $p_j$ 
dominates $p_i$. First, we argue 
that $p_j$ can be assumed to be 
maximal. We construct a directed graph 
$G$ with vertex set $[n]$
such that $G$ has an edge $(u,v)$ 
if and only if (according to 
$\gamma$) $p_u$ is dominated by 
$p_v$. All vertices have outdegree at most 
$1$, and there are no cycles in $G$ 
(since dominance is transitive).
Hence, $G$ consists of trees with 
edges directed towards the root.
The roots are maximal vertices, and any 
point in a tree is dominated by the point 
corresponding to the root. We can thus
rewrite $\gamma$ so that all dominating 
points are extremal.

Since $\cT$ is restricted, the 
region $\cR_v \subseteq \R^{2n}$ 
for $v$ is a Cartesian product 
of polygonal regions $R_1, \dots, R_n$.
Suppose there are two points
$p_i \subseteq R_i$ and $p_j \subseteq R_j$
such that $p_j$ does not dominate $p_i$. 
Take an input $P$ where the
remaining points are arbitrarily chosen from their 
respective regions. The certificate $\gamma$
is not valid for $P$, contradicting the 
nature of $\cT$.
Hence, every point in $R_j$ dominates every point 
in $R_i$. 
\end{proof}

We need points 
in the maxima certificate to be 
``well-separated'' according to 
the slab structure $\bS$. 
By Proposition~\ref{prop:dom},
every non-maximal point is associated with a dominating
region.

\begin{definition}\label{def:s-label}
  Let $\bS$ be a slab structure. 
  A maxima certificate for an input $P$ is 
  \emph{$\bS$-labeled} if (i)
  every maximal point is labeled with 
  the leaf slab of $\bS$ containing it; and
  (ii) every non-maximal point is either 
  placed in the containing leaf slab,
  or is separated from its dominating 
  region by a slab boundary.
\end{definition}

A tree $\cT$ \emph{computes the $\bS$-labeled
maxima} if the leaves are associated with 
$\bS$-labeled certificates.

\begin{lemma}\label{lem:comp} 
  There is an entropy-sensitive comparison 
  tree $\cT$ for computing the $\bS$-labeled 
  maxima whose expected depth over $\cD$ is 
  $O(n + \textup{\OPTMAX}_\cD)$.
\end{lemma}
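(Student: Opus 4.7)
\medskip
\noindent\textit{Proof plan.}
The plan is to build $\cT$ in three stages, starting from a near-optimal linear comparison tree for ordinary (unlabeled) maxima and patching in enough additional queries to produce an $\bS$-labeled certificate at every leaf.

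First, I would fix a linear comparison tree $\cT^*$ that computes the maxima and has expected depth at most $\OPTMAX_\cD + 1$. By truncating at depth $n^2$ and appending any worst-case $O(n\log n)$ maxima routine below the truncation, I can assume the worst-case depth of $\cT^*$ is $O(n^2)$, which costs only a $1+o(1)$ factor in expectation. Then I apply Lemma~\ref{lem:lin->entropy} to obtain an entropy-sensitive comparison tree $\cT'$ computing the (ordinary) maxima, with expected depth $O(\OPTMAX_\cD)$. By Proposition~\ref{prop:Cartesian}, every leaf $v$ of $\cT'$ at depth $\leq n^2$ has $\cR_v = \prod_i R_i$ with $R_i$ a convex polygon; by Proposition~\ref{prop:dom}, every non-maximal $p_i$ is associated at $v$ with a region $R_j$ whose every point dominates every point of $R_i$.

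Second, at each leaf $v$ of $\cT'$, I attach a subtree that refines the certificate to an $\bS$-labeled one. For each index $i$, let $L_i(v)$ be the set of leaf slabs of $\bS$ that meet $R_i$. When $p_i$ is maximal at $v$, I locate $p_i$ in $\bS$ by binary search using vertical halving lines for the current subregion of $R_i$; since each halving line halves the probability mass, after at most $\log|L_i(v)| + O(1)$ steps the surviving region lies in a single leaf slab. When $p_i$ is non-maximal with dominator region $R_j$, the search can stop as soon as a vertical halving line separates $R_i$ from $R_j$ (a slab boundary falling in between then yields the required labeling); otherwise I fall back to the same localization process. All added queries are halving lines for the region they act on, so the augmented tree is again entropy-sensitive (and the total depth stays well below the $n^2$ entropy-sensitive horizon with high probability).

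The main obstacle, and the step where the hypotheses on $\bS$ really enter, is to bound the expected extra depth by $O(n)$. The cost added to a leaf $v$ is $\sum_i O(\log|L_i(v)|+1)$, so the expected extra depth equals $\sum_i \EX_{P\sim\cD}\bigl[\log|L_i(v(P))|\bigr] + O(n)$, and it suffices to prove $\EX[\log|L_i(v(P))|] = O(1)$ for each $i$. I would prove this by charging: for a random $P$, the point $p_i$ lies in a random leaf slab $\lambda$ whose probability $\Pr_{\cD_i}[p_i\in\lambda]$ has, by Lemma~\ref{lem:slabstruct} and $\EX[X_\lambda^2] = O(1)$, the right concentration to control the chance that the entropy-sensitive partition at $v$ leaves $R_i$ spanning many slabs. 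Concretely, I would partition the leaves of $\cT'$ according to the leaf slab containing $p_i$, apply Jensen's inequality leaf-by-leaf to pull the $\log$ inside the expectation, and then use that the total $\cD_i$-mass of slabs meeting $R_i$ is at most $\Pr[p_i\in R_i]$, which the entropy-sensitive tree has already "paid for" at $v$ via Proposition~\ref{prop:entropyDepth}. Combining the three stages yields an entropy-sensitive tree for $\bS$-labeled maxima of expected depth $O(\OPTMAX_\cD) + O(n) = O(n+\OPTMAX_\cD)$, as required.
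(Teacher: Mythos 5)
Your high-level plan (start from a near-optimal tree, append per-leaf work that produces the $\bS$-labels, then land on an entropy-sensitive tree) is the right shape, but you have swapped the order in a way that creates real trouble and forces you to assert bounds that you have not proved and that are likely false as stated.

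The paper does the following: at each leaf of a (restricted) near-optimal tree it appends $O(n)$ \emph{ordinary} type (\ref{type:I}) comparisons — a merge of the sorted maximal list with the $O(n)$ slab boundaries, which labels every maximal point in $O(|M|+|\bS|) = O(n)$ total comparisons by amortization, plus a single comparison per non-maximal point against the left boundary of the slab $\lambda$ containing its dominator's region. Only \emph{then} is Lemma~\ref{lem:lin->entropy} invoked once, converting the whole tree to an entropy-sensitive one at a constant-factor cost. You instead apply Lemma~\ref{lem:lin->entropy} first to obtain $\cT'$ and then try to append queries that are \emph{themselves} halving lines so as to preserve entropy-sensitivity. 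This is where the gap opens.

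Concretely: (1) A halving-line binary search does \emph{not} localize $p_i$ to a leaf slab in $\log|L_i(v)| + O(1)$ steps. Halving lines halve $\cD_i$-mass, not the slab count; the number of halving steps needed to reach a region inside a leaf slab $\lambda$ is roughly $\log\bigl(\Pr[p_i\in R_i]/q(i,\lambda)\bigr)$, which can be $\Theta(\log n)$ even when $|L_i(v)|$ is small. Your statement ``all added queries are halving lines'' and your cost bound ``$\sum_i O(\log|L_i(v)|+1)$'' cannot both hold — if you binary-search over the slab boundaries to get the $\log|L_i(v)|$ bound, those boundary comparisons are not halving lines, and your tree is no longer entropy-sensitive. (2) The claim $\EX[\log|L_i(v(P))|]=O(1)$ per index $i$ is not justified and I do not see how $\EX[X_\lambda^2]=O(1)$ together with Jensen gives it; indeed if $\cD_i$ is diffuse and $p_i$ is almost always dominated, the optimal tree need not refine $R_i$ at all, so $|L_i(v)|$ can remain $\Theta(n)$. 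What is actually true — and what the merge exploits — is the \emph{summed} bound for maximal indices only: at a leaf the maximal regions are $x$-disjoint (they must be, for the sorted certificate to be valid), so $\sum_{i\text{ maximal}}|L_i(v)| = O(n)$; but this does not follow from your per-$i$ claim, and it does not cover the non-maximal points, which the paper handles with a single extra comparison each (not a localization). The cleanest fix is to abandon the halving-line requirement on the appended subtree, use the paper's merge-plus-one-comparison construction (any fixed lines are fine), and apply Lemma~\ref{lem:lin->entropy} once at the very end; this also makes the ``truncate at $n^2$ then append a worst-case routine'' preamble unnecessary, since the merge already keeps the extra depth at $O(n)$.
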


\begin{proof} 
We start with a linear comparison 
tree of depth $O(\OPTMAX_D)$ that computes the maxima, with
certificates as in 
Proposition~\ref{prop:dom}. 
Each leaf has a list $M$ with the maximal
points in sorted order.  We merge $M$ with the 
slab boundaries of $\bS$ to label each 
maximal point with the leaf slab of $\bS$ containing it. 
This needs $O(n)$ additional comparisons.
Now let $R_i$ be the region associated 
with a non-maximal point $p_i$, and $R_j$ 
the maximal dominating region. Let $\lambda$ 
be the leaf slab containing $R_j$.
The $x$-projection of $R_i$ cannot extend 
to the right of $\lambda$.
If there is a slab boundary separating 
$R_i$ from $R_j$, nothing needs to be done.
Otherwise, $R_i$ intersects 
$\lambda$. With one more comparison, we can
place $p_i$ inside $\lambda$ or strictly 
to the left of it.
In total, it takes $O(n)$ additional 
comparisons in each leaf to
that get a tree for the $\bS$-labeled 
maxima. Hence, the expected depth is $O(n + \OPTMAX_\cD)$. 
We apply Lemma~\ref{lem:lin->entropy} 
to get an entropy-sensitive tree with
the desired properties.
\end{proof}

\subsection{The algorithm}
\label{sec:algorithm}

In the learning phase, the algorithm 
constructs a slab structure $\bS$ and
search trees $T_i$ as in 
Lemmas~\ref{lem:slabstruct} and~\ref{lem:tree}.
Henceforth, we assume that we have these structures, 
and we describe the algorithm in the limiting
phase.
The algorithm searches
each point $p_i$ progressively 
in its tree $T_i$, while
interleaving
the searches carefully. 

At any stage of the algorithm, 
each point $p_i$ is placed in some slab 
$S_i$.  The algorithm maintains a set 
$A$ of \emph{active points}. All
other points are either proven 
non-maximal, or placed in 
a leaf slab. 
The heap structure $L(A)$ from
Claim~\ref{clm:ds} is used to
store pairs of indices of active points
and associated keys.
Recall that $L(A)$ supports the operations \ins{},
\delete{}, \deckey{}, and \findmax{}. The key 
for an active point $p_i$ is the  
right boundary of the slab $S_i$ 
(represented as an element of $[|\bS|]$).
We list the variables of the algorithm.
Initially, $A = P$, and each $S_i$ is the largest 
slab in $\bS$.  Hence, all points have  key $|\bS|$, 
and we \ins{} all these pairs
into $L(A)$. 

\begin{asparaenum}
  \item $A, L(A)$: the list $A$ of 
    active points is stored in 
    heap structure $L(A)$, with
    their associated right slab boundary as key.
  \item $\wlambda, B$: Let $m$ be the largest 
     key in $L(A)$. Then $\wlambda$ is the leaf slab 
     with right boundary is $m$ and
     $B$ is a set of points located in $\wlambda$
     so far. Initially $B$ is empty and $m$ is $|S|$, 
     corresponding to the $+\infty$ boundary of the 
     rightmost, infinite, slab.
  \item $M, \hat{p}$: $M$ is a sorted (partial) list 
    of the maximal points so far,
    and $\hat{p}$ is the leftmost among those. 
    Initially $M$ is empty and $\hat{p}$ is a 
    ``null'' point that dominates no input point.
\end{asparaenum}

The algorithm involves a main 
procedure \textbf{Search}, and an 
auxiliary procedure
\textbf{Update}. The procedure \textbf{Search} 
chooses a point and advances
its search by a single node in the corresponding
search tree. Occasionally, \textbf{Search} invokes 
\textbf{Update} to change the global variables. 
The algorithm repeatedly calls \textbf{Search}
until $L(A)$ is empty. 
After that, we make a final call to 
\textbf{Update} in order to
process any remaining points.

\noindent
\textbf{Search}:  
Perform a \findmax{} in $L(A)$
and let $p_i$ be the resulting
point.
If the maximum key $m$ in $L(A)$ is 
less than the right 
boundary of $\widehat{\lambda}$, 
invoke \textbf{Update}. 
If $p_i$ is dominated by $\hat{p}$, 
delete $p_i$ from $L(A)$.
If not, advance 
the search of $p_i$ in $T_i$ by 
a single node, if possible. 
This updates the slab $S_i$. If 
the right boundary of $S_i$ has 
decreased, perform a 
\deckey{} operation on $L(A)$. 
(Otherwise, do nothing.)
Suppose the point $p_i$ reaches 
a leaf slab $\lambda$.  If 
$\lambda = \widehat{\lambda}$, 
remove $p_i$ from $L(A)$ and insert it
in $B$ (in time $O(|B|)$). Otherwise, 
leave $p_i$ in $L(A)$.

\noindent
\textbf{Update}: 
Sort the points in $B$ and update 
the list of maxima.
As Claim~\ref{clm:order} will show, we 
know the sorted list of maxima to 
the right of $\wlambda$.  Hence, 
we can append to this list in $O(|B|)$
time. We reset $B = \emptyset$,
set $\widehat{\lambda}$ to the leaf 
slab to the left of $m$, and return.

The following claim states the main 
important invariant of the algorithm.

\begin{claim}\label{clm:order} 
  At any time in the algorithm, all 
  maxima to the right of $\wlambda$ have 
  been found, in order from right to left.
\end{claim}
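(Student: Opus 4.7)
The proof will be by induction on the algorithm's execution, showing the stronger statement that $M$ lists \emph{exactly} the maximal points of $P$ strictly to the right of $\wlambda$, in right-to-left order. The base case is immediate: initially $\wlambda$ is the rightmost (unbounded) leaf slab, no input lies strictly to its right, and $M = \emptyset$. Since only a call to \textbf{Update} modifies $\wlambda$ or $M$, the inductive step reduces to showing that each \textbf{Update} preserves the invariant.

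Fix an invocation of \textbf{Update} and write $\wlambda_{\text{old}}, \hat{p}_{\text{old}}, M_{\text{old}}$ for the state just before and $\wlambda_{\text{new}}, M_{\text{new}}$ for the state just after; the right boundary of $\wlambda_{\text{new}}$ is the current max key $m$ in $L(A)$, and this is strictly less than the right boundary $r_{\text{old}}$ of $\wlambda_{\text{old}}$. The plan is to establish three facts: (i) $B$ contains every input point located in $\wlambda_{\text{old}}$ except those already removed because dominated by $\hat{p}_{\text{old}}$; (ii) no input point is located in any leaf slab strictly between $\wlambda_{\text{new}}$ and $\wlambda_{\text{old}}$; and (iii) the sort/sweep performed by \textbf{Update} on $B$, starting from the running maximum $\hat{p}_{\text{old}}$, correctly outputs the maxima of $P$ lying in $\wlambda_{\text{old}}$ in right-to-left order.

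For (i), I will exploit that the key of an active $p_i$ in $L(A)$ equals the right boundary of its current slab $S_i$, which \emph{contains} $p_i$. Thus if $p_i$ is located in $\wlambda_{\text{old}}$ and still active at the start of \textbf{Update}, its key is $\geq r_{\text{old}}$; since the max key has dropped below $r_{\text{old}}$, $p_i$ must have been selected by \findmax{} during this era and---unless killed by $\hat{p}_{\text{old}}$---advanced all the way to the leaf slab $\wlambda_{\text{old}}$, i.e.\ placed into $B$. A dual monotonicity argument gives (ii): an active point in a slab $\lambda'$ strictly between $\wlambda_{\text{new}}$ and $\wlambda_{\text{old}}$ would have key $>m$, contradicting that $m$ is the current maximum; a point previously moved to some earlier $B$ was moved when $\wlambda$ had right boundary $\geq r_{\text{old}}$ (because $\wlambda$ only moves leftward), so $\wlambda$ never equalled $\lambda'$, and the point could not have been located in $\lambda'$ to begin with.

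The main subtlety is (iii): to justify that pruning \emph{only} by $\hat{p}_{\text{old}}$ in \textbf{Search} is sound, I will use the standard right-to-left sweep property, which by the inductive hypothesis applies to $M_{\text{old}}$: $\hat{p}_{\text{old}}$, being the leftmost element of $M_{\text{old}}$, attains the maximum $y$-coordinate among \emph{all} input points strictly right of $\wlambda_{\text{old}}$. Hence a point $p \in \wlambda_{\text{old}}$ is dominated by some input point strictly to its right iff $y(p) \leq y(\hat{p}_{\text{old}})$, iff $\hat{p}_{\text{old}}$ itself dominates $p$. Therefore the points of $B$ are precisely the candidates for maximality in $\wlambda_{\text{old}}$, and the right-to-left sweep inside $B$ (begun with running maximum $\hat{p}_{\text{old}}$) selects exactly the maxima of $\wlambda_{\text{old}}$, in the correct order. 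Prepending this sublist to $M_{\text{old}}$ gives $M_{\text{new}}$, which by (i)--(iii) lists every maximum of $P$ strictly right of $\wlambda_{\text{new}}$ in right-to-left order, closing the induction.
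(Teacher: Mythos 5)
Your proof is correct and follows essentially the same backward-induction-along-\textbf{Update}-calls strategy as the paper, just spelled out in considerably more detail; in particular your point~(iii) makes explicit the soundness of pruning by $\hat{p}$, which the paper leaves implicit in the phrase ``proven to be non-maximal.'' One cosmetic overstatement: claim~(ii) as written (``no input point is located in any leaf slab strictly between $\wlambda_{\text{new}}$ and $\wlambda_{\text{old}}$'') is false, since a point pruned because $\hat{p}_{\text{old}}$ dominated it could well lie in such a slab; what you actually need and what your argument actually establishes is the weaker fact that no \emph{maximal} input point lies there, which follows because such a point would never have been pruned and hence would still be active with key exceeding $m$.
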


\begin{proof} 
The proof is by backward induction on 
$m$, the right boundary of $\wlambda$.
For $m = |S|$, the claim is trivially true. 
Assume it holds for a given value of $m$, 
and trace the algorithm's behavior until 
the maximum key becomes smaller than $m$ 
(which happens in \textbf{Search}). 
When \textbf{Search}
processes a point $p$ with key $m$ 
then either (i) the key value decreases; 
(ii) $p$ is dominated by $\hat{p}$; 
or (iii) $p$ is  placed in 
$\wlambda$ (whose right boundary is $m$). 
In all cases, when the maximum key decreases
below $m$, all points in $\wlambda$ are 
either proven to be non-maximal
or are in $B$. By the induction hypothesis, 
we already have a sorted
list of maxima to the right of $m$.
The procedure \textbf{Update} sorts the 
points in $B$ and all maximal points 
to the right of $m-1$ are determined.
\end{proof}

\subsubsection{Running time analysis}\label{sec:runtime}

We prove the following lemma.

\begin{lemma}\label{lem:algoMaxima} 
  The maxima algorithm runs in 
  $O(n+\textup{\OPTMAX}_\cD)$ time.
\end{lemma}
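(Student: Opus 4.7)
The plan is to decompose the running time into (a) bookkeeping from the heap operations, (b) the work in the \textbf{Update} calls, and (c) the total number $\sum_i d_i$ of advances of the individual searches in the trees $T_i$. By Claim~\ref{clm:ds}, the heap supports each iteration of \textbf{Search} in $O(1)$ amortized time. Each iteration either deletes an active point (by dominance), advances some $d_i$, or drops a point from $L(A)$ into $B$, so the number of iterations is $O(n+\sum_i d_i)$. Each call to \textbf{Update} costs $O(|B|^2)$, and summing over all calls is at most $\sum_\lambda X_\lambda^2$, which has expectation $O(n)$ by Lemma~\ref{lem:slabstruct}. The remaining task is to bound $\EX[\sum_i d_i]=O(n+\OPTMAX_\cD)$.

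For this, I would invoke Lemma~\ref{lem:comp} to fix an entropy-sensitive linear comparison tree $\cT^*$ for the $\bS$-labeled maxima of expected depth $O(n+\OPTMAX_\cD)$, and compare the algorithm against $\cT^*$. For a random input $P$ reaching leaf $v^*(P)$ with product region $\prod_i R_i^*$, Definition~\ref{def:s-label} tells us that each $R_i^*$ is either (a) contained in a single leaf slab $\lambda_i^*$ of $\bS$, or (b) contained in a slab $S_i^*$ separated from the dominator's region $R_{D(i)}^*$ by a vertical slab boundary. Let $\cI_i\subseteq \bU$ be the interval of leaf slabs covering $R_i^*$.

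The crux of the argument is the following claim: the algorithm's search for $p_i$ terminates within one iteration after the slab $S_i$ becomes a subset of $\cI_i$. In case (a) this is immediate, since as soon as $S_i$ equals $\lambda_i^*$ the point $p_i$ is handed off to $B$. In case (b), suppose $p_i$ is picked at some iteration with $S_i\subseteq\cI_i$. Then its key, which equals the max key, is at most the right boundary of $\cI_i$, which in turn is strictly less than the left boundary of the leaf slab containing $R_{D(i)}^*$. By Claim~\ref{clm:order} every maximum to the right of $\wlambda$ is already in $M$; in particular $p_{D(i)}\in M$, and $\hat{p}$ is either $p_{D(i)}$ itself or a maximum lying strictly to its left. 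In the former case $\hat{p}=p_{D(i)}$ dominates $p_i$ by the certificate. In the latter case $\hat{p}$ cannot be dominated by $p_{D(i)}$, so $y(\hat{p})>y(p_{D(i)})\ge y(p_i)$, while $\hat{p}\in M$ combined with $x(\hat{p})>$ right boundary of $\wlambda$ yields $x(\hat{p})>x(p_i)$. Either way the dominance check fires and $p_i$ is deleted. This is the subtlest step of the proof: the algorithm has no way to consult $p_{D(i)}$ directly, and the argument uses the right-to-left ordering enforced by the heap keys together with the identity of the leftmost known maximum to convert a local dominance test into a substitute for knowing the optimal certificate.

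It follows that $d_i$ is bounded by one plus the depth in $T_i$ of the shallowest node whose associated slab interval is contained in $\cI_i$, which is precisely the cost of an $\cI_i$-restricted search in $T_i$. Setting $\xi(\lambda):=\Pr_{\cD_i}[p_i\in R_i^*\cap\lambda]$, the $\cI_i$-restricted distribution $\cF_{\cI_i}$ is exactly the conditional marginal of $p_i$ given $v^*(P)=v^*$ over the leaf slabs, so by Lemma~\ref{lem:tree}, $\EX[d_i\mid v^*(P)=v^*]=O(\eps^{-1}(1-\log \Pr[p_i\in R_i^*]))$. Summing over $i$, Proposition~\ref{prop:entropyDepth} gives $\sum_i(1-\log \Pr[p_i\in R_i^*])=n+d_{v^*}(\cT^*)$; taking expectations over $v^*(P)$ and applying Lemma~\ref{lem:comp} yields $\EX[\sum_i d_i]=O(\eps^{-1}(n+\OPTMAX_\cD))$, which is $O(n+\OPTMAX_\cD)$ for constant $\eps$ and completes the proof.
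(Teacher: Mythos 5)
Your proposal is correct and follows the paper's own proof closely: it decomposes the cost into \textbf{Update} calls (bounded by $\sum_\lambda X_\lambda^2$ via Lemma~\ref{lem:slabstruct}, as in Claim~\ref{clm:update}) and \textbf{Search} iterations, compares against an entropy-sensitive tree for the $\bS$-labeled maxima from Lemma~\ref{lem:comp}, shows each $p_i$'s search is at most an $\cI_i$-restricted search via the $\bS$-labeling and Claim~\ref{clm:order} (as in Claim~\ref{clm:search}), and sums using Lemma~\ref{lem:search-time}/Lemma~\ref{lem:tree} and Proposition~\ref{prop:entropyDepth}. Your explicit argument for why $\hat p$ dominates $p_i$ even when $\hat p$ is not $p_{D(i)}$ is a welcome elaboration of a step the paper states very tersely. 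One small omission: you apply Proposition~\ref{prop:entropyDepth} and the Cartesian-product form of $\cR_{v^*}$ to every leaf $v^*$ of $\cT^*$, but both are guaranteed only for $d_{v^*}\le n^2$; the paper patches the deeper leaves by noting the algorithm never runs for more than $O(n^2)$ steps in the worst case, so the $O(n+d_{v^*})$ bound holds for them trivially.
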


We can easily bound the running time 
of all calls to \textbf{Update}.

\begin{claim}\label{clm:update} 
  The total expected time for
  calls to \textbf{Update} 
  is $O(n)$.
\end{claim}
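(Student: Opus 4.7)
The plan is to charge each call to \textbf{Update} to the leaf slab $\widehat{\lambda}$ that it processes, and then sum over leaf slabs using the second moment bound from Lemma~\ref{lem:slabstruct}. First, I would observe that each invocation of \textbf{Update} strictly advances $\widehat{\lambda}$ to the left, so the total number of calls is at most $|\bS| = O(n)$; the bookkeeping outside the sorting and appending (resetting $B$, moving $\widehat{\lambda}$) is $O(1)$ per call and contributes $O(n)$ in total.

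Next I would bound the per-call cost. In a single call to \textbf{Update} associated with leaf slab $\widehat{\lambda}$, the set $B$ consists only of input points whose final position lies in $\widehat{\lambda}$, so $|B| \leq X_{\widehat{\lambda}}$ where $X_\lambda$ is the number of input points in leaf slab $\lambda$. Sorting $B$ by $x$-coordinate can be done by a trivial quadratic-time method in $O(|B|^2)$ time, and then merging the sorted $B$ into the existing suffix of the maxima list (by Claim~\ref{clm:order}, the points in $B$ are all to the left of every maximum found so far) takes $O(|B|)$ additional time. Hence the cost of the call is $O(X_{\widehat{\lambda}}^2 + 1)$.

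Summing over all leaf slabs, the total cost of all calls to \textbf{Update} is
\[
  \sum_{\lambda} O(X_\lambda^2 + 1).
\]
By Lemma~\ref{lem:slabstruct}, with probability at least $1 - n^{-3}$ over the construction of $\bS$, every leaf slab satisfies $\EX_\cD[X_\lambda^2] = O(1)$. Taking expectation term by term and using $|\bS| = O(n)$, we get an expected total cost of $O(n)$ conditioned on the good event. On the complementary event of probability $n^{-3}$, the worst case cost is trivially $O(n^2)$, contributing an additional $O(n^{-1})$ to the expectation. Combining, the total expected time spent in \textbf{Update} is $O(n)$.

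The only subtlety is verifying that $|B|$ is genuinely controlled by $X_{\widehat{\lambda}}$ rather than by some larger quantity that might arise from points repeatedly moving through $\widehat{\lambda}$; this is immediate from the description of \textbf{Search}, which inserts a point into $B$ only when its search terminates in the current $\widehat{\lambda}$, so each input point is inserted into at most one $B$-set throughout the algorithm. Given this, the main structural input is simply the $\EX[X_\lambda^2] = O(1)$ bound, which lets the quadratic sorting cost be summed as if it were linear.
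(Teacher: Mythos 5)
Your proof is correct and takes essentially the same approach as the paper: charge the cost of each \textbf{Update} to the leaf slab being processed, bound that cost by $O(X_\lambda^2)$ via a quadratic-time sort, and sum using the second-moment bound $\EX[X_\lambda^2]=O(1)$ from Lemma~\ref{lem:slabstruct}, giving $\sum_\lambda \EX[X_\lambda^2] = O(n)$. One small conceptual wobble: the $n^{-3}$ failure probability in Lemma~\ref{lem:slabstruct} is over the one-time construction of $\bS$ in the learning phase, not over a fresh input, so one does not average it into the per-run expectation; the standard convention (used implicitly by the paper) is simply to condition on a successful learning phase.
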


\begin{proof}
The total time for the calls to 
\textbf{Update} is at most
the time needed for sorting 
points within each leaf slabs. 
By Lemma~\ref{lem:slabstruct}, 
this takes expected time
\[
  \EX \Bigl[\sum_{\lambda \in \bS} X_\lambda^2\Bigr]
  =  \sum_{\lambda \in \bS} \EX\bigl[X_\lambda^2\bigr]
  =  \sum_{\lambda \in \bS} O(1)
  =  O(n).
\] 
\end{proof}

The following claim is key to relating the time
spent by \textbf{Search} to 
entropy-sensitive comparison trees.

\begin{claim}\label{clm:search} 
  Let $\cT$ be an entropy-sensitive 
  comparison tree computing $\bS$-labeled 
  maxima.  Consider a leaf $v$ with depth
  $d_v \leq n^2$ labeled 
  with the regions $\mathcal{R}_v = 
  R_1 \times \dots \times R_n$. 
  Conditioned on $P \in \cR_v$, the 
  expected running time of 
  \textup{\textbf{Search}} is 
  $O(n + d_v)$.
\end{claim}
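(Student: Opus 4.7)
The plan is to bound the expected total work of \textbf{Search} conditional on $P\in\cR_v$ by $O(n+d_v)$, absorbing the $1/\eps$ factor from restricted-search optimality into the constant. The cost of \textbf{Search} splits into heap operations and tree-advancement steps. By Claim~\ref{clm:ds}, the heap contributes $O(n+C)$ where $C$ is the number of \textbf{Search} calls; and every call either removes one point (which happens at most $n$ times overall) or advances a single search by one node in some $T_i$, so $C\le n+N$ with $N=\sum_i N_i$ and $N_i$ the number of advancements on $p_i$. It therefore suffices to prove $\EX[N\mid P\in\cR_v]=O(n+d_v)$.

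To control $N_i$, I use the $\bS$-labeled certificate at $v$ (Definition~\ref{def:s-label} and Proposition~\ref{prop:dom}). If $R_i$ lies inside a single leaf slab $\lambda_i$ (which includes every maximal $p_i$ and some non-maximal ones), then the search in $T_i$ terminates as soon as it reaches $\lambda_i$, so $N_i$ is at most the depth of $\lambda_i$ in $T_i$. Otherwise $p_i$ is non-maximal and the certificate furnishes a maximal region $R_j$ each of whose points dominates every point of $R_i$, together with a slab boundary $b$ strictly separating $R_i$ from $R_j$; let $\lambda_j$ be the leaf slab containing $R_j$, so $R_i\subseteq\{x<b\}$ and $b$ lies at or left of the left edge of $\lambda_j$.

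The crucial claim is that the search for $p_i$ must terminate by the first moment its slab $S_i$ has right boundary $\le b$. When \textbf{Search} picks $p_i$ at such a moment, its key equals the right boundary of $\wlambda$, so $\wlambda$ lies strictly to the left of $\lambda_j$, and by Claim~\ref{clm:order} the list $M$ already contains $p_j$. The leftmost known maximum $\hat p$ then satisfies $x(\hat p)>$ right edge of $\wlambda\ge x(p_i)$. If $\hat p=p_j$, then $y(\hat p)=y(p_j)\ge y(p_i)$; otherwise $\hat p$ lies strictly left of $p_j$, and because both are maximal we must have $y(\hat p)>y(p_j)\ge y(p_i)$ (else $p_j$ would dominate $\hat p$). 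Either way $\hat p$ dominates $p_i$ and the algorithm deletes it. Consequently, $N_i$ is at most the depth in $T_i$ of the first node $w$ whose slab $S_w$ lies entirely in $\{x<b\}$; a fortiori, $N_i$ is at most the length of an $I_i$-restricted search for $p_i$ in $T_i$, where $I_i$ is the interval of leaf slabs intersected by $R_i$.

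Now I invoke optimality. Let $\cF_i$ be the $I_i$-restricted distribution with weights $\xi(\lambda)=\Pr_{p_i\sim\cD_i}[p_i\in R_i\cap\lambda]$, which is exactly the law of $p_i$ given $p_i\in R_i$. By Lemmas~\ref{lem:tree} and~\ref{lem:search-time}, the expected $I_i$-restricted search length under $\cF_i$ is $O(\eps^{-1}(1-\log\pi_i))$ with $\pi_i:=\xi(I_i)=\Pr_{p_i\sim\cD_i}[p_i\in R_i]$. Since the law of $P$ given $\cR_v$ is the product $\prod_i\cF_i$, linearity of expectation yields
\[
\EX[N\mid P\in\cR_v]\;\le\;\sum_{i=1}^n O\!\left(\eps^{-1}\bigl(1+\log(1/\pi_i)\bigr)\right)\;=\;O\!\left(\eps^{-1}\Bigl(n-\sum_{i=1}^n\log\pi_i\Bigr)\right)\;=\;O(\eps^{-1}(n+d_v)),
\]
where the last equality applies Proposition~\ref{prop:entropyDepth} to the entropy-sensitive tree $\cT$. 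The main obstacle is the domination argument of the third paragraph: the algorithm never consults the certificate's specific $p_j$, only the global $\hat p$, so one must verify that the leftmost known maximum always suffices for pruning $p_i$. The key geometric fact is that two simultaneously maximal points ordered from right to left have strictly increasing $y$-coordinates; combined with $\hat p$ being forced to lie right of $\wlambda$ and hence right of $p_i$, this yields domination.
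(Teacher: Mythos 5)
Your proof is correct and follows the same route as the paper's: for each $p_i$, bound the number of search advancements by an $S_i$-restricted search (where $S_i$ is the smallest slab containing the certificate region $R_i$, distinguishing the leaf-slab case from the slab-boundary-separation case exactly as the $\bS$-labeled certificate dictates), invoke the $O(1/\eps)$-optimality of $T_i$ for restricted searches, and sum using Proposition~\ref{prop:entropyDepth}. The one spot where you are more explicit than the paper is the verification that the globally maintained $\hat{p}$ --- rather than the certificate's specific dominating point $p_j$ --- in fact dominates $p_i$ once its key drops to or below the separating boundary $b$; the paper simply asserts ``we have found $p_j$, so $\hat p$ dominates $p_i$,'' whereas you correctly fill in the monotone-$y$ argument for maximal points ordered right to left together with the observation that $\hat p$ is forced strictly right of $\wlambda$ and hence of $p_i$.
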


\begin{proof} 
For each $R_i$, let $S_i$ be the 
smallest slab of $\bS$ that completely contains 
$R_i$. We will show that the algorithm
performs at most an $S_i$-restricted search 
for input $P \in \cR_v$. 
If $p_i$ is maximal, then $R_i$ is 
contained in a leaf slab (because 
the output is $\bS$-labeled). Hence 
$S_i$ is a leaf slab and an $S_i$-restricted 
search for a maximal $p_i$ is just 
a complete search.

Now consider a non-maximal $p_i$. By the properties 
of $\bS$-labeled maxima, the associated region $R_i$ 
is either inside a leaf slab or
is separated by a slab boundary from the dominating 
region $R_j$.  In the former case, an $S_i$-restricted 
search is a complete search.
In the latter case, an $S_i$-restricted 
search suffices to process $p_i$:  
by Claim~\ref{clm:order}, when an $S_i$-restricted 
search finishes, all maxima to the right of 
$S_i$ have been determined.  In particular, we have 
found $p_j$, so $\hat p$ dominates $p_i$.
Hence, the search for $p_i$ proceeds no further.

The expected search time taken 
conditioned on $P \in \cR_v$ is 
the sum (over $i$) of the conditional 
expected $S_i$-restricted search times. 
Let $\cE_i$ denote the event that 
$p_i \in R_i$, and $\cE$ be the event 
that $P \in \cR_v$.  We have 
$\cE = \bigwedge_i \cE_i$.
By the independence of the distributions 
and linearity of expectation
\begin{align*} 
  \EX_\cE[\text{search time}] 
  &= 
  \sum_{i=1}^n \EX_\cE[\text{$S_i$-restricted search time for $p_i$}] \\
  &=  
  \sum_{i=1}^n \EX_{\cE_i} [\text{$S_i$-restricted search time for $p_i$}].
\end{align*}
By Lemma~\ref{lem:search-time}, the time 
for an $S_i$-restricted search 
conditioned on $p_i \in R_i$ is 
$O(-\log \Pr[p_i \in R_i] + 1)$. 
By Proposition~\ref{prop:entropyDepth}, 
$d_v = \sum_i -\log \Pr[p_i \in R_i]$,
completing the proof.
\end{proof}

We can now prove the main lemma.

\begin{proof}[Proof of Lemma~\ref{lem:algoMaxima}] 
By Lemma~\ref{lem:comp}, there is
an entropy-sensitive 
tree that computes the $\bS$-labeled 
maxima with expected depth $O(\OPTMAX + n)$.
Since the algorithm never exceeds
$O(n^2)$ steps and by Claim~\ref{clm:search}, the 
expected running time of 
\textbf{Search} is $O(\OPTMAX +  n)$, and 
by Claim~\ref{clm:update} the 
total expected time for \textbf{Update}
is $O(n)$. Adding these bounds completes 
the proof.
\end{proof}

\section{A self-improving algorithm for convex hulls}

We outline the main ideas. The basic approach
is the same as for maxima. We set up 
a slab structure $\bS$, and each
distribution has a dedicated tree for searching 
points. At any stage, each point
is at some intermediate node of the search tree, 
and we wish to advance searches for points 
that have the greatest potential for being extremal.
Furthermore, we would like to quickly 
ascertain that a point is not extremal, so that
we can terminate its search.

For maxima, this strategy is easy enough to 
implement. The ``rightmost'' active point 
is a good candidate for being maximal, so we always
proceed its search. The leftmost known maximal
point can be used to obtain certificates of non-maximality.
For convex hulls, this is much more problematic. At any 
stage, there are many points likely to be
extremal, and it is not clear how to choose.
We also need a procedure that can quickly identify
non-extremal points. 

We give a high-level 
description of the main algorithm.
We construct a \emph{canonical hull} 
$\cC$ in the learning phase. The canonical hull
is a crude representative for the actual 
upper hull. The canonical hull has two key 
properties. First, any point that is below 
$\cC$ is likely to be non-extremal. Second, 
there are not too many points above $\cC$. 

The curve $\cC$ is constructed as follows.
For every (upward) direction $v$, take the normal line $\ell_v$
such that the expected total number of points above $\ell_v$
is $\log n$. We can take the intersection of $\ell^-_v$ over all $v$,
to get an upper convex curve $\cC$. Any point below this curve is highly likely to be non-extremal.
Of course, we need a finite description, so we choose some finite set $\textbf{V}$ of directions,
and only consider $\ell^-_v$ for these directions to construct $\cC$. We choose $\textbf{V}$ to
ensure that the expected number of extremal points in the slab corresponding to a segment of $\cC$ is $O(\log n)$.
We build the slab
structure $\bS$ based on these segments of $\cC$, and search for points
in $\bS$. Each search for point $p$ will result in one of the three conclusions:
$p$ is located above $\cC$, $p$ is located below $\cC$, or $p$ is located in a leaf slab.
This procedure is referred to as the \emph{location algorithm}.  

Now, we have some partial information about 
the various points that is used by
a \emph{construction algorithm} to find $\UH(P)$. We can ignore all points below $\cC$,
and prove that the $\UH(P)$ can be found on $O(n\log \log n)$ time.

\subsection{The canonical directions}\label{sec:prelim_CH}

We describes the structures obtained 
in the learning phase.  In order 
to characterize the typical behavior 
of a random input $P \sim \cD$,
we use a set $\textbf{V}$ of 
\emph{canonical directions}. A 
\emph{direction} is a two-dimensional 
unit vector. Directions are ordered 
clockwise, and we only consider 
directions that point upwards. Given 
a direction $v$, we say that 
$p \in P$ is \emph{extremal} for $v$ 
if the scalar product 
$\langle p,v \rangle$ is maximum 
in $P$. We denote the 
lexicographically smallest input 
point that is extremal for $v$ 
by $e_v$.  The canonical directions 
are described in the following lemma, 
whose proof we postpone to 
Section~\ref{sec:canonicalDir}. They 
are computed in the learning phase. 
(Refer to Definition~\ref{def:cert-ch} 
and just above it for some of the 
basic notation below.)

\begin{lemma}\label{lem:canonicalDir}
  Let $k \eqdef n/\log^2 n$. There 
  is an $O(n\poly\log n)$ time procedure
  that takes $\poly(\log n)$ random inputs 
  and outputs an ordered sequence 
  $\textup{\textbf{V}} = v_1, \dots, v_k$ of 
  directions with the following
  properties (with probability at least 
  $1 - n^{-4}$ over construction). Let $P \sim \cD$.
  For $i = 1, \ldots, k$,  let $e_{i} = e_{v_i} \in P$, 
  let $X_i$ be the number of points  from $P$
  inside $\uss(e_{i}, e_{{i+1}})$, 
  and $Y_i$ the number of \emph{extremal}
  points inside $\uss(e_{i}, e_{{i+1}})$.
  Then
  \[
    \EX_{P \sim \cD}\Bigl[\sum_{i=1}^k X_i \log(Y_i + 1)\Bigr] = 
    O(n \log\log n).
  \]
\end{lemma}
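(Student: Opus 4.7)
I would build $\textbf{V}$ empirically. Draw $T = \Theta(\log^3 n)$ independent inputs $P^{(1)},\dots,P^{(T)} \sim \cD$, compute each upper hull in $O(n\log n)$ time, and collect the outward-normal directions of all observed hull edges into a master list $\cN$. Sort $\cN$ clockwise and partition it into $k = n/\log^2 n$ buckets of equal cardinality; the boundary directions between consecutive buckets become $v_1,\dots,v_k$. The total running time is $O(nT\log n) = O(n\,\poly\log n)$.

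The key property I would establish is that with probability $\ge 1 - n^{-4}$ over the sampling, for every $i$ and every fresh $P \sim \cD$, the expected number of edges of $\UH(P)$ whose outward normal lies in $(v_i, v_{i+1})$ is $O(\log^2 n)$. Since this count equals $Y_i$ up to $\pm 1$, this gives $\EX[Y_i] = O(\log^2 n)$ for every $i$. The proof is a Chernoff tail bound per candidate arc (the empirical count in each bucket is $T \cdot \EX[Y_i]$, which is $|\cN|/k$ by construction), together with a union bound over the $O((nT)^2)$ possible arc endpoints determined by pairs of recorded normals.

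From here the target bound follows by a double Jensen calculation. Using the identity
\[ \sum_i X_i \log(Y_i+1) \;=\; \sum_{j=1}^n \log(Y_{i(j)}+1), \]
where $i(j)$ is the index of the semislab containing $p_j$, and concavity of $\log$ first per summand and then across the sum, one obtains
\[ \EX\Bigl[\sum_j \log(Y_{i(j)}+1)\Bigr] \;\le\; n\log\!\Bigl(1 + \tfrac{1}{n}\EX\bigl[\textstyle\sum_i X_iY_i\bigr]\Bigr). \]
It remains to prove $\EX[\sum_i X_iY_i] = O(n\log^2 n)$, which I would obtain via a moment argument: Cauchy--Schwarz gives $\EX[X_iY_i] \le \sqrt{\EX[X_i^2]\EX[Y_i^2]}$, and strengthening the Chernoff argument above to a higher-moment tail bound delivers $\EX[Y_i^2] = O(\log^4 n)$, together with an analogous moment bound on $X_i$ coming from the balanced placement of $v_1, \dots, v_k$. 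Plugging in, the right-hand side above is $O(n\log\log n)$ as required.

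\textbf{The main obstacle} is the uniform concentration of $Y_i$. The indicator ``$p_\ell$ is extremal for some direction in $(v_i, v_{i+1})$'' couples all $n$ input points, so per-point Chernoff does not directly apply. I would apply concentration at the level of per-sample hull-edge counts (each bounded by $n$ and independent across the $T$ samples) and union-bound over the $O((nT)^2)$ candidate arcs, choosing $T = \poly\log n$ so that the per-arc tail is $n^{-\Omega(1)}$. Getting this balance right --- covering every potential arc while keeping the sample small --- is the most delicate ingredient.
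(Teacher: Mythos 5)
Your double-Jensen reduction is correct as algebra (reindexing by points, then Jensen per-summand and again on the outer expectation gives $\EX[\sum_i X_i\log(Y_i+1)] \le n\log\bigl(1 + \tfrac1n\EX[\sum_i X_iY_i]\bigr)$, and $O(n\log^2 n)$ inside would finish it). But the concentration step at the heart of your construction does not go through. You want to estimate $\EX[Y_i]$ --- the expected number of hull edges with outward normal in a given arc --- from empirical per-sample hull-edge counts. Each per-sample count is a function of all $n$ input points at once, and the only a priori bound on it is $n$ (the hull can have $\Theta(n)$ vertices, as in Fig.~2(i) of the paper). With $T = \poly\log n$ independent samples of a random variable bounded in $[0,n]$, Hoeffding gives additive error on the order of $n/\sqrt T = n/\poly\log n$. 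That is orders of magnitude too coarse to certify $\EX[Y_i] = O(\log^2 n)$ or $\EX[Y_i^2] = O(\log^4 n)$; you would need $T = \Omega(n^2/\poly\log n)$ samples, not $\poly\log n$. There is no ``higher-moment Chernoff'' that rescues this, because the obstruction is the range of the per-sample summand, not the tail of its distribution, and you have no a priori control on either.

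The paper takes an entirely different route precisely to avoid this. It works in the dual: a single sample $Q^*$ is drawn, the $(\log^4 n)$-level of $Q^*$ is taken, an upper hull $H'$ of its vertices is formed, and the directions come from every $(\log^2 n)$-th crossing of a $Q^*$-line with $H'$. The resulting vertical trapezoids $\tau_j$ have the property that the number of $Q^*$-lines meeting each $\tau_j$ is $\Theta(\log^4 n)$ by construction. The crucial point is that the relevant quantity for a fresh input $P$ --- the number of $P^*$-lines meeting a fixed trapezoid --- is a sum of $n$ independent Bernoulli indicators (one per point $p_i$, asking whether $p_i^*$ crosses a fixed region), and Lemma~\ref{lem:rand-const} applies a conditional Chernoff argument to show it is $O(\log^5 n)$ with probability $1 - n^{-3}$. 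Because the lower envelope of $P^*$ over the $x$-range of $\tau_j$ stays inside $\tau_j$ w.h.p., this line count upper-bounds $Y_i$. That is, the ``level'' is the stable, per-point-decomposable surrogate that your per-sample hull-edge counting lacks. Given $Y_i \le \log^5 n$ w.h.p., the paper does not need your $X_i$ moment bound at all: it conditions on the good event, uses $\sum_i X_i \le n$ directly, and absorbs the bad event with the trivial $O(n^2)\cdot n^{-3}$ term. Your proposal also implicitly needs $\EX[X_i^2] = O(\poly\log n)$, which your direction-bucketing does not obviously supply, since a bucket with few observed hull-edge normals can still contain many input points --- another difficulty the paper's route sidesteps.
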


We construct some special lines 
that are normal to the canonical directions.
The details are 
in Section~\ref{sec:ellTail}.

\begin{lemma}\label{lem:ellTail} 
  We can construct (in $O(n\poly\log n)$ time 
  with one random input) lines 
  $\ell_1, \dots, \ell_k$ with
  $\ell_i$ normal to $v_i$, and with 
  the following property (with probability 
  at least $1 - n^{-4}$ over the construction). 
  For $i = 1, \ldots, k$ (and $c$ large 
  enough), we have
  \[
    \Pr_{P \sim \cD}[|\ell_i^+ \cap P| \in [1, c\log n]] \geq 1 - n^{-3}.
  \]
\end{lemma}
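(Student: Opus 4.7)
The plan is to choose each line $\ell_i$ so that the true expected count $\EX_{P \sim \cD}[|\ell_i^+ \cap P|]$ is $\Theta(\log n)$, and then use a Chernoff bound on the fresh sample $P$ to obtain the $[1, c\log n]$ concentration. Since the input is a product distribution, for any fixed line $\ell$ the count $|\ell^+ \cap P|$ is a sum of $n$ independent Bernoulli indicators, so both the construction step and the fresh-sample step are governed by Chernoff.

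First, fix a direction $v_i$ and parametrize lines normal to $v_i$ by a scalar $t$, writing $\ell_i(t)^+ \eqdef \{x \in \R^2 : \langle x, v_i\rangle > t\}$, with the orientation of $\ell_i(t)$ chosen so that $\ell_i(t)^+$ really is this halfplane. Define
\[
  F_i(t) \eqdef \EX_{P \sim \cD}[|\ell_i(t)^+ \cap P|] = \sum_{j=1}^n \Pr_{p_j \sim \cD_j}[\langle p_j, v_i\rangle > t].
\]
This is continuous and non-increasing in $t$ (the $\cD_j$ are continuous), ranging from $n$ down to $0$. Let $t_L < t_H$ be thresholds with $F_i(t_L) = 10 c_0 \log n$ and $F_i(t_H) = c_0 \log n / 10$, for a sufficiently large constant $c_0$; these depend only on $\cD$, not on any sample.

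Second, the construction. Draw one random input $P_0 \sim \cD$. For each $v_i$, let $t_i$ be the $\lfloor c_0 \log n\rfloor$-th largest value of $\langle p, v_i\rangle$ over $p \in P_0$ and define $\ell_i$ as the directed line normal to $v_i$ at height $t_i$. For any fixed $t$, $|\ell_i(t)^+ \cap P_0|$ is a sum of independent Bernoullis with mean $F_i(t)$, so Chernoff gives
\[
  \Pr\bigl[|\ell_i(t_L)^+ \cap P_0| < c_0 \log n\bigr] + \Pr\bigl[|\ell_i(t_H)^+ \cap P_0| > c_0 \log n\bigr] \leq n^{-\Omega(c_0)}.
\]
When neither tail event occurs, the decreasing empirical step function $t \mapsto |\ell_i(t)^+ \cap P_0|$ crosses the value $c_0 \log n$ somewhere in $[t_L, t_H]$, so $t_i \in [t_L, t_H]$ and hence $F_i(t_i) \in [c_0\log n / 10,\, 10 c_0 \log n]$. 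Choosing $c_0$ large enough makes the per-direction failure probability at most $n^{-5}$; a union bound over the $k \leq n$ directions gives overall construction success probability $\geq 1 - n^{-4}$. For the time bound, compute the first $O(\log n)$ convex layers of $P_0$ in $O(n\log n)$ time, and for each $v_i$ peel down layers to extract the $\lfloor c_0 \log n\rfloor$-th largest projection in $O(\log^2 n)$ time, giving $O(n\log n + k\log^2 n) = O(n\log n)$ total.

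Finally, condition on the construction succeeding. For a fresh $P \sim \cD$ and each fixed $\ell_i$, $|\ell_i^+ \cap P|$ is again a sum of $n$ independent Bernoullis, now with mean $\mu_i = F_i(t_i) = \Theta(\log n)$. Chernoff bounds yield both $\Pr[|\ell_i^+ \cap P| = 0] \leq e^{-\mu_i} \leq n^{-3}$ and, for $c$ chosen large relative to $c_0$, $\Pr[|\ell_i^+ \cap P| > c\log n] \leq n^{-3}$. The main delicate point is the three-way juggling of constants: $c_0$ must be large enough to push each per-direction construction failure below $n^{-5}$ for the union bound, and $c$ must in turn be large enough compared to $c_0$ to absorb the resulting mean $\Theta(c_0 \log n)$ in the upper-tail Chernoff bound for the fresh sample.
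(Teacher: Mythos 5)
Your probabilistic argument is correct and is a genuinely different route from the paper's. The paper invokes its general-purpose Lemma~\ref{lem:rand-const} (\emph{rand-const}), which handles the dependency between the training sample $Q^*$ and the region constructed from it via a two-level ``Bayesian'' Chernoff argument: conditioned on the observed count lying in a target window, the true mean is bounded, and then a fresh-sample Chernoff is applied. You sidestep this machinery entirely by anchoring to the \emph{deterministic} population thresholds $t_L, t_H$ defined by $F_i(t_L)=10c_0\log n$ and $F_i(t_H)=c_0\log n/10$; since these do not depend on the training sample $P_0$, a straightforward Chernoff bound shows the empirical quantile $t_i$ lands in $[t_L,t_H]$ with high probability, which pins down the true mean $F_i(t_i)$, and then a single fresh-sample Chernoff closes the argument. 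This is cleaner and more elementary for this particular lemma; the paper's route pays for the abstraction of Lemma~\ref{lem:rand-const} because that lemma is reused for the considerably more involved Lemma~\ref{lem:canonicalDir}. Your construction of $\ell_i$ (the line normal to $v_i$ at the $\Theta(\log n)$-th largest projection of the sample $P_0$) is the same object the paper produces via the dual $(\gamma c\log n)$-level, just described in primal terms.

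One imprecision worth flagging, though it does not threaten the result: your time bound. You claim the $\lfloor c_0\log n\rfloor$-th largest projection in direction $v_i$ can be read off the first $O(\log n)$ convex layers of $P_0$ in $O(\log^2 n)$ time per direction, apparently by taking the extreme vertex of each layer. This is not quite right: the point with the $j$-th largest projection is guaranteed to lie on layer at most $j$, but the top $k$ points are \emph{not} simply the extreme vertices of the first $k$ layers (a layer can contribute several of the top-$k$ points from the arc of its boundary near the extreme vertex). To make the layer-based approach work you need to harvest the top-$k$ from each of the first $k$ layers (extending along the boundary from the binary-searched extreme vertex), giving $O(\log^3 n)$ per direction and $O(n\log n)$ total --- still comfortably $O(n\,\poly\log n)$. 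Alternatively, you could compute the $(\Theta(\log n))$-level of the dual arrangement once, as the paper does, and binary-search it for each $v_i$.
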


We henceforth assume that the learning 
phase succeeds, so the directions
and lines have properties from
Lemma~\ref{lem:canonicalDir} and \ref{lem:ellTail}.
We call $p \in P$ is \emph{\textup{\textbf{V}}-extremal} 
if $p = e_v$ for some $v \in \textbf{V}$.
Using the canonical directions 
from Lemma~\ref{lem:canonicalDir} and the
lines from Lemma~\ref{lem:ellTail}, we 
construct a \emph{canonical hull} $\cC$ 
that is ``typical'' for random $P$.
It is the intersection of the halfplanes 
below the $\ell_i$,
i.e., $\cC = \bigcap_{i=1}^{k} \ell_i^-$.
Thus, $\cC$ is a convex polygonal region 
bounded by the $\ell_i$. 
The following corollary follows from a union bound 
of Lemma~\ref{lem:ellTail} over all $i$. 
It implies that the total number
of points outside $\cC$ is $O(n/\log n)$.

\begin{corol}\label{cor:ellTail} 
  Assume the learning phase succeeds. 
  With probability at least $1 - n^{-2}$, 
  the following holds:  for all $i$, the 
  extremal point for $v_i$ lies outside $\cC$. 
  The number of pairs $(p, s)$, where 
  $p \in P \setminus \cC$, $s$ is an edge of $\cC$,
  and $s$ is visible from $p$, is $O(n/\log n)$.
\end{corol}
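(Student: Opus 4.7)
My plan is to derive the corollary by a union bound on Lemma~\ref{lem:ellTail} over all $k = n/\log^2 n$ canonical lines. Each line's failure probability is at most $n^{-3}$, so the total failure probability is at most $k n^{-3} \leq n^{-2}$. I would condition on the resulting event, under which $|\ell_i^+ \cap P| \in [1, c\log n]$ holds simultaneously for every $i \in \{1,\dots,k\}$, and derive both statements of the corollary from this pointwise bound.

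For the first claim, fix $i$. Since $|\ell_i^+ \cap P| \geq 1$, there is an input point $p$ strictly above $\ell_i$ (in the $v_i$ direction). Because $e_{v_i}$ maximizes $\langle\cdot,v_i\rangle$ over $P$ and $\ell_i$ is normal to $v_i$, we get $\langle e_{v_i}, v_i\rangle \geq \langle p, v_i\rangle$, placing $e_{v_i}$ in the closed halfplane $\overline{\ell_i^+}$. Continuity of the $\cD_j$ then rules out $e_{v_i}\in \ell_i$ almost surely, so $e_{v_i} \in \ell_i^+$; in particular $e_{v_i}$ is outside $\cC \subseteq \ell_i^-$.

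For the visibility bound, I would use the convexity of $\cC$: since $\cC = \bigcap_i \ell_i^-$, every edge $s$ of $\cC$ is supported by one of the lines $\ell_{i(s)}$, and an exterior point $p$ can see $s$ only from the outer side of its supporting line, i.e.\ only if $p \in \ell_{i(s)}^+$ (otherwise $\cC$ itself blocks the segment from $p$ to any point of $s$). Hence the number of visibility pairs incident to $s$ is at most $|\ell_{i(s)}^+ \cap P| \leq c\log n$. Since each of the $k$ lines contributes at most one edge to $\cC$, summing yields $k \cdot c\log n = O(n/\log n)$ pairs. Because every point in $P\setminus\cC$ sees at least one edge, this also bounds $|P\setminus\cC| = O(n/\log n)$, matching the remark below the corollary.

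The main obstacle is essentially just articulating the visibility-through-a-convex-polygon fact cleanly; the probabilistic content is entirely handled by the union bound. A small subtlety worth being explicit about is invoking continuity to exclude the measure-zero event $e_{v_i}\in\ell_i$, which is what promotes the ``at least as far in direction $v_i$'' inequality to strict membership in $\ell_i^+$.
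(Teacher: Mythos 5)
Your proof is correct and takes the same route the paper gestures at: the paper supplies no separate argument beyond the remark that the corollary ``follows from a union bound of Lemma~\ref{lem:ellTail} over all $i$,'' and you fill in exactly that union bound together with the two routine geometric consequences (extremal point lies in $\ell_i^+ \subseteq \R^2 \setminus \cC$, and each edge of $\cC$ is visible only from points in the open halfplane on its outer side). One tiny simplification: the continuity argument in the first part is superfluous --- once $|\ell_i^+\cap P|\ge 1$ gives a point $p$ strictly in $\ell_i^+$, the inequality $\langle e_{v_i},v_i\rangle \ge \langle p,v_i\rangle$ already places $e_{v_i}$ strictly in $\ell_i^+$, with no measure-zero boundary case to exclude.
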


\begin{figure}
  \centering
  \includegraphics{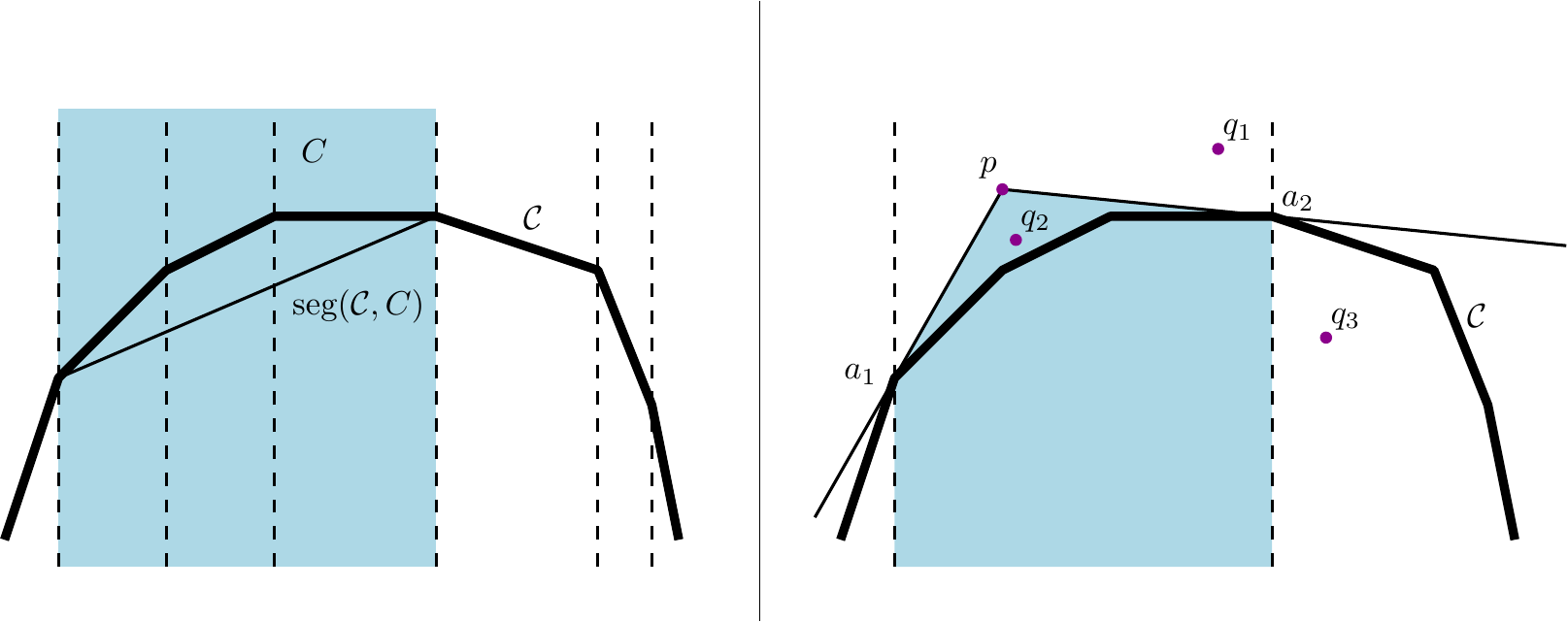} 
  \caption{(left) The $\cC$-leaf slabs
    are shown dashed. The shaded portion 
    represents a $\cC$-slab $C$. (right)
    $\pen(p)$ is shown shaded:
    $q_1$ 
    lies above the pencil: $q_2$ inside it; 
    $q_3$ is not comparable to it.}
\label{fig:pencil}
\end{figure}

To give some intuition about $\textbf{V}$, consider the simple example
where each distribution outputs a fixed point. We set $v_1$ to be the direction
pointing leftwards, so the extremal point $e_1$ is the leftmost point.
Starting from $e_1$, continue to the first extremal point $e_2$ such that
there are $O(\log n)$ extremal points between $e_1$ and $e_2$. Take any
direction $v_2$ such that $e_2$ is extremal for it. Continue in this manner
to get $\textbf{V}$. For each $v_i$, the line $\ell_i$ is normal to $v_i$ and has $\Theta(\log n)$ points
above it. So $\cC = \bigcap_{i=1}^{k} \ell_i^-$ is ``well under" $\UH(P)$, but not too far
below.

We list some preliminary concepts related to $\cC$,
see Fig.~\ref{fig:pencil}.
By drawing a vertical line through each 
vertex of $\cC$, we obtain a subdivision of 
the plane into vertical open slabs, the 
\emph{$\cC$-leaf-slabs}. A contiguous
interval of $\cC$-leaf slabs is again a 
vertical slab, called \emph{$\cC$-slab}.
The $\cC$-leaf-slabs define the slab 
structure for the upper hull algorithm, 
and we use Lemma~\ref{lem:tree} to construct 
appropriate search trees $T_1, \dots, T_n$ for 
the $\cC$-leaf slabs and for each   
distribution $\cD_i$.

For a $\cC$-slab $C$, we let $\seg(\cC, C)$ 
be the line segment between the two vertices of 
$\cC$ that lie on the vertical boundaries of $C$.
Let $p$ be a point outside of $\cC$, and let $a_1$ 
and $a_2$ be the vertices of $\cC$ where the 
two tangents for $\cC$ through $p$ touch 
$\cC$. The \emph{pencil slab} for $p$ is the
$\cC$-slab bounded by the vertical lines through 
$a_1$ and $a_2$.  The \emph{pencil of $p$}, 
$\pen(p)$ is the region inside the pencil slab for 
$p$ that lies below the line segments $\overline{a_1p}$ 
and $\overline{pa_2}$.  A point $q$ is 
\emph{comparable} to $\pen(p)$ if it lies 
inside the pencil slab for $p$. It lies 
\emph{above} $\pen(p)$ if it is comparable to
$\pen(p)$ but not inside it.

\subsection{Restricted Convex Hull Certificates}\label{sec:ch-cert}

We need to refine the certificates from 
Definition~\ref{def:cert-ch}.
Recall that a upper hull certificate
has a sorted list of extremal points 
in $P$, and a witness pair for each 
non-extremal point in $P$. The points 
$(q,r)$ form a witness pair for $p$ 
if $p \in \lss(q,r)$.
A witness pair $(q,r)$ is 
\emph{extremal} if both $q$ and
$r$ are extremal; it is
\emph{\textup{$\textbf{V}$}-extremal}
if both $q$ and $r$ are 
$\textbf{V}$-extremal.  Two distinct 
extremal points $q$ and $r$ are 
called \emph{adjacent} if there is no 
extremal point with $x$-coordinate 
strictly between
the $x$-coordinates of $q$ and $r$. 
Adjacent $\textbf{V}$-extremal points
are defined analogously.

We now define a 
\emph{$\cC$-certificate} 
for $P$.  It consists of (i) a 
list of the $\textbf{V}$-extremal 
points of $P$, sorted from left 
to right; and (ii) a list that 
has a $\cC$-slab $S_p$ for every 
other point $p \in P$. 
The $\cC$-slab $S_p$ contains $p$ and 
can be of three different kinds; see
Fig.~\ref{fig:cert}. 
Either
\begin{asparaenum}
  \item $S_p$ is a $\cC$-leaf slab; or
  \item $p$ lies below $\seg(\cC, S_p)$;
    or
  \item $S_p$ is the pencil slab for a 
    $\textbf{V}$-extremal vertex $e_v$ 
    such that $p$ lies in the pencil of 
    $e_v$.
\end{asparaenum}
The following key lemma is crucial to 
the analysis.  We defer the proof to 
the next section. The reader may wish to skip
that section and proceed to learn about 
the algorithm.

\begin{figure}
  \centering
  \includegraphics{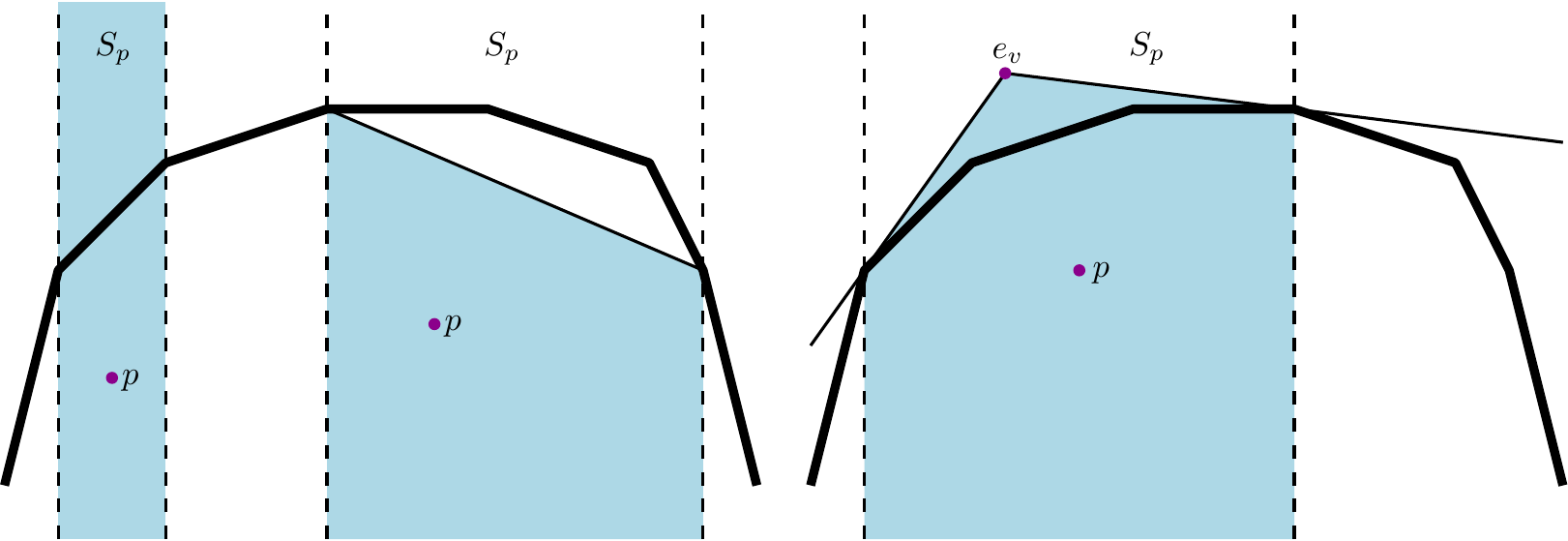} 
  \caption{The $\cC$-slab $S_p$ associated
    with $p$ can either be (i) a leaf slab; (ii) such
    that $p$ lies below $\seg(\cC, S_p)$; or (iii)
    such that $p$ lies in $\pen(e_v)$
    for a $\textbf{V}$-extremal vertex $e_v$.}
\label{fig:cert}
\end{figure}

\begin{lemma}\label{lem:entropy-sensitive-CH}
  Assume $\cC$ is obtained from a successful 
  learning phase. Let $\cT$ be 
  a linear comparison tree that computes 
  the upper hull of $P$. Then there is 
  an entropy-sensitive linear comparison tree 
  with expected depth $O(n + d_{\cT})$ that computes 
  $\cC$-certificates for $P$. 
\end{lemma}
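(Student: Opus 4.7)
The plan is to mirror the proof of Lemma~\ref{lem:comp}: take the given tree $\cT$ for the upper hull and augment it at each leaf with $O(n)$ additional comparisons that turn the standard certificate (sorted extremal points plus a witness pair for each non-extremal point) into a $\cC$-certificate, and then apply Lemma~\ref{lem:lin->entropy} to obtain an entropy-sensitive tree. First I would invoke Lemma~\ref{lem:lin->restricted} to convert $\cT$ into a restricted tree $\cT_1$ of expected depth $O(d_\cT)$ and worst-case depth $O(n^2)$; at each leaf $v$ of $\cT_1$ the region $\cR_v$ is a product $R_1 \times \cdots \times R_n$ and the standard certificate is available.

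At each leaf I would append a subtree that operates in two phases. In the first phase, I merge the upper-hull edges with the canonical directions $\textbf{V}$ by testing, for each edge $(e_j, e_{j+1})$ and each $v \in \textbf{V}$, whether $e_{j+1}$ lies above the line through $e_j$ with the fixed slope perpendicular to $v$; this is a type~(\ref{type:II}) comparison. Because the edge slopes and $\textbf{V}$ are both sorted, the merge identifies every $\textbf{V}$-extremal point $e_{v_i}$ using $O(h + |\textbf{V}|) = O(n)$ comparisons. A second merge, of the sorted extremal points with the fixed vertical lines bounding the $\cC$-leaf slabs, uses $O(n)$ type~(\ref{type:I}) comparisons to assign each extremal point to its $\cC$-leaf slab.

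The second phase assigns a $\cC$-slab $S_p$ to every non-$\textbf{V}$-extremal point $p$ in amortized $O(1)$ comparisons. For a non-extremal $p_i$ with witness pair $(q, r)$, I let $S$ be the $\cC$-slab whose vertical boundaries are the closest $\cC$-leaf-slab boundaries enclosing the strip $x(q) \le x \le x(r)$; a single type~(\ref{type:I}) comparison against the fixed chord $\seg(\cC, S)$ either confirms case~(2) (if $p_i$ lies below it) or certifies that $p_i$ lies above $\cC$. In the latter case, a constant number of type~(\ref{type:III}) comparisons against the tangent lines from one of the neighbouring $\textbf{V}$-extremal vertices $e_v$ (identified in phase one) through the corresponding fixed tangent points on $\cC$ verifies case~(3), placing $p_i$ in $\pen(e_v)$. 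Extremal but non-$\textbf{V}$-extremal points are handled analogously using their neighbouring $\textbf{V}$-extremal vertices on the hull.

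Together the two phases add $O(n)$ comparisons per leaf, so the augmented tree $\cT_2$ has expected depth $d_{\cT_1} + O(n) = O(n + d_\cT)$ and produces a $\cC$-certificate at every leaf; applying Lemma~\ref{lem:restricted->entropy} to $\cT_2$ then yields an entropy-sensitive tree of the same asymptotic expected depth. The main obstacle is the correctness and $O(1)$-comparison cost of the second phase: one must verify, using that $\textbf{V}$-extremal points lie outside $\cC$ (Corollary~\ref{cor:ellTail}) and that the pencils of $\textbf{V}$-extremal vertices cover the region between $\cC$ and the upper hull, that every non-extremal $p_i$ whose candidate slab $S$ fails case~(2) must in fact lie in the pencil of a $\textbf{V}$-extremal vertex incident to one of the hull edges near $(q, r)$, so that no binary search through $\textbf{V}$ is needed and a constant number of comparisons per point suffices.
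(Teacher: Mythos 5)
Your high-level plan (augment each leaf of a restricted version of $\cT$ with $O(n)$ comparisons to produce $\cC$-certificates, then apply Lemma~\ref{lem:lin->entropy}) is the right framework and is indeed how the paper proceeds. Phase one of your construction, identifying the $\textbf{V}$-extremal points and assigning $\cC$-leaf slabs to extremal points via sorted merges, also matches the paper in spirit. However, there are two genuine gaps in phase two that you yourself flag but do not close, and both are exactly where the paper does real work.

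First, the witness pair $(q,r)$ supplied by a leaf of $\cT$ need not consist of \emph{extremal} points: the definition of a convex hull certificate (Definition~\ref{def:cert-ch}) only requires $p \in \lss(q,r)$ for some $q,r \in P$, which may themselves be deep inside the hull. Your argument implicitly assumes $q$ and $r$ are on or near $\UH(P)$, otherwise the slab $S$ you form from $x(q) \le x \le x(r)$ and its chord $\seg(\cC,S)$ tell you nothing useful. Making the witness pairs extremal is non-trivial---one cannot simply chase witness pairs recursively because of potential blowup---and this is precisely the content of the paper's Lemma~\ref{lem:regular->extremal}, which uses a directed graph together with \textbf{Prune} and \textbf{Shortcut} operations and a stack to guarantee termination in $O(n)$ comparisons. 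This entire step is missing from your proposal.

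Second, even granting extremal witness pairs, the claim that the final placement of a non-extremal $p$ costs $O(1)$ comparisons is unsupported, and cannot be repaired by the argument you sketch. If $q$ and $r$ are far-apart extremal vertices, the slab $S$ may span $\Theta(n/\log^2 n)$ $\cC$-leaf slabs and hence contain many candidate $\textbf{V}$-extremal vertices; finding the pencil that contains $p$ among them is a search, not a constant-time test. The paper closes this gap with an intermediate reduction (Lemma~\ref{lem:extremal->V-extremal}): given an extremal witness pair $(q,r)$ and the adjacent $\textbf{V}$-extremal points $q',q''$ around $q$ and $r',r''$ around $r$, the six points $q',q,q'',r',r,r''$ are in convex position, so exactly one of three outcomes holds ($x(p)\in[x(q'),x(q'')]$, or $p\in\lss(q'',r')$, or $x(p)\in[x(r'),x(r'')]$), and which one can be decided in $O(1)$ comparisons. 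Only after this reduction to $\textbf{V}$-extremal or adjacent-$\textbf{V}$-extremal certificates does the conversion to $\cC$-certificates (Lemma~\ref{lem:V-extremal->canonical}) become a constant-work-per-point step using $f_1,f_2$ and Claim~\ref{clm:overlap}. Your ``main obstacle'' paragraph is, in effect, an admission that this chain of reductions has been skipped rather than a proof that it is unnecessary.
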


\subsection{Proof of Lemma~\ref{lem:entropy-sensitive-CH}}
\label{sec:certificate}

The proof goes through several 
intermediate steps that successively 
transform a upper hull certificate 
into a $\cC$-certificate. 
Each step incurs expected linear 
overhead. Then, it suffices to 
apply Lemma~\ref{lem:lin->entropy} to obtain an
entropy-sensitive comparison tree with 
the claimed depth.
A certificate $\gamma$ is \emph{extremal} 
if all witness pairs in 
$\gamma$ are extremal. We provide the
required chain 
of lemmas and
give each proof in a different subsection.
The following lemma is proved in 
Section~\ref{sec:regular->extremal}.

\begin{lemma}\label{lem:regular->extremal}
  Let $\cT$ be a linear comparison tree 
  for $\UH(P)$. There exists a linear 
  comparison tree with expected depth 
  $d_\cT + O(n)$ that computes an 
  extremal certificate for $P$.
\end{lemma}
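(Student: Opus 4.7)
The plan is to construct $\cT'$ by appending to each leaf $v$ of $\cT$ a subtree that, for every non-extremal point $p$, replaces the given (possibly non-extremal) witness pair by an extremal one. At any leaf $v$, the certificate produced by $\cT$ already supplies the sorted list $e_1,\dots,e_h$ of extremal points of $P$ together with a witness pair $(q,r)$ for each non-extremal $p$, so no new information about the extremal structure needs to be discovered; the subtree only has to produce and certify a fresh, extremal witness pair.

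The geometric engine is that, by concavity of the upper hull, if $e_i, e_{i+1}$ are the adjacent extremal points with $x(e_i)\le x(p)\le x(e_{i+1})$, then $p$ lies strictly below the chord $\overline{e_i e_{i+1}}$, so $(e_i,e_{i+1})$ is a valid (indeed adjacent) extremal witness pair. Hence the problem reduces to locating each non-extremal $p$ in its correct slab among the extremal $x$-coordinates, followed by a single type (IV) comparison against the chord to certify the new pair. The existing witness pair $(q,r)$ already narrows the search to the extremal points lying in the $x$-interval $[x(q),x(r)]$, which suggests a binary search using type (II) comparisons (vertical lines through the $e_j$).

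The hard part will be bounding the total overhead by $O(n)$ in expectation, since a naive binary search costs $\Theta(\log h)$ comparisons per point, for a total of $\Theta(n\log h)$ in the worst case. The plan is to absorb this cost into the work already done by $\cT$. First, apply Lemma~\ref{lem:lin->entropy} to replace $\cT$ by an entropy-sensitive version at a constant-factor penalty in expected depth; by Proposition~\ref{prop:entropyDepth}, the depth $d_v$ at a leaf decomposes as $\sum_p \log(1/\Pr[p\in R_p])$, where the $R_p$ are the polygonal regions assigned to the points at $v$. View each slab location as an $S$-restricted search over a balanced (hence $\tfrac12$-reducing) search tree on the extremal $x$-coordinates, where $S$ is the interval $[x(q),x(r)]\supseteq R_p^{(x)}$ certified by the witness pair.

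By Lemma~\ref{lem:search-time}, the expected cost of such a restricted search is $O(1+\log(1/\Pr[p\in S]))$, which is in turn bounded by $O(1+\log(1/\Pr[p\in R_p]))$ since $R_p\subseteq S$ in $x$. Summing over the $n$ points, the total expected overhead at the leaf is $O(n)+O(d_v)$, so averaging over leaves and folding the $O(d_v)$ term into $\cT$'s existing depth (absorbing the constant from Lemma~\ref{lem:lin->entropy}) gives an expected depth of $d_\cT+O(n)$ for $\cT'$, as required. The main pitfall to guard against is that the slab search must genuinely be a restricted search on a $\mu$-reducing tree, so the balanced tree over the extremal $x$-coordinates has to be fixed globally at $v$ and not depend on the input beyond $\cT$'s certificate.
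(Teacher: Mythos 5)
Your approach diverges fundamentally from the paper's, and it has a genuine gap. The paper appends a purely combinatorial $O(n)$-comparison post-processing step at each leaf: it builds a directed graph on $P$ with edges from each non-extremal $p$ to its witness pair, then repeatedly applies a \textbf{Prune}/\textbf{Shortcut} scheme (delete a zero-indegree non-extremal vertex; or reroute in-edges of a low-indegree vertex past it with $O(1)$ comparisons), finally popping a stack to replace each witness pair by an extremal one with $O(1)$ further comparisons per point. The crucial fact is that if $p \in \lss(q,r)$, $q \in \lss(q_1,q_2)$, and $r \in \lss(r_1,r_2)$, then $p$ lies below the upper hull of $\{q_1,q_2,r_1,r_2\}$, which has constantly many edges, so locating $p$ among them is $O(1)$ work. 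This gives exactly $d_\cT + O(n)$ additional depth, which is what the lemma claims.

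Your plan instead does a per-point binary search over the extremal $x$-coordinates, and then tries to charge the $\Theta(\log h)$ cost to $\cT$ via Lemma~\ref{lem:search-time}. There are two concrete problems. First, for Lemma~\ref{lem:search-time} to yield the bound $O(1 + \log(1/\Pr[p \in R_p]))$, the search tree must be $\mu$-reducing \emph{with respect to the distribution $\cD_i$} (restricted to the relevant region), not merely balanced with respect to the uniform distribution over the extremal indices. A balanced tree is $\tfrac12$-reducing only for the uniform measure on the universe; it need not have any reducing property relative to $\cD_i$, so the invocation of Lemma~\ref{lem:search-time} does not give the cost bound you claim. Second — the pitfall you flag but do not resolve — the extremal $x$-coordinates are input-dependent (they vary over $P \in \cR_v$), so there is no fixed universe over which to build the tree; even after passing to an entropy-sensitive tree, the regions $R_j$ of the extremal points may overlap in $x$-projection, and nothing in the paper's machinery gives a fixed slab structure at a leaf $v$ that separates them. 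Finally, even if the charging worked, your first step (applying Lemma~\ref{lem:lin->entropy}) already loses a constant factor, so you would obtain expected depth $O(d_\cT) + O(n)$ rather than the $d_\cT + O(n)$ the lemma asserts; the paper's combinatorial argument avoids this loss entirely, and the entropy-sensitive transformation is deferred until after all three certificate conversions (Lemmas~\ref{lem:regular->extremal}, \ref{lem:extremal->V-extremal}, \ref{lem:V-extremal->canonical}).
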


A certificate is 
\emph{\textup{$\textbf{V}$}-extremal} if 
it contains (i) a list of the 
$\textbf{V}$-extremal points of $P$, sorted 
from left to right; and (ii) a list that stores 
for every other point $p \in P$ \emph{either} 
a $\textbf{V}$-extremal witness pair
for $p$ \emph{or} two adjacent 
$\textbf{V}$-extremal points $e_1$ and $e_2$ such 
that $x(e_1) \leq x(p) \leq x(e_2)$. The next 
lemma is proved in 
Section~\ref{sec:extremal->V-extremal}.

\begin{lemma}\label{lem:extremal->V-extremal}
  Let $\cT$ be a linear comparison tree 
  that computes extremal certificates.
  There is a linear comparison tree 
  with expected depth $d_\cT + O(n)$ that computes 
  $\textup{\textbf{V}}$-extremal certificates.
\end{lemma}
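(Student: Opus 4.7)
The plan is to augment $\cT$ by appending $O(n)$ extra comparisons and bookkeeping after each leaf, so that the resulting tree emits a $\textbf{V}$-extremal certificate; this overhead sits at the tail of every root-to-leaf path, so the expected depth becomes $d_\cT+O(n)$.

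At any leaf $v$ of $\cT$ we have the extremal certificate: the sorted list of extremal points $u_1,\dots,u_h$ and, for each non-extremal $p$, an extremal witness pair $(q_p,r_p)$ with $x(q_p)\le x(p)\le x(r_p)$. First, I would identify the $\textbf{V}$-extremal subset of $\{u_1,\dots,u_h\}$. Since $\textbf{V}$ is already sorted (clockwise) and, on the upper hull, the extremal vertex for a rotating direction moves monotonically left-to-right, a single merge-like scan decides which $u_j$ equals $e_{v_i}$ using $O(h+k)=O(n)$ type~(\ref{type:II}) comparisons (lines through an input hull vertex with slope perpendicular to some $v_i$). This yields the sorted $\textbf{V}$-extremal subset $e_1,\dots,e_{h'}$, and simultaneously an index $\sigma(u_j)$ for every $u_j$ with $x(e_{\sigma(u_j)})\le x(u_j)\le x(e_{\sigma(u_j)+1})$. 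For each extremal-but-non-$\textbf{V}$-extremal $u_j$, I then emit the sandwich $(e_{\sigma(u_j)},e_{\sigma(u_j)+1})$; this is pure bookkeeping.

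For each non-extremal $p$, I would spend only $O(1)$ extra comparisons. Let $q''$ be the leftmost $\textbf{V}$-extremal point with $x(q'')\ge x(q_p)$ and $r''$ the rightmost with $x(r'')\le x(r_p)$; both are obtained in $O(1)$ from the $\sigma$-values above. Make two type~(\ref{type:II}) comparisons of $x(p)$ against $x(q'')$ and against $x(r'')$ (vertical lines through the input points $q''$ and $r''$). If $x(p)<x(q'')$ then $p$ lies in the same $\textbf{V}$-extremal cell as $q_p$, so I output the sandwich $(e_{\sigma(q_p)},e_{\sigma(q_p)+1})$; symmetrically if $x(p)>x(r'')$. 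Otherwise $x(q'')\le x(p)\le x(r'')$ and I output the $\textbf{V}$-extremal witness pair $(q'',r'')$; the degenerate case $q''=r''$ is resolved by the same two comparisons, since under $\cD$ the event $x(p)=x(q'')$ has probability zero.

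The main obstacle is verifying that, in the last case, $(q'',r'')$ actually witnesses $p$, i.e.\ $p\in\lss(q'',r'')$. The range condition $x(q'')\le x(p)\le x(r'')$ is enforced by the two comparisons above. For the $y$-condition I would invoke downward concavity of $\UH(P)$: since $q_p,q'',r'',r_p$ all lie on $\UH(P)$ with $x(q_p)\le x(q'')\le x(r'')\le x(r_p)$, both $q''$ and $r''$ lie on or above the chord $\overline{q_pr_p}$, so the shorter chord $\overline{q''r''}$ lies pointwise on or above $\overline{q_pr_p}$ throughout $[x(q''),x(r'')]$. Since the given certificate guarantees $p$ below $\overline{q_pr_p}$, it is also below $\overline{q''r''}$. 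Aggregating $O(1)$ per non-extremal point with the $O(n)$ scan yields the $O(n)$ overhead, and the resulting tree is still a linear comparison tree, using only type~(\ref{type:II}) comparisons in the added portion.
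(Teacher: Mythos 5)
Your proposal is correct and takes essentially the same approach as the paper: an $O(n)$ merge of the extremal list with $\textbf{V}$ to identify and sandwich the $\textbf{V}$-extremal vertices, followed by $O(1)$ extra comparisons per non-extremal point to split into the same three cases (left sandwich, $\textbf{V}$-extremal witness pair, right sandwich). The only cosmetic difference is naming: your $r''$ plays the role of the paper's $r'$ (the rightmost $\textbf{V}$-extremal point with $x$-coordinate at most $x(r_p)$); and you spell out the convexity argument for the $y$-condition that the paper compresses into ``the points $q',q,q'',r',r,r''$ are in convex position.''
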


Finally, we go from
$\textbf{V}$-extremal certificates to 
$\cC$-certificates. 
The proof is in Section~\ref{sec:V-extremal->canonical}.

\begin{lemma}\label{lem:V-extremal->canonical}
  Let $\cT$ be a linear comparison tree 
  that computes $\textup{\textbf{V}}$-extremal 
  certificates. There is a linear 
  comparison tree with expected depth 
  $d_{\cT} + O(n)$ that computes $\cC$-certificates. 
\end{lemma}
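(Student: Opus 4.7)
The plan is to form $\cT'$ by appending, to each leaf of $\cT$, a short subtree that rewrites the $\textbf{V}$-extremal certificate into a valid $\cC$-certificate using $O(n)$ additional comparisons in expectation. Because the sorted list of $\textbf{V}$-extremal points is already produced by $\cT$, the subtree only has to output, for each non-$\textbf{V}$-extremal point $p$, a $\cC$-slab $S_p$ of one of the three types in Section~\ref{sec:ch-cert}. I begin with a preprocessing pass that computes, for each of the $k$ $\textbf{V}$-extremal points $e_v$, its two tangent vertices on $\cC$ by binary search; since $\cC$ is a known convex polygon with $O(k) = O(n/\log^2 n)$ vertices, each tangent query costs $O(\log k)$ type-(\ref{type:III}) comparisons and the whole pass costs $O(k\log k) = o(n)$. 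From this point on I know the pencil slab and the two bounding tangent lines for every $\textbf{V}$-extremal point.

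Next I process each non-$\textbf{V}$-extremal point $p$ in $O(1)$ comparisons. In Case~A the certificate offers a $\textbf{V}$-extremal witness pair $(q, r)$, so $p$ lies below $\overline{qr}$ while $q, r$ lie above $\cC$. Using $O(1)$ type-(\ref{type:I}) comparisons of $x(p)$ against the $x$-coordinates of the (now known) tangent vertices of $q$ and $r$, together with $O(1)$ type-(\ref{type:III}) comparisons of $p$ against the corresponding tangent lines, I test whether $p \in \pen(q)$ or $p \in \pen(r)$. If so, I assign the corresponding pencil slab as $S_p$. Otherwise the geometry of $p$ sitting below $\overline{qr}$ and outside both pencil slabs places $p$ strictly below $\seg(\cC, S_p)$ for the $\cC$-slab $S_p$ spanning the intermediate tangent vertices, and I assign that $S_p$. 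In Case~B the certificate supplies only an adjacent $\textbf{V}$-extremal bracketing $(e_1, e_2)$, so I first perform a type-(\ref{type:IV}) comparison of $p$ against $\overline{e_1 e_2}$. If $p$ lies below the chord, then $(e_1, e_2)$ serves as a witness pair and I reuse the Case~A code. If $p$ lies above the chord, I re-run the Case~A pencil tests against $e_1$ and $e_2$ themselves.

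The hard part is the geometric verification behind the two ``otherwise'' branches: I need to show (a) that in Case~A any $p$ lying below $\overline{qr}$ and outside both pencil slabs really does sit below $\seg(\cC, S_p)$ for the tangent-spanning slab $S_p$, and (b) that in Case~B with $p$ above $\overline{e_1 e_2}$ the pencil slabs of $e_1$ and $e_2$ jointly cover the $x$-range $[x(e_1), x(e_2)]$, so that one of the pencil tests necessarily succeeds. Both claims should follow from the convexity of $\cC$ together with $e_1, e_2$ being adjacent extremal points for canonical directions of $\textbf{V}$, which couples the tangent vertices of their pencils and prevents any ``gap'' between them. Once these two geometric facts are in hand, the preprocessing costs $o(n)$ comparisons, the per-point work is $O(1)$, and the total added expected depth is $O(n)$, giving $d_{\cT'} = d_\cT + O(n)$ as claimed.
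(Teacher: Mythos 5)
Your overall plan (preprocess tangent vertices/pencils of all $\textbf{V}$-extremal points in $o(n)$ comparisons, then handle each remaining point in $O(1)$ comparisons) is the same as the paper's, and your Case~A reduction to the three outcomes (in $\pen(q)$, in $\pen(r)$, or below a $\cC$-segment spanning the intermediate tangent vertices) matches the paper's use of the rightmost/leftmost visible vertices $f_1, f_2$. However, there are two genuine gaps.

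First, both your Case~A and Case~B silently assume that the $\textbf{V}$-extremal points all lie strictly above $\cC$ (you write ``$q, r$ lie above $\cC$''), but this is a high-probability event, not a certainty: for some inputs the extremal point in direction $v_i$ may lie inside $\cC$, in which case the overlap property of adjacent pencil slabs (Claim~\ref{clm:overlap}) fails and your tangent-based case analysis breaks down. The paper first spends one comparison per $i$ to verify that the extremal point for $v_i$ lies in $\ell_i^+$; if any check fails it falls back to a brute-force binary search for all $n$ points, costing $O(n\log n)$ comparisons, but only with probability at most $n^{-3}$ (Corollary~\ref{cor:ellTail}), so the expected contribution is $O(1)$. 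Without this gate, your certificate construction is simply incorrect on those low-probability inputs.

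Second, your claim (b) that ``one of the pencil tests necessarily succeeds'' when $p$ lies above $\overline{e_1 e_2}$ is false. Claim~\ref{clm:overlap} only guarantees that the pencil \emph{slabs} of $e_1$ and $e_2$ cover the $x$-range, i.e.\ that $p$ is \emph{comparable} to one of the two pencils; it does not force $p$ to lie \emph{inside} a pencil. If $p$ is above both pencils, then $p$ lies outside $\cC$, and none of the three admissible types of $S_p$ apply except a $\cC$-leaf slab, which you must locate by binary search ($O(\log n)$ comparisons). You never account for this. The paper bounds the expected number of points outside $\cC$ by $O(n/\log n)$ (again Corollary~\ref{cor:ellTail}), so the total expected cost of these binary searches is $O(n)$. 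Without invoking this bound, the $O(n)$ overhead does not follow; a naive worst case gives $O(n\log n)$. These two omissions are exactly where the ``learning phase succeeded'' hypothesis and Corollary~\ref{cor:ellTail} do real work, and they need to appear explicitly in the argument.
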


Lemma~\ref{lem:entropy-sensitive-CH} follows by
combining Lemmas~\ref{lem:regular->extremal},
\ref{lem:extremal->V-extremal} 
and~\ref{lem:V-extremal->canonical}
with Lemma~\ref{lem:lin->entropy}.

\subsubsection{Proof of Lemma~\ref{lem:regular->extremal}}
\label{sec:regular->extremal}
 
\begin{figure}
  \centering
  \includegraphics[scale=0.6]{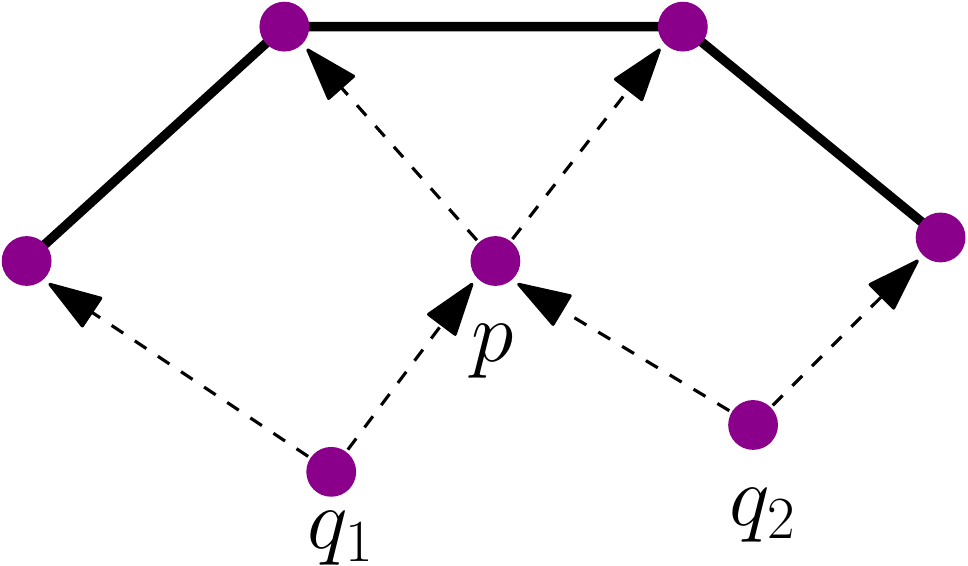}
  \caption{ The \textbf{shortcut} operation: 
    observe that computing the upper hull 
    of the out-neighbors of $p, q_1$, and $q_2$ suffices
    for removing $p$ from all witness pairs.}
  \label{fig:shortcut}
\end{figure}

We transform $\cT$ into a tree for 
extremal certificates. Since each leaf 
$v$ of $\cT$ corresponds to a certificate 
that is valid for all $P \in \cR_v$, it suffices 
to show how to convert a given certificate 
$\gamma$ for $P$ to an extremal 
certificate by performing $O(n)$ additional 
comparisons on $P$. We 
describe an algorithm for this task.

The algorithm uses two data structures: 
(i) a directed graph $G$ whose vertices 
are a subset of $P$; and (ii) a stack $S$.  
Initially, $S$ is empty and $G$ has 
a vertex for every $p \in P$. For each 
non-extremal $p \in P$, we add 
two directed edges $pq$ and $pr$ to $G$, 
where $(q,r)$ is the witness pair 
for $p$ according to $\gamma$. 
In each step, the algorithm performs 
one of the following operations, until 
$G$ has no more edges left (we will 
use the terms \emph{point} and 
\emph{vertex} interchangeably, since 
we always mean some $p \in P$).
\begin{itemize}
  \item \textbf{Prune.} If $G$ has 
    a non-extremal vertex $p$ with 
    indegree zero, we delete $p$ from 
    $G$ (together with its outgoing
    edges) and push it onto $S$.
	
  \item \textbf{Shortcut.} If $G$ 
    has a non-extremal vertex $p$ with
     indegree $1$ or $2$, we find 
     for each in-neighbor $q$ of $p$ 
     a witness pair that does not include 
     $p$, and we replace the out-edges 
     from $q$ by edges to this 
     new pair. (We explain shortly 
     how to do this.) The indegree of 
     $p$ is now zero. 
\end{itemize}
An easy induction shows that the 
algorithm maintains the following 
invariants: (i) all non-extremal 
vertices in $G$ have out-degree $2$; 
(ii) all extremal vertices of $G$ 
have out-degree $0$; (iii) for 
each non-extremal vertex $p$ of $G$, 
the two out-neighbors of $p$ 
constitute a witness pair for $p$; 
(iv) every $p \in P$ is either in $G$ 
or in $S$, but never both; (iv) when 
a point $p$ is added to $S$, then 
we have a witness pair $(q,r)$ for 
$p$ such that $q,r \notin S$. 

We analyze the number of comparisons 
on $P$. \textbf{Prune} needs no 
comparisons.  \textbf{Shortcut} is 
done as follows: we consider for each
in-neighbor $q$ of $p$ the upper convex 
hull $U$ for $p$'s two out-neighbors
and $q$'s other out-neighbor, and we 
find the edge of $U$ that lies above 
$q$. Since the $U$ constant size and 
since $p$ has in-degree at most $2$, 
this takes $O(1)$ comparisons,
see Fig.~\ref{fig:shortcut}.
There are at most $n$ calls to
\textbf{Shortcut}, so the total number 
of comparisons is $O(n)$.  Deciding which 
operation to perform depends solely on 
$G$ and requires no comparisons on $P$. 

We now argue that the algorithm cannot get 
stuck. That means that if $G$ has at 
least one edge, \textbf{Prune} or 
\textbf{Shortcut} can be applied. 
Suppose that we cannot perform 
\textbf{Prune}. Then each non-extremal 
vertex has in-degree at least $1$. 
Consider the subgraph $G'$ of $G$ 
induced by the non-extremal vertices. 
Since all extremal vertices have out-degree $0$,
all vertices in $G'$ have in-degree 
at least $1$. The average out-degree 
in $G'$ is at most $2$, so there must be 
a vertex with in-degree (in $G'$) $1$ or 
$2$.  This in-degree is the same in $G$, 
so \textbf{Shortcut} can be applied.

Thus, we can perform \textbf{Prune} or 
\textbf{Shortcut} until $G$ has no more 
edges and all non-extremal points are on 
the stack $S$.  Now we pop the points from 
$S$ and find extremal witness pairs for them.
Let $p$ be the next point on $S$.  By invariant 
(iv), there is a witness pair $(q,r)$
for $p$ whose vertices are not on $S$. Thus, 
each $q$ and $r$ is either extremal or we 
have an extremal witness pair for it.
Therefore, we can find an extremal witness 
pair for $p$ with $O(1)$ comparisons, as 
in \textbf{Shortcut}. We repeat this process 
until $S$ is empty. This takes $O(n)$ comparisons
overall, so we obtain an extremal 
certificate $\gamma'$ from $\gamma$ with $O(n)$ 
comparisons on $P$. 

\subsubsection{Proof of Lemma~\ref{lem:extremal->V-extremal}}
\label{sec:extremal->V-extremal}

\begin{figure}
  \centering
  \includegraphics{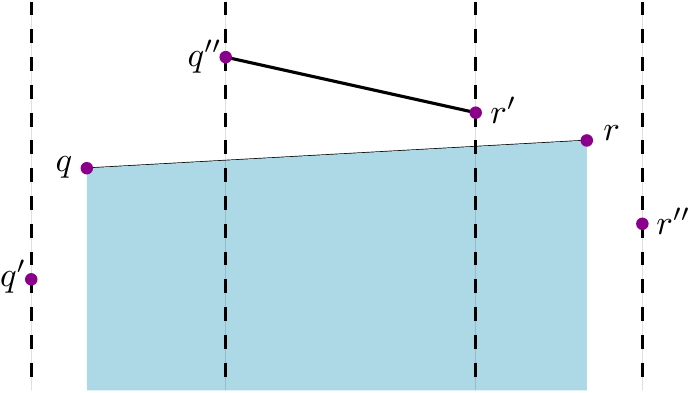}
  \caption{If $p$ is in the blue region then 
    $x(p) \in [x(q'),x(q'')]$, $p \in \lss(q'',r')$, 
    or $x(p) \in [x(r'),x(r'')]$.}
  \label{fig:v-extremal}
\end{figure}

As in Section~\ref{sec:regular->extremal}, it 
suffices to show how to convert a given 
extremal certificate into a 
$\textbf{V}$-extremal one
with $O(n)$ comparisons on $P$. This 
is done as follows.  First, we determine 
the $\textbf{V}$-extremal points on 
$\UH(P)$. This takes $O(n)$
comparisons by a simultaneous traversal 
of $\UH(P)$ and $\textbf{V}$.
Without further comparisons, we can 
now find for each extremal
point $p$ in $P$ the two adjacent 
$\textbf{V}$-extremal points that
have $p$ between them. This information 
is stored in the $\textbf{V}$-extremal
certificate.

Now let $p \in P$ be non-extremal, and 
let $(q, r)$ be the corresponding
extremal witness pair. We show how 
to find either a $\textbf{V}$-extremal 
witness pair or the right pair of 
adjacent $\textbf{V}$-extremal points.
We have determined adjacent 
$\textbf{V}$-extremal points $q', q''$ 
such that $x(q) \in [x(q'), x(q'')]$. 
(If $q$ is itself $\textbf{V}$-extremal, 
set $q' = q'' = q$.) Similarly, define 
adjacent $\textbf{V}$-extremal points 
$r', r''$. We know that $p$ lies in
$\lss(q,r)$ and hence 
$x(p) \in [x(q),x(r)]$.  Furthermore, 
the points $q',q,q'',r',r,r''$ are 
in convex position.  Since $p$ is in 
$\lss(q,r)$, one of the following 
must happen: $x(p) \in [x(q'),x(q'')]$, 
$p$ lies in $\lss(q'',r')$, or 
$x(p) \in [x(r'),x(r'')]$; see 
Fig.~\ref{fig:v-extremal}.  
We can determine which in 
$O(1)$ comparisons.

\subsubsection{Proof of Lemma~\ref{lem:V-extremal->canonical}}
\label{sec:V-extremal->canonical}

As in Sections~\ref{sec:regular->extremal} 
and~\ref{sec:extremal->V-extremal}, we convert a
$\textbf{V}$-extremal certificate $\gamma$
into a $\cC$-certificate with $O(n)$ 
expected comparisons.
For each $\textbf{V}$-extremal point in 
$\gamma$, we perform a binary search to 
find the $\cC$-leaf slab that contains 
it. This requires $o(n)$ comparisons, 
since there are at most $n/\log^2 n$ 
$\textbf{V}$-extremal points and 
since each binary search needs 
$O(\log n)$ comparisons. Next, we 
check for each $i \leq k$ if 
the extremal point for $v_i$ lies 
in $\ell^+_i$. This takes one 
comparison per point. If any check 
fails, we declare failure and use 
binary search to find for every 
$p \in P$ a $\cC$-leaf slab that 
contains it.

We now assume that there exists a 
$\textbf{V}$-extremal point
in every $\ell^+_i$. (This implies 
that all $\textbf{V}$-extremal points
lie outside $\cC$.) We use binary search 
to determine the pencil of each 
$\textbf{V}$-extremal point. Again, 
this takes $o(n)$ comparisons.
Now let $p \in P$ be not 
$\textbf{V}$-extremal. We use $O(1)$ 
comparisons and either find the slab $S_p$
or determine that $p$ lies above $\cC$.
The certificate $\gamma$ assigns to 
$p$ two $\textbf{V}$-extremal points 
$e_1$ and $e_2$ such that either 
(i) $(e_1, e_2)$ is a 
$\textbf{V}$-extremal witness pair 
for $p$; or (ii) $e_1$ and $e_2$ 
are adjacent and 
$x(e_1) \leq x(p) \leq x(e_2)$. 
We define $f_1$ as the rightmost
visible point of $\cC$ from $e_1$ 
and $f_2$ as the leftmost visible
point from $e_2$. 

Let us consider the first case; see 
Fig.~\ref{fig:canonical}(left). The
point $p$ is below $\overline{e_1 e_2}$. 
Since $e_1, f_1, f_2, e_2$ are in 
convex position, $\overline{e_1 e_2}$ is 
below their upper hull.
This means that one of the following 
holds:
$x(p) \in [x(e_1),x(f_1)]$, 
$x(p) \in [x(f_2),x(e_2)]$, or $p$
is below $\overline{f_1 f_2}$. 
This can be determined in $O(1)$
comparisons. In the first two 
cases, $p$ lies in a pencil (and 
hence we find an appropriate $S_p$),
and in the last case, we find a 
witness $\cC$-slab. Now for the second case. 
We need the following claim.

\begin{claim}\label{clm:overlap}
  If, for all $i$ there 
  is a $\textbf{V}$-extremal 
  point in $\ell^+_i$, then the 
  pencils of any two adjacent 
  $\textbf{V}$-extremal points 
  either overlap or share a slab boundary.
\end{claim}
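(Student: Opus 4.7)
The plan is to produce, for any two adjacent $\textbf{V}$-extremal points $e_i$ and $e_{i+1}$, a single vertex of $\cC$ that lies in both $\pen(e_i)$ and $\pen(e_{i+1})$; this immediately gives the overlap (or shared slab boundary in the degenerate case).

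First I would unpack the hypothesis. Since $e_j$ is the extremal input point in direction $v_j$, having any $\textbf{V}$-extremal point in $\ell_j^+$ forces $e_j$ itself into $\ell_j^+$, and therefore every $\textbf{V}$-extremal point lies strictly outside $\cC = \bigcap_i \ell_i^-$. The workhorse is then the following visibility statement, which I will prove as the key lemma: if $v_i$ belongs to the closed outward-normal cone of $\cC$ at a vertex $c$, then $c \in \pen(e_i)$. To prove it, let $m$ be the line through $c$ perpendicular to $v_i$; by the choice of $v_i$, all of $\cC$ lies in the closed halfplane $m^-$ (the halfplane of smaller $v_i$-projection). Because $c \in \cC \subseteq \ell_i^-$, its $v_i$-projection is at most that of $\ell_i$, while $e_i \in \ell_i^+$ has a strictly larger $v_i$-projection, so $e_i \in m^+$. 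The segment $\overline{e_i c}$ therefore lies in $m^+ \cup \{c\}$ and cannot meet the interior of $\cC$, so $c$ is visible from $e_i$ and thus lies in its pencil slab.

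Next I would use the fact that the outward-normal cones at the vertices of $\cC$ partition the circle of upward directions, and the boundary between two consecutive cones is exactly the outward normal to their shared edge. Since every edge of $\cC$ lies on some $\ell_j$, every such boundary direction is of the form $v_j$, i.e., lies in $\textbf{V}$. Let $c$ be the vertex whose tangent cone contains $v_i$ and $c'$ the one containing $v_{i+1}$. If $c = c'$, the key lemma places $c$ in both pencil slabs and we are done. Otherwise the cones of $c$ and $c'$ are separated by the normal $v_{j^*}$ of the $\cC$-edge joining them. Since $v_i$ and $v_{i+1}$ are consecutive in $\textbf{V}$ and $v_{j^*} \in \textbf{V}$, we must have $j^* = i$ or $j^* = i+1$. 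In either case $v_{j^*}$ lies on the common boundary of both tangent cones, so the key lemma places \emph{both} $c$ and $c'$ in $\pen(e_{j^*})$; combined with $c \in \pen(e_i)$ and $c' \in \pen(e_{i+1})$ (also from the key lemma), this produces a vertex of $\cC$ lying in both $\pen(e_i)$ and $\pen(e_{i+1})$.

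The main technical worry is the unbounded parts of $\cC$: if the leftmost/rightmost directions of $\textbf{V}$ do not wrap far enough, $\cC$ may be unbounded horizontally and some of the extreme $v_i$'s will not lie in the tangent cone of any finite vertex. I would handle this by either observing that in that regime the relevant pencil slabs are themselves unbounded on the corresponding side and hence overlap trivially, or by compactifying $\cC$ with the two ``vertices at infinity'' and verifying that the supporting-line argument in the key lemma goes through for them verbatim. Modulo this bookkeeping, the supporting-line separation argument is the only geometric input that is doing real work.
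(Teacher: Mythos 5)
Your proposal is correct in spirit but takes a genuinely different route from the paper. The paper argues by contradiction: if the pencil slabs of adjacent $e_1, e_2$ are strictly separated, there is an edge $a$ of $\cC$ (with $f_1$ as left endpoint) lying strictly between them and invisible to both; the hypothesis then produces a $\textbf{V}$-extremal point seeing $a$, which must lie $x$-between $e_1$ and $e_2$, contradicting adjacency. You instead argue constructively: after noting that the hypothesis forces $e_j \in \ell_j^+$ for every $j$, your key lemma (a clean supporting-line argument) shows that a vertex $c$ of $\cC$ whose outer-normal cone contains $v_j$ is visible from $e_j$ and hence lies in $e_j$'s pencil slab; then you use the fact that every edge normal of $\cC$ lies in $\textbf{V}$, together with consecutiveness of the two relevant directions, to exhibit one vertex of $\cC$ in both pencil slabs. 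This buys you an explicit witness for the overlap and makes the role of $\textbf{V}$ more transparent.

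A few loose ends should be tied. (1) You correctly flag the unbounded case: if $\ell_a$ is a redundant constraint for $\cC$, the direction $v_a$ is not covered by any vertex cone and the key lemma does not apply; your two proposed fixes are plausible but not carried out, whereas the paper's contradiction argument sidesteps this entirely (a gap edge always exists between two finite, strictly separated slabs). (2) When $v_a$ or $v_{a+1}$ lies on a cone boundary, ``the vertex whose cone contains $v_a$'' is not unique; you should pick $c$ and $c'$ so that they coincide or are the two endpoints of the edge whose normal is $v_a$ (resp.\ $v_{a+1}$), otherwise the phrase ``the $\cC$-edge joining them'' is unjustified. (3) Adjacent $\textbf{V}$-extremal points $e$ and $e'$ need not equal $e_{v_i}$ and $e_{v_{i+1}}$ for consecutive indices as your notation suggests; one should first pass, as the paper does in the proof of Claim~\ref{clm:extremal_pencil}, to the last direction for which $e$ is extremal and the first direction for which $e'$ is extremal, which are consecutive. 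These are fixable details rather than conceptual errors.
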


\begin{proof} 
Refer again to 
Fig.~\ref{fig:canonical}(left). 
Let $e_1$ and $e_2$ be two adjacent 
$\textbf{V}$-extremal vertices 
such that their pencil slabs neither 
overlap nor share a boundary. Then 
$f_1$ is not visible from $e_2$. 
Consider the edge $a$ of $\cC$ 
where $f_1$ is the left endpoint. 
The edge $a$ is not visible from 
either $e_1$ or $e_2$ and is 
between them. By assumption, 
there is an extremal point 
$x$ of $P$ that sees $a$. But 
the point $x$ cannot lie to the 
left of $e_1$ or to the right 
of $e_2$ (that would violate the 
extremal nature of $e_1$ or $e_2$). 
Hence, $x$ must be between $e_1$ and 
$e_2$, contradicting the fact that 
they are adjacent.
\end{proof}

Claim~\ref{clm:overlap} implies that
$p$ is comparable to one of 
$\pen(e_1)$, $\pen(e_2)$.
By $O(1)$ comparisons, we can check 
if $p$ is contained in either 
pencil or is above $\cC$.
\begin{figure}
  \centering
  \includegraphics{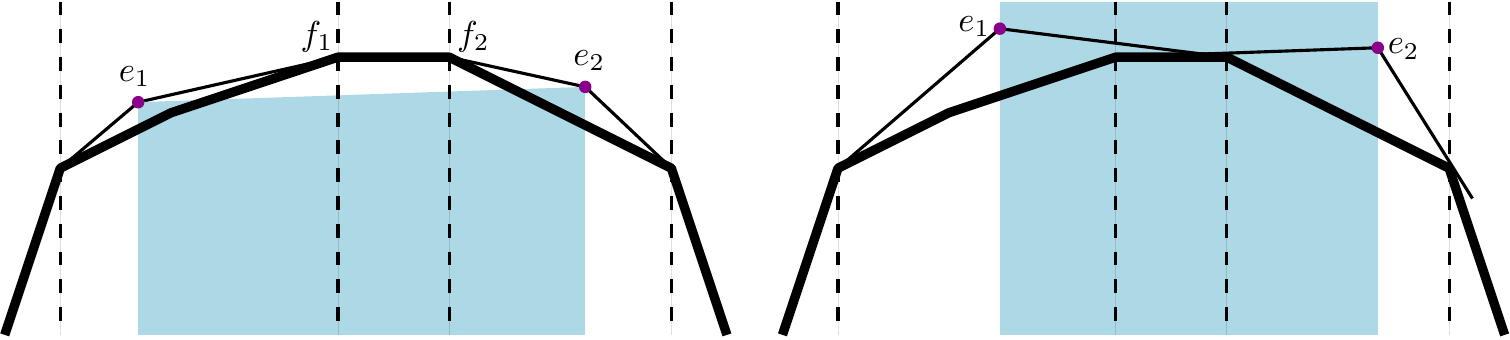}
  \caption{$\cC$-certificates: in each part, 
   $p$ is contained in the shaded region.}
  \label{fig:canonical}
\end{figure}
Finally, for all points determined 
to be above $\cC$, we use binary search
to place them in a $\cC$-leaf slab. This 
gives an appropriate $S_p$ for each 
$p \in P$, and the canonical certificate 
is complete.
We analyze the total number of comparisons.
Let $X$ be the indicator random variable 
for the event that there exist some 
$\ell^+_i$ without a $\textbf{V}$-extremal 
point. Let $Y$ denote the number of 
points above $\cC$. By 
Corollary~\ref{lem:ellTail}, $\EX[X] \leq n^{-3}$
and $\EX[Y] = O(n/\log n)$. 
The number of comparisons is at most 
$O(Xn\log n + n + Y\log n)$,
the expectation of which is $O(n)$.

\subsection{The algorithm}

Finally, we are ready to 
describe the details of our convex 
hull algorithm. It has two parts: 
the \emph{location algorithm} and the 
\emph{construction algorithm}. The 
former algorithm determines the location 
of the input points with respect to the 
canonical hull $\cC$. It must be careful 
to learn just the right amount of information 
about each point. The latter algorithm uses 
this information to compute the convex
hull of $P$ quickly.

\subsubsection{The location algorithm}
\label{sec:loc_alg}
Using Lemma~\ref{lem:tree},
we obtain near-optimal search trees 
$T_i$ for the $\cC$-leaf slabs. 
The algorithm searches progressively 
for each $p_i \in P$ in its 
tree $T_i$. Again, we interleave the
coordinate searches, and we abort the
search for a point as soon as we have 
gained enough information about
it. The location algorithm maintains 
the following information. 

\begin{itemize}
\item 
\textbf{Current slabs $C_i$.}
  For each point $p_i \in P$, 
  we store a current $\cC$-slab $C_i$ 
  containing $p_i$ that corresponds 
  to a node of $T_i$. 
\item \textbf{Active points $A$.}
  The active points are stored in a 
  priority-queue $L(A)$ as in 
  Claim~\ref{clm:ds}. The key associated 
  with an active point $p_i \in A$ is the size 
  of the associated current slab $C_i$ 
  (represented as an integer between $1$ and $k$).
\item \textbf{Extremal candidates $\tilde{e}_v$.}
  For each canonical direction 
  $v \in \textbf{V}$, we store a point
  $\tilde{e}_v \in P$ that lies outside 
  of $\cC$.  We call $\tilde{e}_v$ an 
  \emph{extremal candidate} for $v$.
\item
  \textbf{Pencils for the points outside of $\cC$.}
  For each point $p$ that has been 
  located outside of $\cC$, we store its
  pencil $\pen(p)$.
\item
  \textbf{Points with the left- and rightmost pencils.}
  For each edge $s$ of $\cC$, we store two  
  points $p_{s1}$
  and $p_{s2}$ such that 
  (i) $p_{s1}$ and $p_{s2}$ lie outside of $\cC$;  
  (ii) $s$ lies in $\pen(p_{s1})$ and $\pen(p_{s2})$;
  (iii) among all pencils seen 
  so far that contain $s$, the left boundary of 
  $\pen(p_{s1})$  lies furthest to the left 
  and the right boundary of $\pen(p_{s2})$  
  lies furthest to the right. 
\end{itemize}

\begin{figure}
  \centering
  \includegraphics{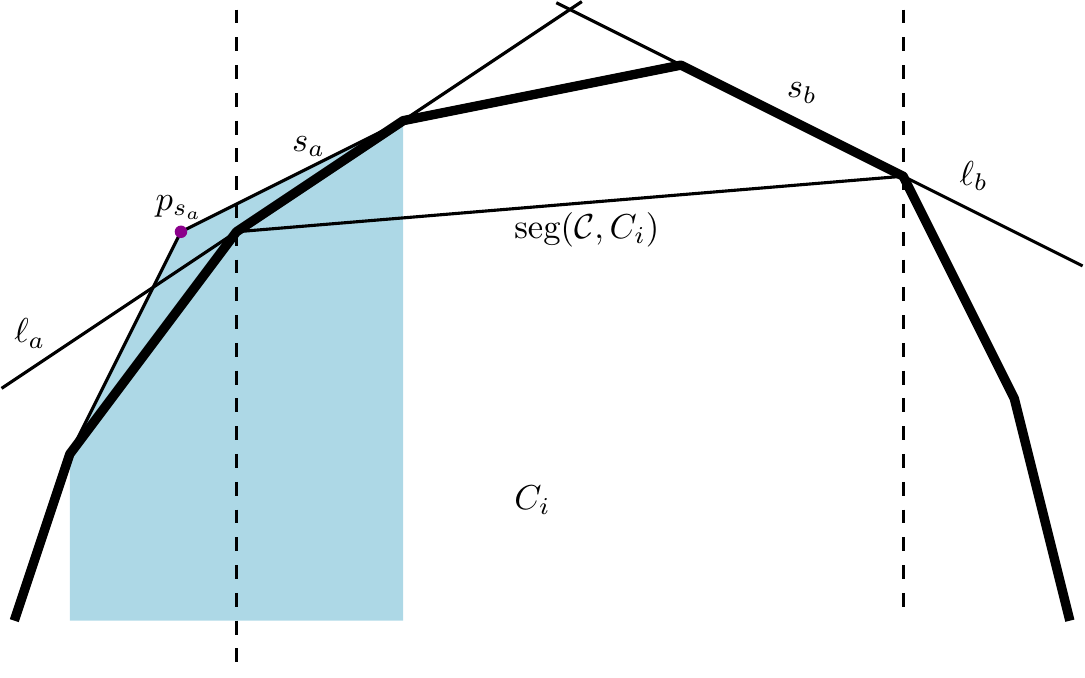}
  \caption{The algorithm:  the boundary 
  of $C_i$ is shown dashed, the pencil 
  $\pen(p_{s_a})$ is shaded.}
  \label{fig:algorithm}
\end{figure}

Initially, we set $A = P$ and each 
$C_i$ to the root of the corresponding search
tree $T_i$. The extremal candidates 
$\tilde{e}_v$ as well as the points 
$p_{s1}, p_{s2}$ with the left- and 
rightmost pencils are set to the 
null pointer. The location algorithm 
proceeds in \emph{rounds}. In each round, 
we perform a \findmax{} on $L(A)$. 
Suppose that \findmax{} returns $p_i$.
We compare $p_i$ with the vertical 
line that corresponds to its current node
in $T_i$ and advance $C_i$ to 
the appropriate child. This reduces the
size of $C_i$, so we also perform a 
\deckey{} on $L(A)$. Next, we distinguish 
three cases:

\noindent\textbf{Case 1}: $p_i$ lies below 
$\seg(\cC, C_i)$. We declare $p_i$ 
inactive and \delete{} it from $L(A)$.

For the next two cases, we know 
that $p_i$ lies above $\seg(\cC, C_i)$.
Let $\ell_a$, $\ell_b$ be the canonical 
lines that support the edges
$s_a$ and $s_b$ of $\cC$ that are 
incident to the boundary vertices of
$C_i$ and lie inside of $C_i$; see 
Fig.~\ref{fig:algorithm}.
We check where $p_i$ lies with respect 
to $\ell_a$ and $\ell_b$.

\noindent\textbf{Case 2}:  
$p_i$ is above $\ell_a$ or above $\ell_b$. 
This means that $p_i$ is outside of $\cC$.  
We declare $p_i$ inactive and
\delete{} it from $L(A)$. Next, we 
perform a binary search to find 
$\pen(p_i)$ and all the edges of 
$\cC$ that are visible 
from $p_i$. For each such edge $s$,
we compare $p_i$ with the extremal 
candidate for $s$, and if $p_i$ is more 
extreme in the corresponding direction, 
we update the extremal candidate accordingly. 
We also update the points $p_{s1}$ and 
$p_{s2}$ to $p_i$, if necessary. 

\noindent\textbf{Case 3}: $p_i$ lies 
below $\ell_a$ and $\ell_b$. Recall that
$\ell_a$ corresponds to the edge $s_a$ of 
$\cC$ and $\ell_b$ corresponds to the 
edge $s_b$ of $\cC$. We take the rightmost 
pencil for $s_a$ and the leftmost pencil for $s_b$ 
(if they exist); see Fig.~\ref{fig:algorithm}.
We compare $p_i$ with these pencils. 
If $p_i$ lies inside a pencil, we are done.
If $p_i$ is above a pencil, we learn 
that $p_i$ lies outside of $\cC$, and we 
process as in Case~2. In both situations, 
we declare $p_i$ inactive and \delete{} it 
from $L(A)$. 
If neither of these happen, $p_i$ remains active.

The location algorithm continues until $A$ is 
empty (note that every point becomes inactive 
eventually, because as soon as $C_i$ is a leaf
slab, either Case 1 or Case 2 applies).

\subsubsection{Running time of the location algorithm}

We now analyze the running time of 
the location algorithm, starting
with some preliminary claims.
The algorithm is deterministic, so 
we can talk of deterministic 
properties of the behavior on any input.

\begin{claim}\label{clm:algoext} 
  Fix an input $P$. Let
  $e_v \in P$ be $\textbf{V}$-extremal,
  and let $S$ be the pencil
  slab for $e_v$. Once the 
  search for $e_v$ reaches a slab $D$ 
  with $|D| \leq |S|$, 
  $e_v$ will be identified as an extremal point 
  for direction $v$.
\end{claim}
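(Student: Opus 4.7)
The plan is to show that the first time $e_v$'s current slab $D$ satisfies $|D| \le |S|$, Case~2 of the location algorithm triggers for $e_v$, and that the binary search inside Case~2 then sets the extremal candidate $\tilde{e}_v$ to $e_v$. First, I would note that $e_v$ lies above $\cC$: by Corollary~\ref{cor:ellTail}, $e_v$ is outside $\cC$, and since $\ell_v$ supports $\cC$ with $e_v \in \ell_v^+$, the point $e_v$ sits above the upper (concave) boundary of $\cC$. In particular, $e_v$ lies above every chord $\seg(\cC, D)$ of that boundary, so Case~1 never fires for $e_v$.

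The hard part is ruling out Case~3 once $|D| \le |S|$. The edges of $\cC$ visible from $e_v$ are exactly those strictly between the two tangent points from $e_v$ to $\cC$, i.e., the edges lying inside the pencil slab $S$. Let $s_a, s_b$ be the edges of $\cC$ incident to the boundary vertices of $D$ and contained in $D$. Since $e_v$ lies in both $D$ and $S$, the two slabs overlap; combined with $|D| \le |S|$, this means $D$ cannot strictly extend beyond $S$ on both sides. Hence at least one of $s_a, s_b$ lies inside $S$, and is therefore visible from $e_v$. For any edge $s$ of $\cC$ visible from $e_v$, its supporting canonical line must separate $e_v$ from the interior of $\cC$ (otherwise the segment from $e_v$ to a point of $s$ would enter that interior), so $e_v$ lies in the outer halfplane of that line. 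Thus $e_v$ lies above $\ell_a$ or above $\ell_b$, and Case~2 applies.

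Finally, in Case~2 the algorithm locates $\pen(e_v)$ and all edges of $\cC$ visible from $e_v$ by binary search. Let $s_v$ be the edge of $\cC$ on $\ell_v$. Since $e_v \in \ell_v^+$ and $\cC \subseteq \ell_v^-$, the segment from $e_v$ to any point of $s_v$ never enters the interior of $\cC$, so $s_v$ is visible. The algorithm therefore compares $e_v$ with the current extremal candidate $\tilde{e}_v$; because $e_v$ is the true extremum of $P$ in direction $v$, this comparison sets $\tilde{e}_v := e_v$, identifying it as extremal for $v$ as required.
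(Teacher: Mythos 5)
Your proof is correct and takes essentially the same approach as the paper: once $|D|\le|S|$ and $e_v\in D\cap S$, at least one of the boundary edges $s_a,s_b$ of $\cC$ for $D$ lies inside the pencil slab $S$, hence is visible from $e_v$, so $e_v\in\ell_a^+\cup\ell_b^+$ and Case~2 fires. You merely spell out some steps the paper leaves implicit (Case~1 never applies, and the Case~2 binary search actually updates $\tilde e_v$ to $e_v$ because $e_v$ sees the edge of $\cC$ supported by $\ell_v$).
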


\begin{proof} 
At least one vertical boundary 
line of $D$ lies inside (the 
closure of) $S$ and 
$D \cap S$ contains at least 
one leaf slab. Since $S$ is a 
pencil slab, $e_v$ sees all edges 
of $\cC$ in $D \cap S$, so one of 
the boundary edges $s_a$ or $s_b$ 
corresponding to $D$, as used in 
Cases 2 and 3 of the algorithm 
(see Fig.~\ref{fig:algorithm}), must
be visible to $e_v$. 
Hence, $e_v$ lies in 
$\ell_a^+ \cup \ell_b^+$, and
this is detected in Case~2 
of the location algorithm.
\end{proof}

\begin{figure}
  \centering
  \includegraphics{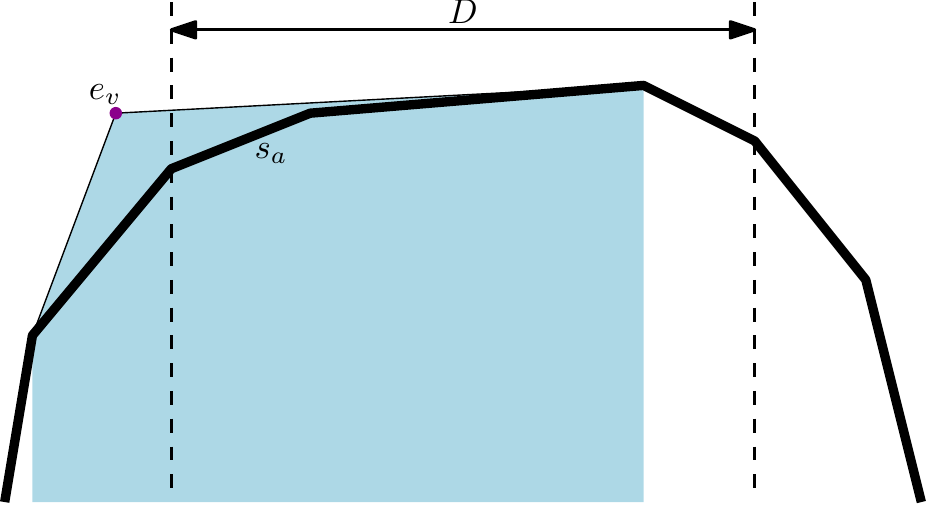}
  \caption{The left boundary of slab $D$ is 
    contained in the pencil slab of $e_v$.}
  \label{fig:case-3}
\end{figure}

\begin{claim}\label{clm:algo-inact} 
  Let $e_v \in P$ be 
  $\textbf{V}$-extremal, and 
  $S$ the pencil slab for $e_v$.
  Suppose $p \in P$ lies in 
  $\pen(e_v)$. Once
  the search for $p$ reaches 
  a slab $D$ with $|D| \leq |S|$,
  the point $p$ becomes inactive  
  in the next round that it is 
  processed.
\end{claim}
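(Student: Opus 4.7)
The plan is to follow the same line of argument used in Claim~\ref{clm:algoext}, augmented by an invocation of the pencil-storage data maintained by the location algorithm. Let $r'$ denote the round in which $p$ is next selected by \findmax{} after its search reaches $D$. I would first observe that by round $r'$ the candidate $e_v$ must already be inactive: since \findmax{} returns the active point with the largest current key and $p$'s key at round $r'$ equals $|D| \le |S|$, every point still active at round $r'$ has current slab of size at most $|S|$. If $e_v$ were still active, this would contradict Claim~\ref{clm:algoext}, which guarantees that $e_v$ is deactivated via Case~2 the moment its search first reaches a slab of size at most $|S|$. Hence $e_v$ was processed in Case~2 of some earlier round, at which point $\pen(e_v)$ was constructed and, for every edge $s$ of $\cC$ visible from $e_v$, the pair $(p_{s,1}, p_{s,2})$ was updated.

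Next I would localize $D$ with respect to $S$ exactly as in the proof of Claim~\ref{clm:algoext}: the conditions $|D| \le |S|$ and $p \in D \cap S$ together force at least one vertical boundary of $D$ to lie in the closure of $S$. The boundary edge $s^* \in \{s_a, s_b\}$ of $\cC$ incident to this vertex of $D$ is then contained in $S$, and hence visible from $e_v$; consequently $\pen(e_v)$ was a candidate for $(p_{s^*,1}, p_{s^*,2})$ at the time $e_v$ was processed, so $\pen(p_{s^*,2})$ reaches at least as far to the right (or $\pen(p_{s^*,1})$ at least as far to the left) as $\pen(e_v)$.

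The argument then concludes by tracing the three cases of the location algorithm in round $r'$. Case~1 deactivates $p$ if $p$ lies below $\seg(\cC, D)$, and Case~2 deactivates $p$ if $p$ lies above $\ell_a$ or $\ell_b$. If neither applies, the algorithm runs Case~3 and tests $p$ against $\pen(p_{s_a,2})$ and $\pen(p_{s_b,1})$. The main obstacle is the geometric claim that in this last situation one of these two stored pencils must contain $p$. To prove it, I would use convexity of $\cC$ together with the fact that the stored pencil matching the side of $s^*$ has reach at least that of $\pen(e_v)$ on that side. Since the ``roof'' of a pencil is a concave function of $x$ and any two such roofs meet the curve $\cC$ at their tangent points, a direct comparison over the $x$-range of $D$ shows that the roof of the stored pencil dominates the roof of $\pen(e_v)$ throughout $D$. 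Consequently $p \in \pen(e_v) \cap D$ implies $p$ lies in the stored pencil, and Case~3 deactivates $p$.
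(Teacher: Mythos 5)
The proposal follows the paper's proof structure closely for the first two steps --- showing that $e_v$ has already been deactivated by the time $p$ is next processed, and identifying a boundary edge $s^* \in \{s_a, s_b\}$ of $D$ that lies in $S$ and is therefore visible from $e_v$. Those steps match the paper.

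The gap is in the final, geometric step. You try to prove that in Case~3 the point $p$ must be \emph{inside} the stored pencil, by asserting that ``the roof of the stored pencil dominates the roof of $\pen(e_v)$ throughout $D$.'' That dominance is not true in general, and it is not what is needed. By Claim~\ref{clm:no_pencil_overlap}, $e_v$ does not lie in $\pen(p_{s^*,2})$ (when $p_{s^*,2} \neq e_v$), so at the abscissa $x(e_v)$ the roof of $\pen(p_{s^*,2})$ lies strictly below $e_v$, while the roof of $\pen(e_v)$ peaks exactly at $e_v$; if $x(e_v) \in D$ (which is entirely possible), your claimed dominance already fails there. The two roofs are tents of different heights and bases and can cross inside $D$.

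The correct --- and simpler --- argument is the one the paper uses: the pencil slab of the stored rightmost (resp.\ leftmost) pencil for $s^*$ spans all of $D \cap S$, because it contains $s^*$ (whose left endpoint is the left boundary of $D$) and its right boundary extends at least to the right boundary of $S$ (since $\pen(e_v)$ was a candidate). Thus $p \in D \cap S$ is \emph{comparable} to this stored pencil. Once $p$ is comparable, Case~3 resolves it: either $p$ is inside the stored pencil (deactivated) or $p$ is above it, which proves $p$ lies outside $\cC$ and triggers the Case~2 treatment (also deactivating $p$). You do not need, and cannot in general prove, that $p$ is inside the stored pencil; the ``above $\Rightarrow$ outside $\cC$'' branch is an essential part of the argument and your proposal silently drops it by asserting the dominance of roofs.
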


\begin{proof} 
Consider the situation after 
the round in which $p$ reaches $D$ 
with $|D| \leq |S|$. 
The location algorithm schedules 
points according to the size 
of their current slab. 
Thus, when $p$ is processed next, 
all other active points are placed 
in slabs of size at most $|S|$. 
By Claim~\ref{clm:algoext},
if $e_v$ is ever placed in slab of size 
at most $|S|$, the algorithm
detects that it is $\textbf{V}$-extremal 
and makes it inactive.

Hence, when $p$ is processed next, 
$e_v$ has been identified
as the extremal point in direction
$v$. Note that $D \cap S \neq \emptyset$,
since $p \in D \cap S$. Some 
boundary (suppose it is the left one) of $D$
lies inside $S$. Let $s_a$ be the 
corresponding edge of $\cC$, as used by 
the location algorithm; see Fig.~\ref{fig:case-3}.
Since $s_a$ is visible from $e_v$, 
and since $e_v$ has been processed,
it follows that the pencil 
slab of the rightmost pencil 
for $s_a$ spans all of $D \cap S$.
In Case~3 of the location algorithm 
(in this round), $p$
will either be found inside this 
pencil, or outside of $\cC$.
Either way, $p$ becomes inactive.
\end{proof}

We arrive at the main lemma of this section.

\begin{lemma}\label{lem:algoCH} 
  The total number of rounds in 
  the location algorithm is 
  $O(n + \textup{\OPTCH})$.
\end{lemma}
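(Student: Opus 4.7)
The plan is to charge the rounds of the location algorithm against the depth of an entropy-sensitive comparison tree that computes $\cC$-certificates. By Lemma~\ref{lem:entropy-sensitive-CH}, such a tree $\cT$ exists with expected depth $O(n+\OPTCH)$, and by Proposition~\ref{prop:Cartesian}, each leaf $v$ of $\cT$ with $d_v \leq n^2$ corresponds to a product region $\cR_v = \prod_i R_i$. I would condition on $P$ reaching such a leaf $v$, use the $\cC$-certificate stored at $v$ to identify, for each $i$, a ``target slab'' $S_i^*$ of $\bS$ containing $R_i$, and then bound the number of rounds spent on $p_i$ by an $S_i^*$-restricted search time in $T_i$.

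The target slab $S_i^*$ is read off from the $\cC$-certificate: if the certificate labels $p_i$ with a $\cC$-leaf slab $\lambda$ (because $p_i$ is $\textbf{V}$-extremal, or else a point whose $S_{p_i}$ is itself a leaf slab), take $S_i^* = \lambda$; if it labels $p_i$ with a $\cC$-slab $S_{p_i}$ below which $p_i$ lies under $\seg(\cC, S_{p_i})$, take $S_i^* = S_{p_i}$; and if it labels $p_i$ as lying in $\pen(e_v)$ for some $\textbf{V}$-extremal $e_v$, take $S_i^*$ to be the pencil slab of $e_v$. Since the certificate is valid on all of $\cR_v$, in each case $R_i \subseteq S_i^*$.

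The heart of the argument is showing that once the algorithm's current slab satisfies $C_i \subseteq S_i^*$, the point $p_i$ is declared inactive in $O(1)$ additional rounds. For leaf-slab targets this is automatic, as a leaf slab is terminal in $T_i$. For the below-segment case, convexity of $\cC$ implies that $\seg(\cC, C_i)$ lies weakly above $\seg(\cC, S_{p_i})$ throughout $C_i$, so $p_i$ lies below $\seg(\cC, C_i)$ and Case~1 of the algorithm fires. For the pencil case, Claim~\ref{clm:algo-inact} gives exactly the needed termination in the round that $p_i$ is next processed; the priority queue's order (larger slabs first) combined with Claim~\ref{clm:algoext} guarantees that the anchoring $\textbf{V}$-extremal point $e_v$ is identified and that the stored leftmost/rightmost pencils $p_{s1}, p_{s2}$ are already available before $p_i$'s test occurs.

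Given this, the rounds attributed to $p_i$ are bounded by the time of an $S_i^*$-restricted search in $T_i$ for $p_i$'s distribution conditioned on $p_i \in R_i$, plus $O(1)$. Lemmas~\ref{lem:search-time} and~\ref{lem:tree} bound this by $O(\eps^{-1}(1 - \log \Pr[p_i \in R_i]))$. Summing over $i$ and invoking Proposition~\ref{prop:entropyDepth} converts $-\sum_i \log \Pr[p_i \in R_i]$ into $d_v$, giving expected rounds $O(\eps^{-1}(n + d_v))$ conditioned on $P \in \cR_v$. Averaging over the leaves of $\cT$ and applying Lemma~\ref{lem:entropy-sensitive-CH} yields the claim, with $\eps^{-1}$ absorbed into the constant in the same way as in Lemma~\ref{lem:algoMaxima}. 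I expect the main obstacle to be the pencil case: verifying that, by the time $p_i$ is tested against the stored extremal-pencil data, the scheduling invariant has already forced the relevant $\textbf{V}$-extremal point through Cases~2 or~3, so that $p_{s_a2}$ (or $p_{s_b1}$) actually spans the appropriate portion of $C_i \cap S_i^*$.
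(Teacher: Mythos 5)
Your proposal follows the paper's proof essentially step for step: condition on a leaf $v$ of the entropy-sensitive tree for $\cC$-certificates from Lemma~\ref{lem:entropy-sensitive-CH}, read a target slab $S_{p_i}$ for each $p_i$ from the certificate, bound the rounds on $p_i$ by an $S_{p_i}$-restricted search using the three certificate cases together with Claims~\ref{clm:algoext} and~\ref{clm:algo-inact}, and convert the sum of log-probabilities to $d_v$ via Proposition~\ref{prop:entropyDepth}. The one small point the paper handles explicitly and you leave implicit is that Propositions~\ref{prop:Cartesian} and~\ref{prop:entropyDepth} only apply at depth $\leq n^2$, but leaves with $d_v > n^2$ are covered trivially since the algorithm never exceeds $n^2$ rounds.
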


\begin{proof} 
Let $\cT$ be an entropy-sensitive 
comparison tree that 
computes a $\cC$-certificate for 
$P$ in expected depth $O(n + \OPTCH)$. 
Such a tree exists by 
Lemma~\ref{lem:entropy-sensitive-CH}.
Let $v$ be a leaf of $\cT$ with 
$d_v \leq n^2$. By 
Proposition~\ref{prop:Cartesian},
$\cR_v$ is a Cartesian product 
$\cR_v = \prod_{i=1}^n R_i$.
The depth of $v$ is 
$d_v = -\sum_{i=1}^n \log \Pr[p_i \in R_i]$, 
by Proposition~\ref{prop:entropyDepth}. 
Now consider a random input $P$, 
conditioned on $P \in \cR_v$. We 
show that expected number of rounds 
for $P$ is  $O(n + d_v)$. This
also holds for $d_v > n^2$, since
there are never more than $n^2$ rounds.
The lemma follows, as the 
expected number of rounds is
\[ 
  \sum_{v \text{ leaf of } \cT} \Pr[P \in \mathcal{R}_v] O(n + d_v) =
  O(n + d_\cT).
\]
Let $v$ be a leaf with $d_v \leq n^2$ and $\gamma$ the 
$\cC$-certificate for 
$v$. The main technical argument is 
summarized in the following claim.

\begin{claim}\label{clm:algoCH} 
  Let $P \in \cR_v$ and
  $p_i \in P$. The number 
  of rounds involving 
  $p_i$ is at most one more than 
  the number of steps required for an
  $S_{p_i}$-restricted search for 
  $p_i$ in $T_i$.
\end{claim}

\begin{proof} 
By definition of $\cC$-certificates, 
$S_{p_i}$ is one of three types. 
Either $S_{p_i}$ is a $\cC$-leaf slab, 
$p_i$ is below $\seg(S_{p_i},\cC)$,
or $S_{p_i}$ is a pencil slab of a 
$\textbf{V}$-extremal vertex. 
In all cases, $S_{p_i}$ contains $R_i$.
When $S_{p_i}$ is a leaf slab, an
$S_{p_i}$-restricted search for $p$ is 
a complete search. Hence, this is 
always at least the number of rounds 
involving $p_i$.
Suppose $p_i$ is below $\seg(S_{p_i},\cC)$. 
For any slab $S \subseteq S_{p_i}$,
$\seg(S,\cC)$ is above $\seg(S_{p_i},\cC)$. 
If $p_i$ is located in any slab $S \subseteq S_{p_i}$,
it is made inactive (Case 1 of the algorithm). 

Now for the last case. 
The slab $S_{p_i}$ is the pencil slab 
for a $\textbf{V}$-extremal vertex $e_v$,
such that the $\pen(e_v)$, contains $p_i$. 
Suppose the search for $p_i$ leads to 
slab $D \subseteq S_{p_i}$ and $p_i$
is still active.
By Claim~\ref{clm:algo-inact}, since 
$|D| \leq |S_{p_i}|$, $p_i$ becomes
inactive in the next round.
\end{proof}

Suppose $P$ is chosen randomly from 
$\cR_v$. The distribution restricted 
to $p_i$ is simply random from $R_i$.  
By Lemma~\ref{lem:search-time}, the 
expected $S_{p_i}$-restricted search 
time is $O(1-\log \Pr[p \in R_i])$. Combining 
with Claim~\ref{clm:algoCH}, the expected
number of rounds is 
\[
  O(n -\sum_{i=1}^n \log \Pr[p_i  \in R_i]) = O(n + d_v).
\]
\end{proof}

\begin{lemma} 
  The expected running time of 
  the location algorithm is 
  $O(n + \textup{\OPTCH})$.
\end{lemma}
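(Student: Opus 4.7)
The plan is to bound the expected running time by separating the work into (a) per-round overhead and (b) the extra work performed in Case~2 of a round (where a binary search over $\cC$ is launched and visible edges are processed).

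First I would note that Lemma~\ref{lem:algoCH} already gives $O(n+\OPTCH)$ for the expected number of rounds. In each round, outside of Case~2, the algorithm performs a constant number of comparisons: one \findmax{}, one step of advancing $C_i$ in $T_i$, one \deckey{}, a few comparisons against $\seg(\cC, C_i)$, $\ell_a$, $\ell_b$, and (in Case~3) the at most two stored pencils. By Claim~\ref{clm:ds}, the cumulative cost of all \findmax{}, \ins{}, \delete{}, and \deckey{} operations on $L(A)$ is $O(n + R)$, where $R$ is the number of rounds, so this contribution is bounded by $O(n+\OPTCH)$ in expectation.

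The only remaining contribution is Case~2. When Case~2 fires for some $p_i$, the algorithm has certified that $p_i$ lies outside $\cC$; it then does a binary search on $\cC$ ($O(\log n)$ comparisons) to find $\pen(p_i)$ and the list of edges of $\cC$ visible from $p_i$, and for each such edge $s$ it does $O(1)$ work updating $\tilde e_v$, $p_{s1}$, $p_{s2}$. Hence the total Case~2 work is
\[
  O\!\Bigl(N_{\mathrm{out}}\log n + \sum_{p\in P\setminus \cC} \vis(p,\cC)\Bigr),
\]
where $N_{\mathrm{out}} = |P\setminus\cC|$ and $\vis(p,\cC)$ is the number of edges of $\cC$ visible from $p$. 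Corollary~\ref{cor:ellTail} bounds the number of visibility pairs $(p,s)$ by $O(n/\log n)$ with probability $1-n^{-2}$; since every $p\in P\setminus\cC$ contributes at least one such pair, the same bound applies to $N_{\mathrm{out}}$. Thus in the high-probability regime both terms are $O(n)$. On the complementary event (probability $\leq n^{-2}$), Case~2 work is trivially $O(n\log n)$ per point, but the algorithm still runs in at most $n^2$ rounds so the contribution to the expectation is $o(n)$.

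Putting the two contributions together, the expected running time is $O(n+\OPTCH) + O(n) = O(n+\OPTCH)$. The one subtle point I would be careful about is coupling the high-probability bound of Corollary~\ref{cor:ellTail} with the expectation over both the learning and the limiting phases; this is handled exactly as elsewhere in the paper, by charging the rare failure event against a trivial worst-case bound of $O(n^2)$ on the total work, which contributes $O(1)$ to the expectation.
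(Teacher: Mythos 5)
Your proof is correct and takes essentially the same approach as the paper: bound per-round overhead (including heap operations) via Claim~\ref{clm:ds} and the round count from Lemma~\ref{lem:algoCH}, then bound Case~2 work separately via Corollary~\ref{cor:ellTail}. You are somewhat more explicit than the paper about the $O(\log n)$ binary-search factor and about converting the high-probability bound of Corollary~\ref{cor:ellTail} into an expectation bound, but the decomposition and the cited lemmas are the same.
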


\begin{proof}
By Claim~\ref{clm:ds},
the total overhead for the heap 
structure is linear in the number
of rounds. The time to implement 
Cases~1 and~3 is $O(1)$, as we only 
need to compare $p_i$ with a 
constant number of lines. Hence, the 
total time for this is at most proportional to the 
number of rounds. 

In Case~2, we do a binary search 
for $p_i$ and possibly update 
an extremal point (and pencil) 
for each edge 
visible from $p_i$. The case
only occurs if $p_i$ lies 
outside $\cC$.  By 
Corollary~\ref{cor:ellTail}, the 
expected number such updates is $O(n/\log n)$. 
Overall, the total cost for Case~2 
operations is $O(n)$. Combining with 
Lemma~\ref{lem:algoCH}, the expected 
running time is 
$O(n + \textup{\OPTCH})$.
\end{proof}

\subsubsection{The construction algorithm}

We now describe the upper hull 
construction that uses the information
from the location algorithm to compute
$\UH(P)$ quickly. First, we dive 
into the geometry of pencils.

\begin{claim}\label{clm:no_pencil_overlap}
  Suppose that all 
  \textup{$\textbf{V}$}-extremal 
  points of $P$ lie outside
  of $\cC$, and let $e_v$ be a 
  \textup{$\textbf{V}$}-extremal point. 
  Then $e_v$ does not lie
  in the pencil of any other point  
  outside $\cC$.
\end{claim}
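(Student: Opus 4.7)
The plan is to argue by contradiction: suppose some $\textbf{V}$-extremal point $e_v = e_{v_i}$ lies in $\pen(q)$ for a point $q \in P\setminus\cC$ with $q\neq e_v$. I will show this forces $\langle q, v_i\rangle > \langle e_v, v_i\rangle$, contradicting the fact that $e_v$ maximizes $\langle\cdot,v_i\rangle$ over $P$. The two geometric ingredients are that the pencil ``roof'' of $q$ lies strictly above $e_v$, while the canonical line $\ell_i$ lies strictly below $e_v$.

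Let $b_1, b_2$ be the tangent vertices of $\cC$ from $q$, labeled so that $x(b_1) < x(b_2)$. Because $e_v$ sits in the (open) pencil slab of $q$, we have $x(b_1) < x(e_v) < x(b_2)$. Assume without loss of generality that $x(e_v) \leq x(q)$ (the reverse case is handled symmetrically via $b_2$), and let $z'$ be the intersection of the vertical line through $e_v$ with $\overline{b_1 q}$; write $z' = (1-t)b_1 + tq$ for some $t \in (0,1]$. Since $e_v$ lies strictly below the pencil roof (so $y(e_v) < y(z')$) and canonical directions have strictly positive vertical component, this gives $\langle e_v, v_i\rangle < \langle z', v_i\rangle = (1-t)\langle b_1, v_i\rangle + t\langle q, v_i\rangle$.

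To finish, let $c_i$ denote the value of $\langle\cdot,v_i\rangle$ on $\ell_i$. Then $b_1 \in \cC \subseteq \ell_i^-$ gives $\langle b_1, v_i\rangle \leq c_i$, while the standing assumption that the learning phase succeeded lets me invoke Lemma~\ref{lem:ellTail} to obtain $|\ell_i^+\cap P|\geq 1$; since $e_v = e_{v_i}$ maximizes $\langle\cdot,v_i\rangle$ over $P$, this forces $\langle e_v, v_i\rangle > c_i$. Hence $\langle b_1, v_i\rangle < \langle e_v, v_i\rangle$ strictly, and substituting into the previous display (with $t > 0$) yields $\langle e_v, v_i\rangle < \langle q, v_i\rangle$, the desired contradiction. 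The main subtlety is exactly this strict separation $\langle b_1, v_i\rangle < \langle e_v, v_i\rangle$: the stated hypothesis ``all $\textbf{V}$-extremal points lie outside $\cC$'' does not on its own force $e_{v_i}\in\ell_i^+$ (it only guarantees that $e_{v_i}$ escapes $\cC$ through some $\ell_j^+$), so the argument genuinely relies on the high-probability event $|\ell_i^+\cap P|\geq 1$ from Lemma~\ref{lem:ellTail} that is part of the successful learning phase.
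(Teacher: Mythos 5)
Your proof is correct and follows essentially the same route as the paper's: both arguments observe that $e_v$ lying in the open pencil forces some roof vertex to be strictly more extremal in direction $v_i$, rule out the apex $q$ because $e_v$ maximizes $\langle\cdot,v_i\rangle$ over $P$, and rule out the tangent vertices because they lie on $\cC\subseteq\overline{\ell_i^-}$ while $e_v\in\ell_i^+$. Your version replaces the paper's ``some vertex of $\pen(q)$ must beat $e_v$'' case analysis with the explicit convex-combination computation through $z'$, which is a cleaner way to extract the contradiction $\langle e_v,v_i\rangle<\langle q,v_i\rangle$, but the underlying geometry is identical.

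Your flag at the end is a genuine observation: the paper's proof asserts ``$e_v$ lies in $\ell_v^+$'' and attributes this to the hypothesis that all $\textbf{V}$-extremal points lie outside $\cC$, which, read literally, only gives $e_{v_i}\in\bigcup_j\ell_j^+$ and not $e_{v_i}\in\ell_i^+$. You are right that the step one actually needs is $|\ell_i^+\cap P|\geq 1$ for every $i$, which does force $e_{v_i}\in\ell_i^+$. One small caveat on your patch: Lemma~\ref{lem:ellTail} is a statement about $\Pr_{P\sim\cD}$ and does not deterministically apply to the fixed $P$ under consideration; what licenses the step is that the construction algorithm operates under the standing assumption that the high-probability event underlying Corollary~\ref{cor:ellTail} (namely $|\ell_i^+\cap P|\in[1,c\log n]$ for all $i$) holds for the current input. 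It would be cleaner to cite that standing assumption directly, or to restate the claim's hypothesis as $|\ell_i^+\cap P|\geq 1$ for all $i$, rather than to invoke the probabilistic lemma. With that bookkeeping fix your argument is complete.
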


\begin{proof}
Suppose that  $e_v \in \pen(p)$ for 
another point $p \in P$ outside of
$\cC$.  Then a vertex of $\pen(p)$ would
be more extremal in direction $v$ than 
$e_v$. It cannot be $p$, since then
$e_v$ would not be extremal in direction 
$v$. It also cannot be a
vertex of $\cC$, because $e_v$ lies in 
$\ell_v^+$, while all vertices of $\cC$
lie on $\ell_v$ or in $\ell_v^-$. 
Thus, $p$ cannot exist.
\end{proof}

\begin{figure}
  \centering
  \includegraphics{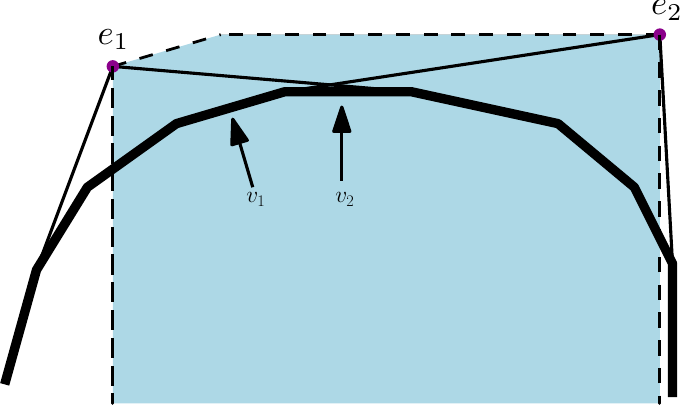}
  \caption{The lines perpendicular 
    to directions $v_1$ and $v_2$ define 
  the upper boundary of the shaded region 
  where $p$ lies. All edges 
  seen by a point in the shaded region 
  can be seen by either $e_1$ or $e_2$.}
  \label{fig:union}
\end{figure}

\begin{claim}\label{clm:extremal_pencil}
  Suppose \textup{$\textbf{V}$}-extremal 
  points of $P$ lie outside
  of $\cC$. Let $e_1$ and $e_2$ be 
  two adjacent \textup{$\textbf{V}$}-extremal 
  points and let $p \in P$ be above $\cC$ 
  such that the $x$-coordinate of
  $p$ lies between the $x$-coordinates of $e_1$ 
  and $e_2$.  Then, the portion of $\pen(p)$ 
  below $\cC$ is contained 
  in $\pen(e_1) \cup \pen(e_2)$.
\end{claim}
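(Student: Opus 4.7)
The approach is to show $\pen(p)\cap\cC\subseteq\pen(e_1)\cup\pen(e_2)$ by first bounding the $x$-projection of $\pen(p)$ and then invoking the convexity of $\cC$. Let $a_p,b_p$ denote the left and right tangent points from $p$ to $\cC$, and $a_i,b_i$ those for $e_i$. By Claim~\ref{clm:overlap}, the pencil slabs of $e_1$ and $e_2$ together cover $[x(a_1),x(b_2)]$; the main subclaim I would prove is
\[
  x(a_p)\ge x(a_1) \quad\text{and}\quad x(b_p)\le x(b_2).
\]

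To prove the subclaim, I would show that every edge $s$ of $\cC$ visible from $p$ is also visible from $e_1$ or $e_2$. Let $v_s$ be the outward normal of $s$, so $s\subset\ell_{v_s}$, where $\ell_{v_s}$ is the canonical line $\langle\cdot,v_s\rangle=c_{v_s}$; visibility from $p$ gives $\langle p,v_s\rangle>c_{v_s}$. The extremal $e_{v_s}$ then satisfies $\langle e_{v_s},v_s\rangle\ge\langle p,v_s\rangle>c_{v_s}$, so $e_{v_s}$ also sees $s$ and is $\textbf{V}$-extremal. Since $p$ lies strictly between $e_1$ and $e_2$ in $x$-coordinate and $e_1,e_2$ are adjacent $\textbf{V}$-extremal, $p$ itself is not $\textbf{V}$-extremal, so $e_{v_s}\ne p$; by adjacency, either $e_{v_s}\in\{e_1,e_2\}$ (and we are done) or $e_{v_s}$ lies outside the $x$-range $[x(e_1),x(e_2)]$. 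In the latter case I would use that $e_1$ is extremal in some canonical direction $v_1\in\textbf{V}$ and $e_2$ in some $v_2\in\textbf{V}$, so the supporting lines of $P$ through $e_1$ in direction $v_1$ and through $e_2$ in direction $v_2$ lie (weakly) above $p$; combined with the position of $v_s$ relative to $v_1,v_2$ in the clockwise ordering of $\textbf{V}$, and the fact that the normals of $\cC$'s upper boundary rotate clockwise along it, this forces $p$'s visible arc to lie within the ``middle'' range of canonical directions $v$ with $e_v\in\{e_1,e_2\}$, contradicting $p$ lying above $\ell_{v_s}$.

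Given the subclaim, the containment is immediate: any $q\in\pen(p)\cap\cC$ has $x(q)\in[x(a_p),x(b_p)]\subseteq[x(a_1),x(b_1)]\cup[x(a_2),x(b_2)]$, so $q$ lies in the pencil slab of $e_1$ or of $e_2$; WLOG the first. Since $q\in\cC$, $q$ lies weakly below $\partial_u\cC$; and by convexity of $\cC$ together with $e_1$ lying outside $\cC$, the broken line $a_1\to e_1\to b_1$ lies weakly above $\partial_u\cC$ throughout $e_1$'s pencil slab. So $q$ is below this broken line, i.e., $q\in\pen(e_1)$. The main obstacle will be the case analysis inside the subclaim when $e_{v_s}\notin\{e_1,e_2\}$: rigorously ruling this out requires carefully combining the extremality-based supporting line constraints at $e_1$ and $e_2$ with the clockwise rotation of normals along $\partial_u\cC$, to show $p$ cannot simultaneously be strictly between $e_1$ and $e_2$ in $x$ and above a canonical line whose extremal is some other $\textbf{V}$-extremal point.
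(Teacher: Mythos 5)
Your overall strategy---reduce the containment to the claim that every edge of $\cC$ visible from $p$ is also visible from $e_1$ or $e_2$, via the $x$-extents of the pencil slabs and Claim~\ref{clm:overlap}---is the same as the paper's, and your derivation of the containment from that subclaim is fine. However, there is a genuine gap in your proof of the subclaim, which you yourself flag: the case $e_{v_s}\notin\{e_1,e_2\}$ is never actually ruled out, and the sketch about ``combining supporting-line constraints with clockwise rotation of normals'' is not a proof.

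The detour through $e_{v_s}$ is also the wrong move; it is neither necessary nor the most direct route. What you actually need is not that $e_{v_s}\in\{e_1,e_2\}$, but that \emph{$e_1$ or $e_2$ sees $s$}, i.e.\ that $\max\bigl(\langle e_1,v_s\rangle,\langle e_2,v_s\rangle\bigr)>c_{v_s}$. The clean way to get this (and what the paper does) is to note that $p$ is trapped in the convex region bounded by the downward vertical ray from $e_1$, the downward vertical ray from $e_2$, the line through $e_1$ normal to $v_1$, and the line through $e_2$ normal to $v_2$, where $v_1$ is the \emph{last} canonical direction for which $e_1$ is extremal and $v_2$ the \emph{first} for which $e_2$ is extremal (these are consecutive in $\textbf{V}$ by adjacency). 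That $p$ lies here follows from $x(e_1)\le x(p)\le x(e_2)$ and from $\langle p,v_i\rangle\le\langle e_i,v_i\rangle$ ($e_i$ being extremal in direction $v_i$). Since for every $v\in\textbf{V}$ we have $v\le v_1$ or $v\ge v_2$ in the clockwise order, the maximum of $\langle\cdot,v\rangle$ over this convex region is attained at $e_1$ or $e_2$, so $\langle p,v\rangle\le\max(\langle e_1,v\rangle,\langle e_2,v\rangle)$ for every canonical $v$. Applied to $v=v_s$, this immediately shows that $p$ seeing $s$ forces $e_1$ or $e_2$ to see $s$, closing the gap without ever invoking $e_{v_s}$.
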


\begin{proof} 
By Claim~\ref{clm:overlap}, the 
(closures of the) pencil slabs 
of $e_1$ and $e_2$ overlap.
Let $v_1$ be the last canonical 
direction for which $e_1$ is extremal 
and $v_2$ the first canonical direction 
for which $e_2$ is extremal. As $e_1$ and $e_2$ 
are adjacent, $v_1$ and $v_2$ are 
consecutive in $\textbf{V}$; see 
Fig.~\ref{fig:union}.
Consider the convex region bounded 
by the vertical downward
ray from $e_1$, the vertical 
downward ray from $e_2$, the line parallel
to $\ell_{v_1}$ through $e_1$, and the line 
parallel to $\ell_{v_2}$ through 
$e_2$. By construction, $p$ lies inside 
this convex region (the shaded
area in Fig.~\ref{fig:union}). 
By convexity, for every 
$v \in \textbf{V}$, at 
least one of $e_1$ 
or $e_2$ is more extremal with respect 
to $v$ than $p$. Hence,
any edge of $\cC$ visible from $p$ is 
visible by either $e_1$ or $e_2$.
The portion of $\pen(p)$ below 
$\cC$ is the union of 
regions below edges of $\cC$ visible 
from $p$. Therefore, it 
lies in $\pen(e_1) \cup \pen(e_2)$.
\end{proof}

As described in Section~\ref{sec:loc_alg},
the location algorithm determines 
for for each $p \in P$ that
either
(a) $p$ lies outside of $\cC$;
(b) $p$ lies inside of $\cC$, as witnessed 
by a segment $\seg(\cC, C_p)$; or
(c) $p$ lies in the pencil of a 
point located outside of $\cC$.
We also have the $\textbf{V}$-extremal 
vertices $e_v$ for all $v \in \textbf{V}$.
We now use this information in order to 
find $\UH(P)$. By Corollary~\ref{cor:ellTail}, with
probability at least $1 - n^{-2}$,
for each canonical direction in $\textbf{V}$ 
there is a extremal point
outside of $\cC$ and the total number 
of points outside $\cC$ is $O(n/\log n)$.
We assume that these conditions 
hold. (Otherwise, we can compute $\UH(P)$ 
in $O(n\log n)$ time, affecting the expected
work only by a lower order term.)

For any point $a$, the 
\emph{$\textbf{V}$-pair for $a$} is the pair
of adjacent $\textbf{V}$-extremal points 
such that $a$ lies between them.
The construction algorithm goes through a series 
of steps. The exact details
of some of these steps will be given 
in subsequent claims.
\medskip

\begin{asparaenum}
  \item \label{step-1} 
     Compute the upper hull of the 
     $\textbf{V}$-extremal points.
  \item \label{step-2} 
    For each vertex $a$ of $\cC$, compute the 
    $\textbf{V}$-pair for $a$.
  \item \label{step-3} 
    For each input 
    point $p$ outside $\cC$, compute 
    its $\textbf{V}$-pair by binary search.
  \item \label{step-4} 
    For each input point $p$ 
    below a segment $\seg(\cC,C_p)$, in 
    $O(1)$ time, either find its $\textbf{V}$-pair 
    or find a segment between $\textbf{V}$-extremal
    points above it. (Details in Claim~\ref{clm:step-4}.)
  \item \label{step-5} 
    For each input point 
    $p$ located in the pencil 
    of a non-$\textbf{V}$-extremal point, 
    in $O(1)$ time, either locate $p$ in the pencil of a 
    $\textbf{V}$-extremal point or determine 
    that it is outside $\cC$. In 
    the latter case, use binary search to find its
    $\textbf{V}$-pair.
  \item \label{step-6} 
    For each input 
    point $p$ located in the pencil 
    of an $\textbf{V}$-extremal point, in $O(1)$ 
    time, find a segment between $\textbf{V}$-extremal 
    points above it or find its 
    $\textbf{V}$-pair. (Details for 
    both steps in Claim~\ref{clm:step-56}.)
  \item \label{step-7} 
    By now, for every 
    non-$\textbf{V}$-extremal $p \in P$, 
    we found a $\textbf{V}$-pair or proved 
    $p$ non-extremal through a $\textbf{V}$-extremal 
    segment above it.  For every pair 
    $(e_1, e_2)$ of adjacent $\textbf{V}$-extremal 
    points, find the set $Q$ of points
    that lie above $\overline{e_1e_2}$. 
    Use an output-sensitive 
    upper hull algorithm~\cite{KirkpatrickSe86} to find the convex 
    hull of $Q$. Finally, concatenate the resulting 
    convex hulls to obtain $\UH(P)$. 
\end{asparaenum}

\begin{claim} \label{clm:ext-cand} 
  After the location algorithm, for each
  canonical direction $v \in V$, the extremal candidate 
  $\tilde{e}_v$ is the
  actual extremal point $e_v$ in direction $v$.  
\end{claim}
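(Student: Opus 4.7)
\noindent\textit{Proof proposal for Claim~\ref{clm:ext-cand}.}
The plan is to establish four facts in sequence: (i) $e_v$ lies outside~$\cC$; (ii) $e_v$ is never deactivated by Case~1 or by the ``inside a pencil'' branch of Case~3; (iii) consequently $e_v$ is eventually processed by Case~2, at which point $\tilde e_v$ is (re)assigned to $e_v$; and (iv) once $\tilde e_v = e_v$, no further update can change it. Throughout I rely on the assumption already made in this section that the learning phase succeeded and that the high-probability events of Corollary~\ref{cor:ellTail} and Claim~\ref{clm:no_pencil_overlap} hold.

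For (i), Corollary~\ref{cor:ellTail} asserts that for every $v\in\textbf{V}$ the true extremal point $e_v$ lies outside~$\cC$. Since $\cC$ is an intersection of halfplanes below the upward-pointing $\ell_v$'s, ``outside $\cC$'' in the relevant sense means that $e_v$ lies above some edge of the upper boundary of $\cC$; in particular $e_v$ lies strictly above $\ell_v$. For (ii), consider any round in which $e_v$ is processed. In Case~1 we would need $e_v$ to lie below $\seg(\cC, C_i)$; but that chord lies on or below the upper boundary of $\cC$ by convexity, while $e_v$ lies strictly above that boundary, a contradiction. In Case~3, the only way $e_v$ becomes inactive is if it lies inside the pencil of some previously processed point outside $\cC$; this is ruled out by Claim~\ref{clm:no_pencil_overlap}. (If the Case~3 check instead finds $e_v$ above a stored pencil, then the algorithm re-routes to Case~2 processing, which is exactly what we want.)

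For (iii), since the search for $e_v$ shrinks the current slab $C_i$ in every round in which $e_v$ is processed and never terminates through Case~1 or the pencil branch of Case~3, it either fires Case~2 directly or continues until $C_i$ becomes a leaf slab, at which point the algorithm guarantees (as noted in the description) that Case~1 or Case~2 fires; by (ii) it must be Case~2. When Case~2 executes for $e_v$, it performs the binary search that finds every edge of $\cC$ visible from $e_v$. The edge $s_v$ of $\cC$ lying on $\ell_v$ (the one whose outward normal is $v$) is visible from $e_v$, because $e_v$ lies strictly in $\ell_v^+$ and no other edge of $\cC$ can block the sightline in direction $v$ without contradicting extremality. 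The update rule then compares $e_v$ with the current extremal candidate stored for $s_v$ in direction $v$, and because $e_v$ maximises $\langle\cdot,v\rangle$ over $P$, the candidate $\tilde e_v$ is (re)assigned to $e_v$.

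Finally, for (iv), any later update to $\tilde e_v$ only replaces it by a point that is strictly more extremal in direction $v$; no such point exists in $P$ by definition of $e_v$. Hence $\tilde e_v = e_v$ at the end of the location algorithm, for every $v \in \textbf{V}$. The subtlest step is (iii): one must be careful that the specific edge of $\cC$ corresponding to the direction $v$ is really among those discovered by the binary search when $e_v$ is processed in Case~2, which is where the geometric argument combining Corollary~\ref{cor:ellTail} with the definition of $\cC$ as $\bigcap_i \ell_i^-$ does the real work.
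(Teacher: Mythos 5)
Your proposal is correct and follows essentially the same approach as the paper, which invokes Claim~\ref{clm:no_pencil_overlap} to conclude that $e_v$ cannot be deactivated by landing inside another point's pencil, and hence is eventually classified as the extremal candidate for $v$, with no later point able to displace it. The paper states this in two sentences; your four-step breakdown (outside $\cC$; never killed by Case~1 or the pencil branch of Case~3; therefore eventually reaches Case~2, where the edge $s_v$ supported by $\ell_v$ is visible since $e_v\in\ell_v^+$ and $\cC$ is convex; and once assigned, $\tilde e_v$ can only be overwritten by a strictly more extremal point, of which there are none) is a careful unpacking of the same argument rather than a different route.
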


\begin{proof} 
By Claim~\ref{clm:no_pencil_overlap}, 
$e_v$ does not lie in the pencil of any other
point in $P$. Hence, the location algorithm 
classifies $e_v$ as the extremal candidate for 
$v$, and this choice does 
not change later on.
\end{proof}

\begin{claim} \label{clm:step-123} 
  The total running time for 
  Steps~\ref{step-1},\ref{step-2},\ref{step-3}
  and all binary searches in 
  Step~\ref{step-5} is $O(n)$.
\end{claim}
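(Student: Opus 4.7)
The plan is to bound each component of the claim separately, relying on two counting facts: there are at most $k = n/\log^2 n$ canonical directions (and hence $\textbf{V}$-extremal points), and by Corollary~\ref{cor:ellTail} we may assume the total number of points of $P$ outside $\cC$ is $O(n/\log n)$.

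For Step~\ref{step-1}, the location algorithm has already identified all $\textbf{V}$-extremal points as the extremal candidates $\tilde e_v$ (by Claim~\ref{clm:ext-cand}). There are at most $k$ of them, so we can sort them by $x$-coordinate in $O(k\log k) = O((n/\log^2 n)\cdot \log n) = O(n/\log n)$ time and then compute their upper hull by Graham scan in $O(k)$ time. For Step~\ref{step-2}, the hull $\cC$ has $O(k)$ vertices, so merging the sorted list of $\cC$-vertices with the sorted list of $\textbf{V}$-extremal points in a single left-to-right sweep identifies the $\textbf{V}$-pair for each $\cC$-vertex in $O(k)$ total time. Summing these contributions gives $o(n)$ for Steps~\ref{step-1} and~\ref{step-2}.

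For Step~\ref{step-3}, each point $p \in P$ known to lie outside $\cC$ undergoes one binary search in the sorted list of $\textbf{V}$-extremal points to find its $\textbf{V}$-pair; each such search costs $O(\log k) = O(\log n)$. Since at most $O(n/\log n)$ points lie outside $\cC$, the total cost is $O((n/\log n)\cdot \log n) = O(n)$. For the binary searches in Step~\ref{step-5}, observe that such a binary search is triggered only for points $p$ that, after being compared against the pencil stored at their current location, are determined to lie outside $\cC$. These points also lie outside $\cC$, so again there are at most $O(n/\log n)$ of them, each incurring $O(\log n)$ work, for a total of $O(n)$.

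The only non-routine piece is making sure that the counting bound on points outside $\cC$ applies; but this is exactly the high-probability event guaranteed by Corollary~\ref{cor:ellTail}, which the preceding discussion has already assumed. Summing the four contributions yields the claimed $O(n)$ bound.
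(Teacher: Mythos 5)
Your proof is correct and follows essentially the same route as the paper's: bound Steps~\ref{step-1} and~\ref{step-2} via $k = n/\log^2 n$, and bound the Step~\ref{step-3} and Step~\ref{step-5} binary searches via the $O(n/\log n)$ bound on points outside $\cC$ from Corollary~\ref{cor:ellTail}. You spell out a few implementation details (sorting before Graham scan, a merge sweep for the $\textbf{V}$-pairs, invoking Claim~\ref{clm:ext-cand} to justify that the $\textbf{V}$-extremal points are available) that the paper's terser proof leaves implicit, but the decomposition and key counting facts are the same.
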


\begin{proof} 
There are $k = n/\log^2 n$ $\textbf{V}$-extremal 
points, so finding their upper hull takes $O(n)$ time. 
We simultaneously traverse this upper hull and $\cC$ 
to obtain the $\textbf{V}$-pairs
for all vertices of $\cC$. As there are
$O(n/\log n)$ points outside of $\cC$, the
total time for the binary searches is $O(n)$.
\end{proof}

\begin{figure}
  \centering
  \includegraphics{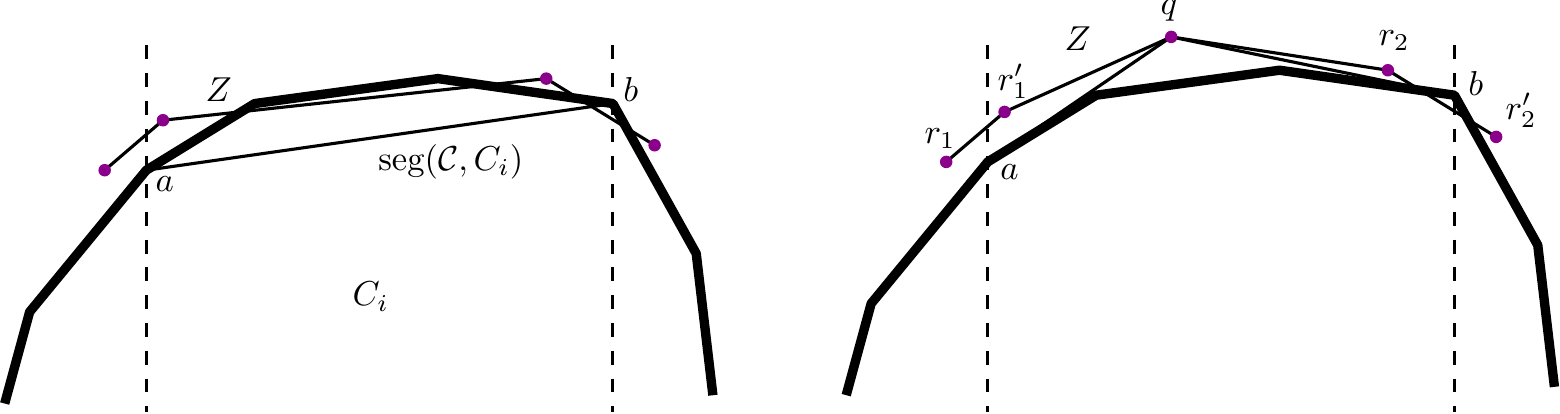}
  \caption{(left) In Step~\ref{step-4}, the 
    region below $\seg(\cC,C_i)$ is 
  either below the middle segment of $Z$
  or between one of the two $\textbf{V}$-pairs. 
  (right) In Step~\ref{step-6}, $\pen(q)$
  can be partitioned into regions below 
  $\overline{qr'_1}$, below $\overline{qr_2}$, 
  between $r_1, r'_1$, or between $r_2, r'_2$.}
  \label{fig:step-5}
\end{figure}

\begin{claim} \label{clm:step-4} 
  Suppose $p \in P$ lies below a segment $\seg(\cC, C_p)$.
  Using the information gathered before Step~\ref{step-4},
  we can either find its $\textbf{V}$-pair or a segment between 
  $\textbf{V}$-extremal points above it in $O(1)$ time.
\end{claim}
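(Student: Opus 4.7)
The plan is to use the $\textbf{V}$-pairs of the two vertices $a_1, a_2$ of $\cC$ that bound $C_p$, which were computed in Step~\ref{step-2}. Write $\seg(\cC, C_p) = \overline{a_1 a_2}$, and let $(l_1, r_1)$ and $(l_2, r_2)$ be the $\textbf{V}$-pairs of $a_1$ and $a_2$, respectively. Since $x(l_j) \le x(a_j) \le x(r_j)$ for $j = 1,2$ and $x(a_1) \le x(p) \le x(a_2)$, we have $x(l_1) \le x(p) \le x(r_2)$. When the two pairs differ, adjacency of each pair forces $x(r_1) \le x(l_2)$, because otherwise one of $r_1, l_2$ would be a $\textbf{V}$-extremal point strictly between the adjacent pair containing the other. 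Hence $x(p)$ lies in one of the three intervals $[x(l_1), x(r_1)]$, $[x(r_1), x(l_2)]$, or $[x(l_2), x(r_2)]$, and a constant number of comparisons identifies which.

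If $x(p)$ lands in the first (resp.\ third) interval, then $(l_1, r_1)$ (resp.\ $(l_2, r_2)$) is a pair of adjacent $\textbf{V}$-extremal points sandwiching $x(p)$, and is therefore the $\textbf{V}$-pair of $p$. The only remaining case is the middle interval, which can only arise when $r_1 \ne l_2$; I claim that there the segment $\overline{r_1 l_2}$---a segment between two $\textbf{V}$-extremal points---lies strictly above $p$, and can be returned. Indeed, the success of the learning phase guarantees (Corollary~\ref{cor:ellTail}) that every $\textbf{V}$-extremal point lies strictly above $\cC$; in particular, $r_1$ and $l_2$ lie strictly above the upper boundary of $\cC$, which, as the upper boundary of a convex region, is a concave function of $x$ dominating the chord $\overline{a_1 a_2}$ on $[x(a_1), x(a_2)]$. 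Since $x(r_1), x(l_2) \in [x(a_1), x(a_2)]$, this yields $y(r_1) > y_{12}(x(r_1))$ and $y(l_2) > y_{12}(x(l_2))$, where $y_{12}$ denotes the height of the line through $a_1, a_2$.

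The vertical gap between $\overline{r_1 l_2}$ and $\overline{a_1 a_2}$ is thus a linear function of $x$ that is positive at both endpoints $x(r_1)$ and $x(l_2)$, and so positive throughout $[x(r_1), x(l_2)]$; since $p$ lies below $\overline{a_1 a_2}$, it lies below $\overline{r_1 l_2}$, as claimed. The main (small) obstacle is the combinatorial case analysis that pins down the ordering $x(r_1) \le x(l_2)$ once the two $\textbf{V}$-pairs differ and rules out ``crossing'' pairs; after that, the geometry reduces to concavity of the upper boundary of $\cC$ combined with the linearity of the relevant segments, and the whole decision uses only $O(1)$ coordinate comparisons.
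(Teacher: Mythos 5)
Your proof is correct and takes essentially the same route as the paper. The paper phrases it in terms of the upper hull $Z$ of the at most four $\textbf{V}$-extremal points defining the $\textbf{V}$-pairs of the endpoints of $\seg(\cC, C_p)$, and observes that $Z$ has at most three edges, with the outer two being edges between adjacent $\textbf{V}$-extremal points and only the middle one (when it exists) not; comparing $p$ with $Z$ then either yields a $\textbf{V}$-pair (outer edges) or a $\textbf{V}$-extremal segment above $p$ (middle edge). You spell out the same case split explicitly via the three $x$-intervals $[x(l_1),x(r_1)]$, $[x(r_1),x(l_2)]$, $[x(l_2),x(r_2)]$, and supply the geometric detail the paper leaves terse: $r_1$ and $l_2$ are above the concave upper boundary of $\cC$, hence above the chord $\overline{a_1a_2}$, and linearity of $\overline{r_1l_2}$ then puts it above $\overline{a_1a_2}$ and hence above $p$. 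One small tidiness note: your argument that $x(r_1)\le x(l_2)$ when the pairs differ is correct but can be stated more crisply by indexing the ordered $\textbf{V}$-extremal points $e_1,\dots,e_k$ and noting that if $a_1$'s pair is $(e_i,e_{i+1})$ and $a_2$'s is $(e_j,e_{j+1})$ then $x(a_1)\le x(a_2)$ forces $i\le j$, so distinct pairs give $i+1\le j$.
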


\begin{proof} 
Let $a$ and $b$ be the endpoints of 
$\seg(\cC, C_p)$. Consider the 
upper hull $Z$ of the at most four
$\textbf{V}$-extremal points  that define 
the $\textbf{V}$-pairs of $a$ and 
$b$; see Fig.~\ref{fig:step-5}(left). 
The hull $Z$ has at most three edges, 
and only the middle one (if it 
exists) might not be between two 
adjacent $\textbf{V}$-extremal points. 
If the middle edge of $Z$ exists, it lies 
strictly above $\seg(\cC, C_p)$. This is because 
the endpoints of the middle edge have $x$-coordinates 
between $x(a)$ and $x(b)$ and lie outside of 
$\cC$ (since they are $\textbf{V}$-extremal), 
while $\seg(\cC, C_p)$ is inside $\cC$.
Now we compare $p$ with the upper hull $Z$. This 
either finds a $\textbf{V}$-pair
for $p$ (if $p$ lies in the interval 
corresponding to the leftmost or rightmost 
edge of $Z$) or shows that $p$ lies below a segment between
two $\textbf{V}$-extremal points (if it lies in the
interval corresponding to the middle edge of $Z$).
\end{proof}

\begin{claim} \label{clm:step-56} 
  Suppose $p \in P$ lies in $\pen(q)$, 
  where $q$ is above $\cC$, and the 
  construction algorithm has completed 
  Step~\ref{step-4}.  If $q$ is 
  non-$\textbf{V}$-extremal, in $O(1)$ time, 
  we can either find a $\textbf{V}$-extremal point 
  $q'$ such that $p \in \pen(q')$, or determine
  that $p$ is above $\cC$. 
  If $q$ is $\textbf{V}$-extremal, then in $O(1)$ time
  we can find a $\textbf{V}$-segment above $p$ or find the 
  $\textbf{V}$-pair for $p$.
\end{claim}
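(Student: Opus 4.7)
My plan is to prove Claim~\ref{clm:step-56} by splitting on whether $q$ is $\textbf{V}$-extremal, invoking the two geometric lemmas from the preceding subsection (Claims~\ref{clm:overlap} and~\ref{clm:extremal_pencil}) together with the data already indexed by Steps~\ref{step-1}--\ref{step-4}.

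For the non-$\textbf{V}$-extremal case, the point $q$ was detected as outside of $\cC$ by the location algorithm, so Step~\ref{step-3} has already associated to $q$ a $\textbf{V}$-pair $(e_1,e_2)$ via binary search. I would then apply Claim~\ref{clm:extremal_pencil}: the portion of $\pen(q)$ lying below $\cC$ is contained in $\pen(e_1)\cup\pen(e_2)$. Both $e_i$ are $\textbf{V}$-extremal and their pencils were recorded by the location algorithm, so comparing $p$ against each takes $O(1)$ time. Either $p$ lies in one of these two pencils, in which case the corresponding $e_i$ serves as the promised $q'$, or $p$ lies in neither, and then, since $p\in\pen(q)$ and cannot be in the below-$\cC$ portion, $p$ must lie above $\cC$.

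For the $\textbf{V}$-extremal case, I would use that the pencil of $q$ meets the pencils of its $\textbf{V}$-extremal neighbors $r_1$ (left) and $r_2$ (right) by Claim~\ref{clm:overlap}. This partitions the $x$-range of $\pen(q)$ into at most four strips: the two overlaps with $\pen(r_1)$ and $\pen(r_2)$, and the two inner strips bounded above by the upper-hull edges $\overline{r_1 q}$ and $\overline{q r_2}$ (both produced in Step~\ref{step-1}). A constant number of sidedness tests (locating $x(p)$ with respect to the vertical lines through $r_1$, $q$, $r_2$, and the tangent vertices of $\pen(q)$, and comparing $p$ to $\overline{r_1 q}$ or $\overline{q r_2}$ where appropriate) identifies the strip. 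In an outer strip, the $\textbf{V}$-pair of $p$ is read off directly from the sorted $\textbf{V}$-extremal list of Step~\ref{step-1}; in an inner strip, membership of $p$ in $\pen(q)$ together with $x(p)\in[x(r_1),x(r_2)]$ forces $p$ to lie below the corresponding upper-hull edge, so that edge is the desired $\textbf{V}$-segment above $p$.

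The main obstacle is confirming that every lookup is genuinely $O(1)$ and that the geometric case analysis is exhaustive. The $O(1)$ bound relies on the indexing set up in Steps~\ref{step-1}--\ref{step-4}: the sorted $\textbf{V}$-extremal list with cross-pointers to upper-hull edges, the pencils of $\textbf{V}$-extremal points, and the $\textbf{V}$-pair for each vertex of $\cC$. The main subtlety is the degenerate configuration of Claim~\ref{clm:overlap} in which $\pen(r_i)$ and $\pen(q)$ share only a slab boundary rather than properly overlapping; this must be treated as the limiting case of overlap so that no strip is missed. Once these book-keeping points are in place, the claim follows from a direct geometric enumeration.
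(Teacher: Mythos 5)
Your treatment of the non-$\textbf{V}$-extremal case is exactly the paper's: retrieve the $\textbf{V}$-pair $(e_1,e_2)$ of $q$ from Step~\ref{step-3}, apply Claim~\ref{clm:extremal_pencil}, and test $p$ against $\pen(e_1)$ and $\pen(e_2)$ in $O(1)$ time.

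The $\textbf{V}$-extremal case is where you diverge, and there is a genuine gap. You take $r_1,r_2$ to be the $\textbf{V}$-extremal points \emph{adjacent to $q$}, and your argument requires that in the ``inner strip'' $p\in\pen(q)$ is below the segment $\overline{r_1 q}$ (resp.\ $\overline{q r_2}$). That amounts to asserting that $\overline{r_1 q}$ lies above the left tangent segment $\overline{aq}$ of $\pen(q)$, i.e.\ that $r_1$ lies above the line through $a$ and $q$. But $r_1$ may have $x(r_1)<x(a)$, i.e.\ lie outside the pencil slab of $q$; Claim~\ref{clm:overlap} only guarantees that the \emph{pencil slabs} of $r_1$ and $q$ meet, and Claim~\ref{clm:no_pencil_overlap} only says $r_1\notin\pen(q)$, neither of which forces $r_1$ to be above the tangent line once it is outside the slab. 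So the step ``membership of $p$ in $\pen(q)$ together with $x(p)\in[x(r_1),x(r_2)]$ forces $p$ to lie below the corresponding upper-hull edge'' is unsupported.

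The paper avoids this by choosing different reference points: it takes $a,b$ to be the tangent vertices of $\pen(q)$ on $\cC$ and uses \emph{their} $\textbf{V}$-pairs $(r_1,r'_1)$ and $(r_2,r'_2)$ (from Step~\ref{step-2}). Since $r'_1$ and $r_2$ then have $x$-coordinates inside $[x(a),x(b)]$, Claim~\ref{clm:no_pencil_overlap} directly puts them \emph{above} $\pen(q)$, so the chords $\overline{qr'_1}$ and $\overline{qr_2}$ are guaranteed to lie above $\pen(q)$ on the relevant $x$-intervals; the remaining $x$-ranges $[x(a),x(r'_1)]$ and $[x(r_2),x(b)]$ are covered by the two $\textbf{V}$-pairs. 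If you want to salvage your variant, you would need an additional lemma showing that the adjacent $\textbf{V}$-extremal $r_1$ (even when $x(r_1)<x(a)$) lies above the supporting line of $\overline{aq}$ — but it is simpler to switch to the paper's choice of reference points.
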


\begin{proof} 
Let $q$ be non-$\textbf{V}$-extremal.  
As $q$ is outside $\cC$, we know 
the $\textbf{V}$-pair $\{e_1, e_2\}$ for $q$. 
By Claim~\ref{clm:extremal_pencil}, if
$p$ lies below $\cC$, it is in $\pen(e_1)$ or 
in $\pen(e_2)$. We can determine which (if at all)
in $O(1)$ time.

Let $q$ be $\textbf{V}$-extremal, and $a$,
$b$ the vertices of $\cC$ on the 
boundary of $\pen(q)$, where $a$ is to the
left; see Fig.~\ref{fig:step-5}(right). 
Let $(r_1, r'_1)$ be $a$'s $\textbf{V}$-pair,
where $r_1$ is to the left. 
Similarly, $(r_2, r'_2)$ is $b$'s $\textbf{V}$-pair.
The segments $\overline{qr'_1}$
and $\overline{qr_2}$ are above $\pen(q)$. Furthermore, 
the pencil slab of $q$
is between $r_1$ and $r'_2$. One of the 
following must be true for any point in $\pen(q)$:
it is below $\overline{qr'_1}$, below 
$\overline{qr_2}$, between $(r_1, r'_1)$,
or between $(r_2, r'_2)$. This can be 
determined in $O(1)$ time.
\end{proof}

We are now armed with all the facts to bound the running time.

\begin{lemma}
  With the information from the location algorithm, 
  $\UH(P)$ can be computed in expected time $O(n \log\log n)$.
\end{lemma}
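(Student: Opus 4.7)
The plan is to bound the expected cost of each of the seven steps, with Step~\ref{step-7} being the only step whose cost is superlinear. Steps~\ref{step-1}, \ref{step-2}, \ref{step-3}, together with the binary searches in Step~\ref{step-5}, cost $O(n)$ in total by Claim~\ref{clm:step-123}. Steps~\ref{step-4}, \ref{step-5}, and~\ref{step-6} each perform $O(1)$ work per input point by Claims~\ref{clm:step-4} and~\ref{clm:step-56}, contributing another $O(n)$. The extra binary searches in Step~\ref{step-5} occur only for points that turn out to lie outside $\cC$, of which there are $O(n/\log n)$ in expectation by Corollary~\ref{cor:ellTail}, so these contribute $O(n)$ in expectation as well. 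The unlikely events on which the learning phase, the canonical hull, or the pencil-count bound of Corollary~\ref{cor:ellTail} fail occur with probability at most $n^{-\Omega(1)}$; on such inputs we fall back to a standard $O(n \log n)$ upper-hull algorithm, adding only a lower-order term to the expectation.

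After Step~\ref{step-6}, every non-$\textbf{V}$-extremal point $p$ has either been assigned a $\textbf{V}$-pair $(e_i,e_{i+1})$ of adjacent $\textbf{V}$-extremal vertices or been certified non-extremal via a segment between two $\textbf{V}$-extremal points lying above it; the latter points are discarded. For each $i$, let $Q_i$ be the set of remaining points whose $\textbf{V}$-pair is $(e_i, e_{i+1})$ and which lie above the segment $\overline{e_i e_{i+1}}$; testing this takes $O(1)$ per point, and bucketing by $i$ is $O(n)$ total. By definition of $\uss$, every point of $Q_i$ lies in $\uss(e_i,e_{i+1})$, so $|Q_i| \leq X_i$; and every vertex of $\UH(Q_i)$ strictly between $e_i$ and $e_{i+1}$ must be a vertex of $\UH(P)$, so the number of extremal points of $Q_i$ is at most $Y_i + O(1)$.

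For each $i$, we run the output-sensitive upper-hull algorithm of Kirkpatrick and Seidel~\cite{KirkpatrickSe86} on $Q_i$, which takes $O(|Q_i|\log(h_i + 1)) = O(X_i \log(Y_i+1))$ time, where $h_i$ is the output size. Concatenating the resulting sub-hulls with the list of $\textbf{V}$-extremal vertices gives $\UH(P)$ in $O(n)$ additional time, since the sub-hulls are already sorted left-to-right and meet at the $\textbf{V}$-extremal points. Summing over $i$ and taking expectation, the bottleneck term is controlled by Lemma~\ref{lem:canonicalDir}:
\[
  \EX\Bigl[\sum_{i=1}^{k} X_i \log(Y_i+1)\Bigr] = O(n \log \log n).
\]

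The main obstacle is Step~\ref{step-7}: it is essential that the correct quantities $X_i$ and $Y_i$ appear, and that the output-sensitive hull algorithm is applied bucket-by-bucket rather than once to the entire point set (the latter would only give $O(n\log h)$ with $h$ the total upper-hull size, which can be $\Theta(n)$). Once the bucketing in Steps~\ref{step-4}--\ref{step-6} is in place, the bound follows directly from Lemma~\ref{lem:canonicalDir}, which was precisely designed to make this sum $O(n\log\log n)$.
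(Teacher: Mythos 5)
The proposal follows the same overall structure as the paper's (very terse) proof: bound Steps~\ref{step-1}--\ref{step-6} by $O(n)$ using Claims~\ref{clm:step-123},~\ref{clm:step-4},~\ref{clm:step-56} and Corollary~\ref{cor:ellTail}, then charge Step~\ref{step-7} to $\sum_i X_i \log(Y_i+1)$ and invoke Lemma~\ref{lem:canonicalDir}. The added detail is welcome, but one step of the justification for the output-size bound is wrong: the claim that ``every vertex of $\UH(Q_i)$ strictly between $e_i$ and $e_{i+1}$ must be a vertex of $\UH(P)$'' is false. Since $e_i, e_{i+1} \notin Q_i$, a point $q\in Q_i$ can lie on $\UH(Q_i)$ while being strictly below the edge $\overline{e_i\,a_1}$ of $\UH(P)$ (where $a_1$ is the leftmost true extremal vertex in the slab); nothing in $Q_i$ witnesses $q$'s non-extremality. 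One can place a convex chain of $\Theta(m)$ such points inside a single slab, so that $\UH(Q_i)$ has $\Theta(m)$ vertices while $Y_i = 1$; running Kirkpatrick--Seidel on $Q_i$ alone then costs $\Theta(m\log m)$, not $O(X_i\log(Y_i+1))$.

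The repair is to run the output-sensitive hull algorithm on $Q_i \cup \{e_i, e_{i+1}\}$ rather than on $Q_i$. Once the endpoints are present, any interior vertex $q$ of the computed hull is genuinely on $\UH(P)$: the edge of $\UH(P)$ spanning $q$ has both endpoints among $e_i, a_1,\dots,a_{Y_i}, e_{i+1}$, all of which lie in $Q_i \cup \{e_i,e_{i+1}\}$, so if $q$ were below it, $q$ could not be a hull vertex of the augmented set. Thus the output size is exactly $Y_i + 2$, the per-slab time is $O\bigl((X_i+2)\log(Y_i+2)\bigr)$, and summing still gives $O(n\log\log n + n) = O(n\log\log n)$. (Including the endpoints is also what makes the final concatenation of sub-hulls well defined.) The paper's own one-paragraph proof glosses over this entirely, so the proposal is on par with it; just replace the incorrect sentence with the augmented-set argument.
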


\begin{proof} 
By Claims~\ref{clm:step-123},~\ref{clm:step-4}, 
and~\ref{clm:step-56}, the  
first six steps take $O(n)$ time. Let 
the $\textbf{V}$-extremal points
be ordered $e_1, \dots, e_k$. Let 
$X_i$ be the number of points
in $\uss(e_i, e_{i+1})$ and $Y_i$ 
the number of extremal points
in this set. We use an output-sensitive
upper hull algorithm, so
the running time of Step~\ref{step-7} is 
$O(\sum_{i \leq k} X_i \log (Y_i+1))$.
By Lemma~\ref{lem:canonicalDir}, this
is $O(n \log\log n)$, as desired.
\end{proof}

\section{Proofs of 
  Lemma~\ref{lem:canonicalDir} and 
  Lemma~\ref{lem:ellTail}}
\label{sec:lproofs}

We begin with some preliminaries
about projective duality and 
a probabilistic claim
about geometric constructions 
over product distributions.
Consider an input $P$.
As is well known, there is a 
\emph{dual} set $P^*$ of lines
that helps us understand 
the properties of $P$. More 
precisely, we use the standard
duality along the unit paraboloid 
that maps a point $p = (x(p), y(p))$ 
to the line $p^* : y = 2x(p)x -y(p)$
and vice versa. The \emph{lower envelope} 
of $P^*$ is the
pointwise minimum of the $n$ 
lines $p_1^*, \dots, p_n^*$, 
considered as univariate functions. 
We denote it by $\lev_0(P^*)$.
There is a one-to-one correspondence 
between the vertices and edges of 
$\lev_0(P)$ and the edges and 
vertices of $\UH(P)$. 
More generally, for $z = 0, \ldots n$, 
the \emph{$z$-level} of $P^*$ is 
the closure of the set of all points 
that lie on lines of $P^*$ and that 
have exactly $z$ lines of $P^*$ strictly 
below them.  The $z$-level is 
an $x$-monotone polygonal curve, 
and we denote it by $\lev_z(P^*)$; 
see Fig.~\ref{fig:arrangement}.
Finally, the \emph{$(\lee z)$-level} 
of $P^*$, $\lev_{\lee z}(P^*)$, is the
set of all points on lines in $P^*$ that 
are on or below $\lev_z(P^*)$. 

Consider the following abstract 
procedure. Let $b$ be a constant,
and for any set $Q$ of $b$ lines, 
let $\reg(Q)$ be some geometric region
defined by the lines in $Q$.
That is, $\reg(\cdot)$ is a function
from sets of lines of size $b$ to 
geometric regions (i.e., subsets of
of $\R^2$). For example, $\reg(\cdot)$
may be a triangle or trapezoid formed 
by some lines in $Q$. For some such 
region $R$ and a line $\ell$, let 
$\chi(\ell,R)$ be a boolean
function, taking as input a line 
and a geometric region.

Suppose we take a random 
instance $Q^* \sim \cD$, and we 
apply some procedure to determine 
various subsets $Q_1, Q_2, \ldots$ 
of $b$ lines from $Q^*$, 
chosen based on 
the sums  
$\sum_{i=1}^n \chi(q^*_i,\reg(Q_j))$.
Now generate another random 
instance $P^* \sim \cD$. What
can be say about the values of 
$\sum_i \chi(p^*_i,\reg(Q_j))$? 
We expect them to resemble 
$\sum_i \chi(q^*_i,\reg(Q_j))$, 
but we have to deal with
subtle issues of dependencies. 
In the former case, $Q_j$ actually 
depends on $Q^*$, while in the 
latter case it does not. 
Nonetheless, we can apply 
concentration inequalities
to make statements about 
$\sum_i \chi(p^*_i,\reg(Q_j))$.
Let $J \subseteq [n]$ be a set 
of $b$ indices in $[n]$, and 
set $Q^*_J := \{q^*_j \mid j \in J\}$.
The following lemma can be 
seen as generalization of Lemma~3.2 in 
Ailon \etal~\cite{AilonCCLMS11}, with
a very similar proof.

\begin{lemma}\label{lem:rand-const} 
  Let $b > 0$ an integer, and 
  $f_l(n)$, $f_u(n)$ increasing 
  functions such that 
  $f_u(n) \geq f_l(n) \geq c'b\log n$,
  for a constant $c' > 0$ and large enough
  $n$. The 
  following holds with probability at 
  least $1 - n^{-4}$ over a random 
  $Q^* \sim \cD$.  For all index sets 
  $J$ of size $b$, if 
  $\sum_{i \leq n} \chi(q^*_i,\reg(Q^*_{J})) \in [f_l(n), f_u(n)]$, 
  then for some absolute constant $\alpha \in (0,1)$,
  \[
    \Pr_{P \sim \cD}\Bigl[\sum_{i \leq n} 
      \chi(p^*_i,\reg(Q^*_{J})) \in [\alpha f_l(n), f_u(n)/\alpha]\Bigr] 
      \geq 1 - n^{-3}.
\]
\end{lemma}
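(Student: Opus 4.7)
The plan is to fix an index set $J \subseteq [n]$ with $|J| = b$, condition on $Q^*_J = \{q^*_j : j \in J\}$, and then analyze the random variables in sight. Once $Q^*_J$ is fixed, so is the region $R := \reg(Q^*_J)$, and the remaining coordinates of $Q^*$ (and the whole fresh sample $P$) are still independently distributed. Writing $S(X, R) := \sum_{i=1}^n \chi(x^*_i, R)$ and $\mu_R := \EX_{P \sim \cD}[S(P,R)] = \sum_{i=1}^n \Pr_{p^*_i \sim \cD_i}[\chi(p^*_i, R) = 1]$, the strategy is to show that the hypothesis $S(Q^*, R) \in [f_l, f_u]$ forces $\mu_R$ into a slightly enlarged interval with high probability over $Q^*$, and then to invoke a second Chernoff bound for the fresh input $P$.

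For the first step, I decompose $S(Q^*, R) = D_J + T$, where $D_J := \sum_{i \in J} \chi(q^*_i, R) \in [0, b]$ is determined by $Q^*_J$ and $T := \sum_{i \notin J} \chi(q^*_i, R)$ is a sum of $n-b$ independent Bernoulli variables whose mean lies in $[\mu_R - b, \mu_R]$. A two-sided multiplicative Chernoff bound (\cite[Theorem~1.1]{DubhashiPa09}) then gives that whenever $\mu_R \leq f_l/4$ one has $\Pr[S(Q^*, R) \geq f_l] \leq 2 e^{-c_1 f_l}$, and whenever $\mu_R \geq 4 f_u$ one has $\Pr[S(Q^*, R) \leq f_u] \leq 2 e^{-c_1 f_u} \leq 2 e^{-c_1 f_l}$. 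Taking the contrapositive for each fixed $J$ yields
\begin{equation*}
  \Pr_{Q^*}\bigl[\, S(Q^*, R) \in [f_l, f_u] \text{ and } \mu_R \notin [f_l / 4,\, 4 f_u]\,\bigr] \leq 4\,e^{-c_1 f_l}.
\end{equation*}

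Next, for a fresh $P \sim \cD$ and any $R$ with $\mu_R \in [f_l/4,\, 4 f_u]$, another application of the multiplicative Chernoff bound gives $\Pr_{P}[S(P, R) \notin [\mu_R/2,\, 2 \mu_R]] \leq 2 e^{-c_2 \mu_R} \leq 2 e^{-c_2 f_l/4}$, which is at most $n^{-3}$ once $c'$ is taken large enough. Since $\mu_R \in [f_l/4,\, 4 f_u]$ implies $[\mu_R/2,\, 2\mu_R] \subseteq [f_l/8,\, 8 f_u]$, the implication's conclusion then holds with $\alpha := 1/8$.

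The remaining step is a union bound over the $\binom{n}{b} \leq n^b$ choices of $J$. The main obstacle, and the reason for the hypothesis $f_l \geq c' b \log n$, is that the failure probability $4 e^{-c_1 f_l}$ from the first step must survive this $n^b$-factor: we need $4 n^b e^{-c_1 c' b \log n} \leq n^{-4}$, which holds once $c'$ is chosen large enough in terms of $c_1$ and $b$. The overall structure mirrors Lemma~3.2 in~\cite{AilonCCLMS11}, but tracks both tails of the Chernoff bound because the enclosing interval $[f_l, f_u]$ may be wide rather than a single target value.
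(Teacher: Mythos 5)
Your proof is correct and follows essentially the same approach as the paper: conditioning on $Q^*_J$, separating the at-most-$b$ contribution from $J$, using a first Chernoff bound to show that the observed sum in $[f_l,f_u]$ forces the true mean into a constant-factor-enlarged interval with high probability, then union-bounding over the $O(n^b)$ choices of $J$ and applying a second Chernoff bound to the fresh sample $P$. The only cosmetic difference is that you phrase the first step as a contrapositive about $\mu_R$ rather than as a forward implication about $\EX[\hat Z]$ (the paper's Claim~\ref{clm:cher}); also note that the requirement on $c'$ need not depend on $b$, since $4n^{b(1-c_1c')}\leq n^{-4}$ already holds for any $b\geq 1$ once $c_1c'\geq 5$.
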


\begin{proof} 
Fix an index set $J \subseteq [n]$ of size $b$,
and a set $Q_J = \{q^*_j \mid j \in J\}$
of lines. By independence, the distributions
$\cD_i$, $i \notin J$, remain unchanged, 
and we generate a random 
$Q^*$ conditioned on $Q_J$ being fixed. 
(This means that we sample random 
lines $q^*_i \sim \cD_i$, for $i \notin J$.) 
We have 
$|\sum_{i \notin J} \chi(q^*_i,\reg(Q_J)) - 
\sum_{i} \chi(q^*_i,\reg(Q_J))| \leq b$, 
so if 
$\sum_{i} \chi(q^*_i,\reg(Q_J)) \in [f_l(n), f_u(n)]$, 
then 
$\sum_{i \notin J} \chi(q^*_i,\reg(Q_J)) \in [g_l(n), g_u(n)]$,
for $g_l(n) = f_l(n) - b$ 
and $g_u(n) = f_u(n) + b$.
What can we say about 
$\sum_{i \notin J} \chi(p^*_i,\reg(Q_J))$, 
for an independent $P^* \sim \cD$? 
Since $\reg(Q_J)$ is fixed,
$\chi(p^*_i,\reg(Q_J))$ and 
$\chi(q^*_i, \reg(Q_J))$ are 
identically distributed.

Define (independent) indicator variables 
$Z_{i} = \chi(q^*_i,\reg(Q_J))$, 
and let 
$\hat{Z} = \sum_{i \notin J} Z_{i}$ and 
$Z = \sum_{i} Z_{i}$. 
Given that one draw of 
$\hat{Z}$ is in $[g_l(n), g_u(n)]$, 
we want to give bounds on another draw.
This is basically a Bayesian problem, 
in that we effectively construct 
a prior over $\EX[\hat{Z}]$.
Two Chernoff bounds suffice 
for the argument. 

\begin{claim}\label{clm:cher} 
  Consider a single draw of 
  $\hat{Z}$ and suppose that 
  $\hat{Z} \in [g_l(n), g_u(n)]$,
  With probability at least 
  $1 - n^{-c'b/5}$, 
  $\EX[\hat{Z}] \in [g_l(n)/6, 2g_u(n)]$.
\end{claim}

\begin{proof} 
Apply Chernoff 
bounds~\cite[Theorem~1.1]{DubhashiPa09}:
if $\mu := \EX[\hat{Z}] < g_l(n)/6$,
then $2e\mu < g_l(n)$, so 
\[
  \Pr[\hat{Z} \geq g_l(n)] < 2^{-g_l(n)} < n^{-c'b/2},
\]
noting that 
$g_l(n) = f_l(n) - b > (c'b/2)\log_2 n$.
With probability at least 
$1 - n^{c'b/2}$, if $\hat{Z} \geq g_l(n)$, 
then $\EX[\hat{Z}] \geq g_1(n)/6$.
We repeat the argument with 
a lower tail Chernoff bound. 
If $\mu > 2g_u(n)$, 
\[
  \Pr[\hat{Z} \leq g_u(n)] \leq \Pr[\hat{Z} \leq (1 - 1/2)\mu] < 
  e^{-g_u(n)/4} < n^{-c'b/4}.
\]
With probability at least 
$1 - n^{c'b/4}$, if $\hat{Z} \leq g_u(n)$, 
then $\EX[\hat{Z}] \leq 2g_u(n)$.
Now take a union bound.
\end{proof}

In Claim~\ref{clm:cher},
we conditioned on a fixed $Q_J$, 
but the bound holds irrespective 
of $Q_J$, and hence is
holds unconditionally. 
Therefore, for a fixed $J$, 
with probability at least 
$1 - n^{-c'b/5}$ over $Q^* \sim \cD$, if 
$\hat{Z} \in [g_l(n), g_u(n)]$, then 
$\EX[\hat{Z}] \in [g_l(n)/6, 2g_u(n)]$.
Given that $|\hat{Z} - Z| \leq b$, this 
implies: 
if ${Z} \in [f_l(n), f_u(n)]$,  then
$\EX[Z] \in [f_l(n)/7, 3f_u(n)]$.

There are $O(n^b)$ choices for $J$, 
so by a union bound 
the above holds for all $J$ simultaneously
with probability at least $1 - n^{-c'b/6}$. 
Suppose we choose a $Q$ with this property,
and consider drawing $P \sim \cD$. 
This is effectively an independent draw of $Z$,
so applying Chernoff bounds again, 
for sufficiently small constants 
$\alpha, \beta$,
\[
  \Pr\Bigl[\sum_{i \leq n} \chi(p^*_i,\reg(Q^*_{J})) 
    \in [\alpha f_l(n), f_u(n)/\alpha]\Bigr] > 
    1 - \exp(-\beta f_l(n)) > 1 - n^{-3}.
\]
\end{proof}

\subsection{Proof of Lemma~\ref{lem:canonicalDir}} 
\label{sec:canonicalDir}

We sample a random input $Q^* \sim \cD$
and take the $(\log^4 n)$-level 
of $Q^*$. Let $H'$
the upper hull of its vertices; 
see Fig.~\ref{fig:arrangement}.
\begin{figure}
  \centering
  \includegraphics{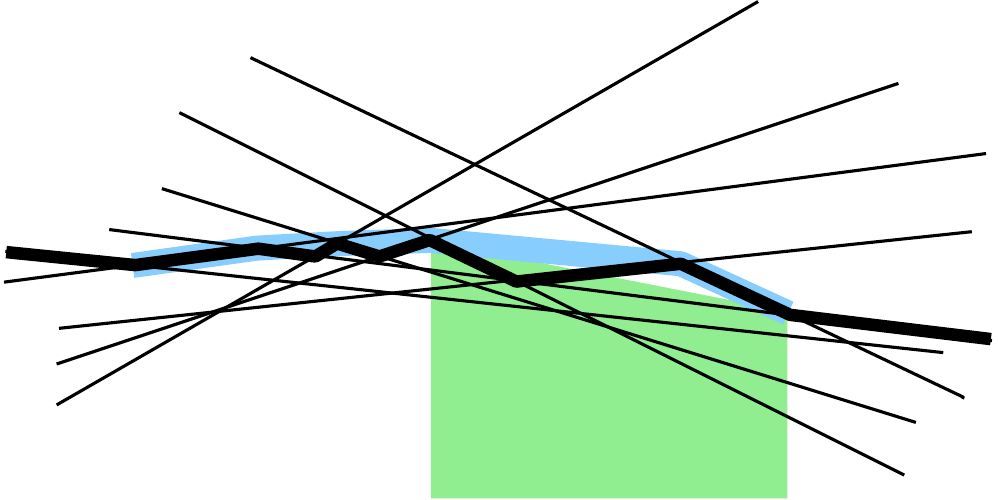}
  \caption{The arrangement of $Q^*$: the 
    dark black line is $\lev_4(Q^*)$. 
    The thick lighter line
    is $H'$, the upper hull of the vertices 
    in the level. The shaded region 
    is a possible trapezoid $\tau_j$.}
  \label{fig:arrangement}
\end{figure}
\begin{claim}\label{clm:H_properties}
  The hull $H'$ has the following properties:
  \begin{asparaenum}
    \item 
      the curve $H'$ lies below $\lev_{2\log^4 n}(Q^*)$;
     \item 
       each line of $Q^*$ either supports 
       an edge of $H'$ or intersects 
       it at most twice; and
     \item 
       $H'$ has $O(n)$ vertices.
  \end{asparaenum}
\end{claim}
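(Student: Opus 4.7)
My plan is to derive all three properties from the single fact that $H'$ is a concave polygonal curve (being an upper hull) whose vertices lie on pairs of lines from $Q^*$. Properties~(ii) and~(iii) will follow easily from convexity; property~(i) needs a slightly more careful convex-combination argument.

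For~(i), I will show that every point $p \in H'$ has at most $2\log^4 n$ lines of $Q^*$ strictly below it, which places $p$ on or below $\lev_{2\log^4 n}(Q^*)$. If $p$ is a vertex of $H'$, then $p$ is a vertex of $\lev_{\log^4 n}(Q^*)$, so exactly $\log^4 n$ lines lie strictly below it. Otherwise $p$ lies in the interior of some edge $\overline{vw}$ of $H'$, where $v$ and $w$ are both vertices of $\lev_{\log^4 n}(Q^*)$. Write $L_q$ for the set of lines of $Q^*$ strictly below a point $q$. The key observation is the inclusion $L_p \subseteq L_v \cup L_w$: if a line $\ell \in Q^*$ has both $v$ and $w$ lying on or below it, then the closed halfplane of points on or below $\ell$ is convex, contains $\{v, w\}$, and hence contains all of $\overline{vw}$, including $p$; so $\ell$ lies on or above $p$ and $\ell \notin L_p$. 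Therefore $|L_p| \leq |L_v| + |L_w| = 2\log^4 n$, as required.

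Property~(ii) is immediate from convexity: a line meets the boundary of a convex region in at most two points unless it supports a boundary edge, and $H'$ is the upper arc of the boundary of such a region. For~(iii), I will use a double-counting argument. Each vertex of $H'$ is a vertex of $\lev_{\log^4 n}(Q^*)$ and thus lies on exactly two lines of $Q^*$. Conversely, by~(ii), each line of $Q^*$ contains at most two vertices of $H'$, since three or more collinear points on a concave curve would collapse into the interior of a single supported edge. Counting incident pairs (vertex of $H'$, line of $Q^*$ through it) in two ways yields $2\,|V(H')| \leq 2n$, so $|V(H')| \leq n$ and the overall complexity of $H'$ is $O(n)$.

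The only step that requires any real thought is the inclusion $L_p \subseteq L_v \cup L_w$ in~(i), and even that reduces to a one-line appeal to halfplane convexity. I therefore do not anticipate any genuine obstacle: the claim essentially repackages convexity together with the defining property of the $\log^4 n$-level.
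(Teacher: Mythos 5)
Your proof is correct and takes essentially the same approach as the paper: for (i) the convex-combination argument that lines below an interior point of an edge must lie below one of its endpoints, for (ii) a direct appeal to convexity, and for (iii) counting vertex--line incidences. The paper's write-up is terser (and gets the weaker bound $2n$ where your double-count gives $n$), but the ideas coincide.
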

\begin{proof} 
Let $p$ be any point 
on $H'$, and let $a, b$
be the closest vertices of $H'$ with
$x(a) \leq x(p) \leq x(b)$.
Any line in $Q^*$ below
$p$ must also be below $a$ or $b$. 
There are exactly $\log^4n$ lines
under $a$ and under $b$, as they lie
on the $(\log^4n)$-level. Hence,
there are at most $2\log^4n$ lines 
below $p$.  The second property 
is a direct consequence of convexity.
The third property follows from the 
second: every vertex of $H'$ lies 
on some line of $Q^*$, and hence
there can be at most $2n$ vertices.
\end{proof}

Let $r_0, \dots, r_k$ be the 
points given by every $\log^2 n$-th 
point in which a line of $Q^*$ meets 
$H'$ (either as an intersection point 
or as the endpoint of a segment), 
ordered from right to left.
By Claim~\ref{clm:H_properties}(2), there 
are $k = O(n/\log^2 n)$ points $r_i$. 
Let $H$ be their upper upper hull.
Clearly, $H$ lies below $H'$.
Draw a vertical downward ray through each vertex
$r_i$. This subdivides the region below 
$H$ into semi-unbounded trapezoids
$\tau_0, \tau_1, \dots$ with
the following properties:
(i) each vertical boundary ray 
  of a trapezoid $\tau_j$ is intersected
  by at least $\log^4 n$ and at most
  $2\log^4 n$ lines of $Q^*$ 
  (Claim~\ref{clm:H_properties}(1)); and 
(ii) the upper boundary
  segment of each $\tau_j$ is intersected 
  by at most $\log^2 n$ lines in 
  $Q^*$ (by construction); see 
  Fig.~\ref{fig:arrangement}.
The next claim follows from an 
application of Lemma~\ref{lem:rand-const}.

\begin{claim}\label{clm:taui}
  With probability at least 
  $1-n^{-4}$ \textup(over 
  $Q$\textup), the 
  following holds for all 
  trapezoids $\tau_j$: 
  generate an
  independent $P^* \sim \cD$.
  \begin{asparaenum}
    \item With probability 
    \textup(over $P^*$\textup) 
    at least $1 - n^{-3}$, 
    there exists a line in $P^*$ 
    that intersects both 
    boundary rays of $\tau_j$; and
    \item with probability 
    \textup(over $P^*$\textup) at 
    least $1 - n^{-3}$, at 
    most $\log^5 n$ lines 
    of $P^*$ intersect $\tau_j$.
\end{asparaenum}
\end{claim}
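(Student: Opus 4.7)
The plan is to apply Lemma~\ref{lem:rand-const} twice, once for each part of the claim, with two different choices of the indicator function $\chi$. The key setup observation is that each trapezoid $\tau_j$ is a geometric region determined by a constant number of lines of $Q^*$: each endpoint $r_i$ is either a vertex of $\lev_{\log^4 n}(Q^*)$ (two defining lines) or the crossing of a line of $Q^*$ with an edge of $H'$ (at most three defining lines), so $\tau_j$ is an instance of $\reg(Q^*_J)$ for some index set $J$ of constant size $b$. Since the ``for all $J$'' quantifier in Lemma~\ref{lem:rand-const} already ranges over every such index set, it automatically covers ``for all trapezoids $\tau_j$'', so no additional union bound over the $k = O(n/\log^2 n)$ trapezoids is needed at that step.

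For part~(2), I would set $\chi(\ell, \tau) := 1$ iff $\ell$ meets the interior of $\tau$. Any such line must cross two of the three boundary pieces of $\tau_j$ (the top segment and the two downward vertical rays), so by properties~(ii) and~(i) of the construction,
\[
  \sum_i \chi(q^*_i, \tau_j) \le \log^2 n + 2\cdot 2\log^4 n \le 5\log^4 n.
\]
Taking $f_u(n) := 5\log^4 n$ and $f_l(n) := c' b \log n$, Lemma~\ref{lem:rand-const} yields $\sum_i \chi(p^*_i, \tau_j) \le f_u(n)/\alpha = O(\log^4 n) \le \log^5 n$ for large $n$, with $P^*$-probability at least $1 - n^{-3}$, which is exactly part~(2).

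For part~(1), let $\chi'(\ell, \tau) := 1$ iff $\ell$ passes strictly below both $r_j$ and $r_{j+1}$, equivalently $\ell$ crosses both downward vertical rays. Property~(i) gives at least $\log^4 n$ lines of $Q^*$ below $r_j$; of those, property~(ii) removes at most $\log^2 n$ that cross the top segment, leaving $\ge \log^4 n - \log^2 n$ lines that stay below $H$ on the entire interval $[x(r_{j+1}), x(r_j)]$ and hence also below $r_{j+1}$. Thus $\sum_i \chi'(q^*_i, \tau_j) \ge \log^4 n - \log^2 n$. Applying Lemma~\ref{lem:rand-const} with $f_l(n) := \log^4 n - \log^2 n$ and a trivial $f_u(n) := n$ gives $\sum_i \chi'(p^*_i, \tau_j) \ge \alpha f_l(n) \ge 1$ with $P^*$-probability at least $1 - n^{-3}$, producing a line in $P^*$ crossing both boundary rays.

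The main technical hurdle will be the bookkeeping needed to exhibit $\tau_j$ cleanly as $\reg(Q^*_J)$ for a constant-size $J$; once this is in place, both parts reduce to direct applications of Lemma~\ref{lem:rand-const} with the two indicator functions above and the appropriate $f_l, f_u$. The thresholds $\log^5 n$ in part~(2) and ``$\ge 1$'' in part~(1) are slack enough to absorb the factor $1/\alpha$ and the lower-order $\log^2 n$ correction without further work.
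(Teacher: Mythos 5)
Your proof is correct and takes essentially the same route as the paper: both parts are handled by two applications of Lemma~\ref{lem:rand-const} with the same two indicator functions (``crosses both boundary rays'' for part~(1), ``meets $\tau_j$'' for part~(2)), with the sums over $Q^*$ bounded via exactly the same properties of the construction. The only difference is cosmetic --- the paper simply fixes $b=4$ by treating each $r_i$ as the intersection of two lines of $Q^*$, whereas you flag the defining-line bookkeeping as a hurdle; this only changes the constant $b$ and the argument is unaffected.
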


\begin{proof} 
We apply Lemma~\ref{lem:rand-const} 
for both parts.
For a set $L = \{\ell_1, \ell_2, \ell_3, \ell_4\}$ 
of four lines,
define $\reg(L)$ as the downward 
unbounded vertical trapezoid formed by 
the segment between the intersection
points of $\ell_1, \ell_2$ and 
$\ell_3, \ell_4$. 
All trapezoids $\tau_j$ 
are of this form, with $L$
a set of four lines from $Q^*$.
Set $\chi(\ell, \tau)$ (for line $\ell$ 
and trapezoid $\tau$)
to $1$ if $\ell$ intersects both 
parallel sides of $\tau$, and $0$ otherwise.

Since $\tau_j$ is an unbounded
trapezoid, a line that 
intersects it either 
intersects the upper segment
or intersects both boundary rays. 
In our sample $Q$, the number 
of lines with the former
property is at most $\log^2n$
and the number of lines with 
the latter property is in 
$[\log^4n,4\log^4n]$.
Hence, the sum 
$\sum_{i=1}^n \chi(q^*_i, \tau_j)$ is at least 
$\log^4 n - \log^2 n \geq (1/2)\log^4 n$ 
and at most $5\log^4n$.
By Lemma~\ref{lem:rand-const}, 
the number of lines in $P^*$ intersecting
both vertical lines is 
$\Omega(\log^4n)$ with probability 
at least 
$1 - n^{-3}$.

For the second part, we 
set $\chi(\ell, \tau) = 1$ 
if $\ell$ 
intersects $\tau$, and $0$ otherwise. 
Any line that intersects $\tau_j$
must intersect one of the 
vertical boundaries,
so 
$\sum_{i=1}^n \chi(q^*_i, \tau_j) \in [\log^4n, 4\log^4 n]$.
By Lemma~\ref{lem:rand-const}, 
the number of lines in $P^*$ intersecting
$\tau_j$ is $O(\log^4n)$ 
with probability at least $1 - n^{-3}$.
\end{proof}

Each point $r_i$ is dual 
to a line $r_i^*$. We define the
directions in $\textbf{V}$ by 
taking upward unit normals to 
the lines $r_j^*$.  (Since the $r_i$'s 
are ordered from right to left, this 
gives $\textbf{V}$ in clockwise order.)
These directions can 
be found in $O(n \poly\log(n))$
time: we can compute $\lev_{\log^4 n}(P^*)$ 
and its upper hull $H'$ in $O(n \poly\log n)$ 
time~\cite{ColeShYa87,Dey98}.
To determine the points $r_j$, 
we perform $O(n)$ binary searches over
$H'$, and then sort the intersection points. 
When $r_j$ is
known, $v_j$ can be found 
in constant time.

Now consider a random $P$. 
The $\textbf{V}$-extremal 
vertices $e_i$ and $e_{i+1}$ 
correspond to the lowest line in 
$P^*$ that intersects the left and the
right boundary ray of $\tau_{i}$. 
The number of extremal points between
$e_i$ and $e_{i+1}$ is the number of edges 
on the lower envelope
of $P^*$ between $x(r_{i})$ and 
$x(r_{i+1})$. By Claim~\ref{clm:taui}(1),
this lower envelope lies entirely 
inside $\tau_{i}$ with probability
at least $1-n^{-3}$. 
By Claim~\ref{clm:taui}(2) 
(and a union bound), the number $Y_i$ of
extremal points between $e_i$ and
$e_{i+1}$ is at most $\log^5 n$ 
with probability at least $1-2n^{-3}$.
Thus, 
\begin{align*}
  &\EX[X_i \log (Y_i+1)]\\
  &\leq 
    \EX[X_i \log(\log^5n + 1) \mid Y_i \leq \log^5 n]
    \Pr[Y_i \leq \log^5n] + 
    \EX[X_i \log (Y_i+1) \mid Y_i > \log^5n]
    \Pr[Y_i > \log^5n]\\
  &\leq \EX[X_i  \mid Y_i \leq \log^5n]
    \Pr[Y_i \leq \log^5n]O(\log\log n) + 
    O(n^2) (1/2n^3)\\
    &\leq \EX[X_i] O(\log\log n) + O(1).
\end{align*}
Adding over $i$, 
\begin{multline*}
  \sum_{i=1}^k \EX[X_i \log (Y_i+1)]
  \leq 
  \sum_{i=1}^k \EX[X_i]O(\log\log n) + O(1) \\
  = \EX\Bigl[ \sum_{i=1}^k X_i \Bigr] O(\log\log n) + O(n)
  = O(n\log\log n).
\end{multline*}

\subsection{Proof of Lemma~\ref{lem:ellTail}}\label{sec:ellTail}

To compute the canonical lines 
$\ell_j$ for the directions
$v_j \in \textbf{V}$, we 
consider again the dual 
sample $Q^*$.
Let $s_j$ be the point on
$\lev_{\gamma c\log n}(Q)$ with 
the same $x$-coordinate as 
$r_{j}$, where $\gamma > 0$ 
is a sufficiently
small constant.
Set $\ell_j = s_j^*$.
Then 
$\ell_j$ is normal to $v_j$,
and the construction takes  
$O(n\poly\log n)$ time. 
We restate the
main technical part 
of Lemma~\ref{lem:ellTail}.

\begin{lemma}
  With probability at least $1-n^{-4}$ 
  over the construction, 
  for every $\ell_j$, 
  \[
    \Pr_{P \sim \cD}[|\ell_j^+ \cap P| \in [1, c\log n]] \geq 1 - n^{-3}.
  \]
\end{lemma}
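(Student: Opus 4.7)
The plan is to reformulate the event in dual space and then invoke Lemma~\ref{lem:rand-const}. By the order-reversing property of the paraboloid duality, a point $p \in P$ lies in $\ell_j^+$ iff the dual line $p^*$ passes strictly below the point $s_j := \ell_j^*$ in the dual plane. Hence $|\ell_j^+ \cap P|$ equals the number of lines of $P^*$ strictly below $s_j$. By construction $s_j \in \lev_{\gamma c\log n}(Q^*)$, so the analogous count for $Q^*$ is exactly $\gamma c\log n$.

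The main step is to express $s_j$ as $\reg(Q^*_{J_j})$ for a fixed function $\reg$ on $b$-subsets of lines and some constant-size index set $J_j \subseteq [n]$, as required to apply Lemma~\ref{lem:rand-const}. This is the one nonobvious point, since $s_j$ is originally defined through the global sample $Q^*$ rather than a constant subset of it. However, exactly one line of $Q^*$ supports $\lev_{\gamma c\log n}(Q^*)$ at $x = x(r_j)$ and contains $s_j$; include this line in $J_j$. The value $x(r_j) = x(s_j)$ is itself determined by $O(1)$ lines of $Q^*$: either $r_j$ is a vertex of $H'$, hence a vertex of $\lev_{\log^4 n}(Q^*)$ and the intersection of $2$ lines of $Q^*$; or $r_j$ is an intersection of a line of $Q^*$ with an edge of $H'$, and the latter edge joins two vertices of $\lev_{\log^4 n}(Q^*)$ each defined by $2$ lines of $Q^*$, for $5$ lines in all. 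Adding the level-supporting line, $|J_j| \le 6$ lines suffice, and we can give an explicit geometric recipe $\reg$ on $6$-tuples so that $\reg(Q^*_{J_j}) = s_j$.

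Now set $\chi(\ell, s) = 1$ iff the line $\ell$ passes strictly below the point $s$, and choose $f_l(n) = f_u(n) = \gamma c\log n$, which satisfies $f_l \ge c' b \log n$ once $c$ is taken large enough. Since $s_j \in \lev_{\gamma c\log n}(Q^*)$, we have $\sum_i \chi(q^*_i, \reg(Q^*_{J_j})) = \gamma c\log n$. Lemma~\ref{lem:rand-const} applied with this $\reg$, $\chi$, and $b=6$ then yields, with probability at least $1 - n^{-4}$ over the construction of $Q^*$, that for every $j$,
\[
  \Pr_{P \sim \cD}\bigl[\,|\ell_j^+ \cap P| \in [\alpha \gamma c\log n,\; \gamma c\log n / \alpha]\,\bigr] \;\ge\; 1 - n^{-3}.
\]

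To match $[1, c\log n]$, choose $\gamma = \alpha$ so that the upper endpoint is exactly $c\log n$, and take $c$ at least a constant multiple of $1/\alpha^2$ so that the lower endpoint $\alpha\gamma c\log n = \alpha^2 c \log n$ is at least $1$ for all sufficiently large $n$. The main obstacle is the encoding of $s_j$ as $\reg(Q^*_{J_j})$ described above; once that is in place, the bound is an immediate application of Lemma~\ref{lem:rand-const}.
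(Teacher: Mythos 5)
Your proof follows the same route as the paper: dualize, observe that $|\ell_j^+ \cap P|$ is the number of lines of $P^*$ crossing the downward vertical ray from $s_j$, and invoke Lemma~\ref{lem:rand-const}. The one place you diverge is in accounting for how $s_j$ is determined by the sample: the paper asserts ``Every $s_j$ is formed by the intersection of two lines from $Q^*$'' and applies the lemma with $b=2$, but $s_j$ is generically an interior point of an edge of $\lev_{\gamma c\log n}(Q^*)$ whose $x$-coordinate $x(r_j)$ is itself determined by up to five lines (the crossing line plus the two level vertices bounding the relevant edge of $H'$), plus one more line supporting the lower level, so that at least three and up to six lines are needed --- exactly the point you make. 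Since $b$ may be any constant and $6!$ orderings of the tuple are also only a constant, this does not change the conclusion, but your bookkeeping is the careful one and supplies a detail the paper glosses over.
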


\begin{proof} 
A point $p$ lies in $\ell_j^+$ if
and only if $p^*$
intersects the downward 
vertical ray $R_j$ from $s_j$.
We set up an application of 
Lemma~\ref{lem:rand-const}. 
For a pair of lines $\ell_1, \ell_2$ 
(all in dual space), define 
$\reg(\ell_1,\ell_2)$ as 
the downward vertical ray
from $\ell_1 \cap \ell_2$. 
Every $s_j$ 
is formed by the intersection 
of two lines from $Q^*$. For 
such a region $R_j$
and line $\ell'$, set 
$\chi(\ell',R)$ to be $1$ if 
$\ell'$ intersects $R_j$
and $0$ otherwise.
By construction, 
$\sum_i \chi(q^*_i, R_j) = \gamma c\log n$.
We apply Lemma~\ref{lem:rand-const}. 
With probability at least 
$1-n^{-4}$ over $Q^*$ (for sufficiently 
large $c$ and
small enough $\gamma$),
$\Pr_{P \sim \cD}[\sum_i \chi(p^*_i,R_j) \in [1, c\log n]] \geq 1 - n^{-3}$.
\end{proof}

\section*{Acknowledgements} \label{sec:ack}

C. Seshadhri was supported 
by the Early Career LDRD 
program at Sandia National
Laboratories. Sandia National 
Laboratories is a multi-program laboratory
managed and operated by Sandia 
Corporation, a wholly owned subsidiary of
Lockheed Martin Corporation, for 
the U.S. Department of Energy's National
Nuclear Security Administration 
under contract DE-AC04-94AL85000.
W. Mulzer was supported in 
part by DFG grant MU/3501/1.

\bibliographystyle{abbrv}
\bibliography{si-ch}
\end{document}